\documentclass{article}
\usepackage{ijcai26}

\usepackage{times}
\usepackage{soul}
\usepackage{url}
\usepackage[hidelinks]{hyperref}
\usepackage[utf8]{inputenc}
\usepackage[small]{caption}
\usepackage{graphicx}
\usepackage{amsmath}
\usepackage{amsthm}
\usepackage{booktabs}
\usepackage{algorithm}
\usepackage{algorithmic}
\usepackage{amsfonts,amssymb}
\usepackage{tcolorbox}
\tcbset{size=small,top=1mm,bottom=1mm,middle=1mm}
\usepackage[switch]{lineno}
\usepackage[T1]{fontenc}

\newtheorem{theorem}{Theorem}[section]

\newtheorem{proposition}[theorem]{Proposition}
\newtheorem{lemma}[theorem]{Lemma}

\newtheorem{claim}{Claim}

\theoremstyle{definition}
\newtheorem{definition}[theorem]{Definition}

\newtheorem{remark}[theorem]{Remark}

\newcommand{\util}{\textsc{Util}}

\newcommand{\sat}{\mathrm{sat}}
\newcommand{\JR}{\text{JR}}
\newcommand{\PJR}{\text{PJR}}
\newcommand{\EJR}{\text{EJR}}
\newcommand{\EJRplus}{\text{EJR+}}

\title{The Price of Proportional Representation in Temporal Voting}
\author{
    Nicholas Teh
    \affiliations
    University of Oxford
    \emails
    nicholas.teh@cs.ox.ac.uk
}

\begin{document}

\maketitle

\begin{abstract}
    We study proportional representation in the temporal voting model, where collective decisions are made repeatedly over time over a fixed horizon. Prior work has extensively investigated how proportional representation axioms from multiwinner voting (e.g., justified representation (JR) and its variants) can be adapted, satisfied, and verified in this setting. However, much less is understood about their interaction with social welfare. In this work, we quantify the efficiency cost of enforcing  proportionality. We formalize the welfare-proportionality tension via the worst-case ratio between the maximum achievable utilitarian welfare and the maximum welfare attainable subject to a proportionality axiom. We show that imposing proportional representation in the temporal setting can incur a growing, yet sublinear, welfare loss as the number of voters or rounds increases. We further identify a clean separation among axioms: for JR, the welfare loss diminishes as the time horizon grows and vanishes asymptotically, whereas for stronger axioms this conflict persists even with many rounds. Moreover, we prove that welfare maximization under each axiom is NP-complete and APX-hard, even under static preferences and bounded-degree approvals, and provide fixed-parameter algorithms under several natural structural parameters.
\end{abstract}

\section{Introduction}
Many collective decisions are inherently multi-stage: rather than selecting a single outcome once, an institution commits to a sequence of choices over a horizon.
In modern ML-driven systems, this occurs whenever a platform allocates scarce \emph{exposure} repeatedly: for instance, selecting a featured creator each day, a highlighted job posting each week, or a promoted product category each campaign cycle.
A large body of work in information retrieval and recommender systems studies exactly this kind of repeated exposure allocation, where fairness is evaluated \emph{across many outputs} (often called \emph{amortized} or \emph{long-run} fairness) and must be balanced against utility or engagement \cite{biega_equity_2018,singh_fairness_2018,pitoura_vldb_2022}.
These settings also need not be online in the algorithmic sense: platforms commonly plan or optimize slates and exposure schedules in batch using historical logs, forecasts, or business constraints.

We study such problems through the lens of the \emph{temporal voting} model, a framework for long-term collective decision-making \cite{bulteau2021jrperpetual,chandak2024proportional,elkind2025verifying,phillips2025strengthening}.
An instance consists of a finite horizon of $\ell$ rounds together with an approval profile for each round.\footnote{
We study the \emph{offline} planning problem: the entire $\ell$-round instance (all round profiles) is given as input, and the rule outputs the full length-$\ell$ sequence at once.}
In each round the rule selects one alternative, and alternatives may be selected multiple times.
This repeated nature of selection creates a fundamental tension:
selecting, in each round, an alternative that has maximum support will guarantee high utilitarian welfare.
However, repeatedly using such a greedy rule can systematically exclude minorities; for instance, even a sizeable segment of the population may see none of its approved alternatives selected over the entire horizon.
Accordingly, the natural fairness question is whether we can achieve some kind of representation \emph{across time}.

The concept of \emph{proportional representation} is particularly relevant here.
Recent work adapts classic proportionality axioms from approval-based multiwinner voting---in particular, justified representation (JR), proportional justified representation (PJR), and extended justified representation (EJR)---to the temporal setting \cite{bulteau2021jrperpetual,chandak2024proportional}.
These axioms ensure that any sufficiently large and cohesive group obtains its ``fair share'' of representation when representation is evaluated over the whole horizon.
More recently, strengthened variants such as EJR+ were proposed to further tighten proportionality guarantees over time \cite{phillips2025strengthening}.

However, while these axioms provide compelling group-representation guarantees, they may force the outcome to deviate from the welfare-maximizing optimum.
Importantly, this tension arises even under static preferences and offline optimization, because proportionality imposes global, group-based constraints that couple the choices across rounds.

This raises two basic questions that are central both in theory and practice:
(1) How costly is enforcing proportional representation in temporal voting?
(2) Can we compute a high-welfare outcome that satisfies JR/PJR/EJR/EJR+?

To address the first question, we adopt the ``price of fairness'' perspective and study the \emph{price of proportional representation}: the worst-case ratio between the maximum utilitarian welfare achievable without constraints and the maximum utilitarian welfare achievable subject to a proportionality axiom.
This viewpoint provides a crisp, quantitative answer to the welfare-fairness tension: it identifies when proportionality is essentially ``free'' and when it provably imposes a substantial welfare penalty.

To address the second question, we study the computational complexity of optimizing welfare subject to these proportionality axioms.
While proportionality constraints are conceptually appealing, they implicitly encode global combinatorial structure (i.e., which cohesive groups must be represented and how their representation must be distributed over time) and it is not clear a priori whether this structure admits efficient optimization, even under restrictive preference assumptions.

\subsection{Our Contributions}
We study four proportionality axioms $\{\JR,\PJR,\EJR,\EJRplus\}$ in the temporal voting setting, and analyze their compatibility with utilitarian social welfare.

In Section~\ref{sec:price}, we \emph{quantify} the welfare-proportionality tension via a \emph{price of proportional representation} framework. 
We establish a general lower bound showing that enforcing any of $\JR/\PJR/\EJR/\EJRplus$ can incur a growing (yet sublinear) loss of $\Omega(\sqrt{\ell})$ in the worst case, where $\ell$ is the number of rounds. 
We also prove a universal upper bound of $n$ (the number of agents) on the price. 
For $\JR$, we go further and obtain a tight, horizon-sensitive guarantee: the price decreases with~$\ell$ and becomes asymptotically negligible when $\ell\gg n$ (approaching $1$). 
In contrast, for the stronger axioms $\PJR/\EJR/\EJRplus$, the price can remain $\Omega(\sqrt{n})$ even when $\ell=\Theta(n)$, giving us a clear separation between JR and its stronger counterparts.

In Section~\ref{sec:computational}, we show that maximizing utilitarian welfare subject to any of $\JR/\PJR/\EJR/\EJRplus$ is NP-complete, and that the corresponding optimization problem is APX-hard, even under static preferences and constant approval-degree bounds. This demonstrates that the computational barrier is intrinsic to the group-based proportionality constraints rather than being dependent on time-varying preferences.

Finally, in Section~\ref{sec:parameterized}, we complement the hardness results by identifying practically meaningful regimes where optimal solutions can be computed efficiently. 
Under static preferences, we obtain fixed-parameter tractability with respect to the number of candidates~$m$. 
More generally, we develop algorithms that exploit temporal structure, including FPT results parameterized by the number of voter types (and related temporal-compression parameters), showing that repeated preference patterns across rounds enable tractable welfare maximization under temporal proportionality.

Overall, our results provide the first systematic study of the efficiency cost and computational compatibility of temporal proportional representation axioms with social welfare.

\subsection{Related Work}
We briefly survey the most relevant strands of the literature and emphasize works that are directly related to our work.

\paragraph{Temporal voting.}
Proportional representation in temporal voting was first studied by Bulteau et al.~\shortcite{bulteau2021jrperpetual}.
They adapted two proportional representation axioms (JR and PJR) from multiwinner voting to the temporal setting (with analyses for both static and changing preferences).
Chandak et al.~\shortcite{chandak2024proportional} subsequently extended this analysis to the stronger EJR, and showed that EJR exists in the temporal setting.
Elkind et al.~\shortcite{elkind2025verifying} built on this and study the complexity of checking whether a given outcome satisfies JR/PJR/EJR.
Recently, Phillips et al.~\shortcite{phillips2025strengthening} examined various ways of strengthening these concepts, and showed that EJR+ and FJR (both strengthenings of EJR) are always satisfiable. We refer the reader to the survey by Elkind et al.~\shortcite{elkind2024temporalsurvey} for an overview of this area.

In an adjacent line of work, Elkind et al.~\shortcite{elkind2024temporalelections} study temporal elections with a focus on welfare and incentive-compatibility.  
They investigate the computational complexity of welfare maximization and its compatibility with strategyproofness.
They also adapt an individual proportionality notion (primarily studied in fair division and public decision-making), and study its compatibility and tradeoffs with utilitarian and egalitarian welfare objectives. This distinction is important for our results. The individual proportionality
notion studied by Elkind et al.~\shortcite{elkind2024temporalelections} is vacuous in some of our hardness constructions.
Subsequently, Elkind et al.~\shortcite{elkind2025temporalchores} consider analogous welfare optimization questions for negatively valued candidates, where proportionality notions are not well-defined.
Our work differs in that we focus on \emph{group-based} proportional representation axioms, which capture collective fairness guarantees that are fundamentally different from individual proportionality (where relevant, we will point out the connection and highlight the differences).

Another similar (but fundamentally distinct) body of work
studies the perpetual voting model \cite{lackner2020perpetual,lackner2023proportionalPV}. 
This model differs from temporal voting in that decisions explicitly depend on the history of past rounds, giving it a more inherently online flavor. Other temporal voting models examine different kinds of intertemporal structure: for example, Zech et al.~\shortcite{zech2024multiwinnerchange} study multi-stage multiwinner elections in which each new committee should remain close to the preceding one. These works are complementary to ours: rather than controlling change between outcomes or designing an online rule, we ask how imposing group-based proportionality
constraints over a fixed horizon affects utilitarian welfare and
computational tractability.

\paragraph{Proportionality in multiwinner voting.}
The proportionality axioms we study are inspired by the well-developed literature on proportional representation in approval-based multiwinner voting. 
In that setting, justified representation (JR) and extended justified representation (EJR) were introduced by Aziz et al.~\shortcite{aziz2017jr}, and proportional justified representation (PJR) was later proposed as an intermediate axiom between JR and EJR by Sanchez-Fernandez et al.~\shortcite{sf2017pjr}. 
Subsequently, Brill and Peters~\shortcite{brill2023robust} introduced EJR+, a strengthening of EJR that preserves desirable practical properties such as polynomial-time verifiability and satisfaction by polynomial-time computable rules.

Temporal variants of these axioms adapt the same group-protection intuition, but the introduction of rounds fundamentally changes both the structure of the constraints (e.g., how representation is required to accrue over time) and the way proportionality interacts with efficiency. 
Moreover, unlike in multiwinner voting, the same alternative may be selected repeatedly across rounds, which further alters the behavior of proportionality axioms and the design space of rules.
Our work advances this line of research by quantifying the welfare-proportionality trade-off for temporal variants of the JR family, and by showing that the length of the time horizon affects this trade-off in qualitatively different ways depending on the strength of the axiom.
Equivalently, even under static preferences, a temporal outcome is a multiset
of $\ell$ candidate occurrences rather than a committee of $\ell$ distinct
or interchangeable winners: multiplicities matter because selecting the same
candidate repeatedly consumes rounds and repeatedly satisfies the same voters.

Proportionality has also been studied in the context of \emph{public decision-making} (simultaneous aggregation over multiple issues), both without constraints \cite{skowron2022proppublic} and with feasibility constraints \cite{chingoma2025proportionalityconstrained} through adaptations of proportional representation axioms from multiwinner voting, as well as in even more general models of feasibility constraints \cite{masarik2024generalizedtheory}.
A related multi-issue perspective is taken by Alouf-Heffetz et al.~\shortcite{alouf2022better}, who study how an external intervention that reduces agents' uncertainty can improve collective decisions under issue-by-issue majority voting. This line of work focuses on uncertainty in public decision-making, highlighting how informational and structural constraints can alter the welfare properties of collective outcomes.

\paragraph{Welfare-proportionality tradeoffs.}
A complementary line of work studies proportionality constraints through a quantitative lens, often phrased as a ``price of fairness'' or ``price of proportionality'' measure.
In the context of multiwinner voting, Elkind et al.~\shortcite{elkind2024price} initiate a systematic study of the welfare and coverage losses incurred by imposing JR and EJR constraints.
Related quantitative analyses adopt approximation-style guarantees to compare voting rules against welfare- or representation-oriented benchmarks \cite{LacknerSkowron2020}.
Similar welfare–representation tradeoffs have also been examined in participatory budgeting \cite{fairstein2022welfarerepPB}, and in designing multiwinner rules that simultaneously provide utilitarian and representation guarantees \cite{BrillPeters2024priceable}.

Our work extends this quantitative perspective to the temporal voting setting by defining and bounding the utilitarian price of proportionality.

\section{Preliminaries} \label{sec:prelims}
For every natural number $k \in \mathbb{N}$,
we let $[k] = \{1, 2, \dots, k\}$.
A \emph{temporal election} is a tuple
$E = (P, N, \ell, (\mathbf{s}_i)_{i \in N})$, where
$P$ is the set of $m$ {\em candidates},
$N$ is the set of $n$ \emph{voters},
$\ell$ is the number of \emph{rounds}, and for each $i \in N$, $\mathbf{s}_i = (s_{i,1}, s_{i,2},\dots, s_{i,\ell})$, where $s_{i,r} \subseteq P$ is the \emph{approval set} of voter $i$ in round $r$, which consists of candidates that $i$ approves in round $r$.\footnote{Preferences are said to be \emph{static} if for every voter $i \in N$ there exists a fixed approval set
$A_i \subseteq P$ such that $s_{i,r} = A_i$ for every round $r \in [\ell]$.
}
We denote the set of all temporal elections by $\mathcal{E}$ and the set of temporal elections whereby in every round every voter approves at least one candidate by $\mathcal{E}_{\geq 1}$ (which we call \emph{complete} elections).
We say that voters from a subset $S \subseteq N$ \emph{agree} in a round $r \in [\ell]$ if there exists a candidate that they all approve in this round, i.e., $\bigcap_{i \in S} s_{i,r} \neq \varnothing$.

An \emph{outcome} of a temporal election $E$ is a sequence $\mathbf{o} = (o_1, o_2, \dots, o_\ell) \in P^\ell$ 
of $\ell$ candidates such that for every $r \in [\ell]$, candidate $o_r \in P$ is chosen in round $r$.
A candidate may be selected multiple times, i.e., it may be the case that $o_t = o_{r'}$ for some $r \neq r'$.
For a subset of rounds $R \subseteq [\ell]$ and an outcome $\mathbf{o}$, we write $\mathbf{o}_R = (o_r)_{r \in R}$ to denote the \emph{suboutcome} with respect to $R$.
The \emph{satisfaction} of a subset of voters $S \subseteq N$ from a suboutcome $\mathbf{o}_R$ is $\sat_S(\mathbf{o}_R)=\big|\{r\in R :\ o_r \in \bigcup_{i \in S} s_{i,r}\} \big|$, i.e., the number of rounds in $R$ in which the selected candidate is approved by at least one voter from $S$.
If $S = \{i\}$ for some $i \in N$, then we simply write $\sat_i(\mathbf{o}_R)$.

\subsection{Proportional Representation Axioms}
We now provide the definitions of the proportionality axioms we consider, which were originally introduced in prior work on temporal voting \cite{bulteau2021jrperpetual,chandak2024proportional,elkind2025verifying}.

\begin{definition}[JR/PJR/EJR] \label{defn:jr_pjr_ejr}
    Given a temporal election $E = (P, N, \ell, (\mathbf{s}_i)_{i \in N})$ and an outcome $\mathbf{o}$, if, for each $t \in [\ell]$ and every nonempty subset of voters $S \subseteq N$ that agrees in a size-$t$ subset of rounds, it holds that
    \begin{itemize}
        \item $\sat_S(\mathbf{o}) \geq \min(1,\lfloor t \cdot |S|/n \rfloor)$, then $\mathbf{o}$ provides \emph{justified representation} (JR),
        \item $\sat_S(\mathbf{o}) \geq \lfloor t \cdot |S|/n \rfloor$, then $\mathbf{o}$ provides \emph{proportional justified representation} (PJR),
        \item $\sat_i(\mathbf{o}) \geq \lfloor t \cdot |S|/n \rfloor$ for some $i \in S$, then $\mathbf{o}$ provides \emph{extended justified representation} (EJR).
    \end{itemize}
\end{definition}
While JR, PJR, and EJR impose proportionality guarantees based on the total number of rounds on which a group agrees, they do not ensure that such agreement translates into representation in \emph{any particular round}.
In particular, a group may be sufficiently large and cohesive to warrant proportional representation overall, yet still be unrepresented in a round where all its members agree. To capture this stronger notion of temporal proportionality, Phillips et al.~\shortcite{phillips2025strengthening} proposed extended justified representation\,+ (EJR+),\footnote{This notion is based on the analogous strengthening of EJR in the multiwinner voting setting \cite{brill2023robust}.} which requires that sufficiently cohesive groups either obtain proportional satisfaction overall or receive representation in every round on which they unanimously agree.
It is defined as follows.

\begin{definition}[EJR+] \label{defn:ejrplus}
    Given a temporal election $E = (P, N, \ell, (\mathbf{s}_i)_{i \in N})$ and $\sigma \in [n]$, $\tau \in [\ell]$, we say that a nonempty subset of voters $S \subseteq N$ is \emph{$(\sigma, \tau)$-cohesive} if there exists a set of $\tau$ rounds $R \subseteq [\ell]$ and a suboutcome $\mathbf{o}_R = (o_r)_{r \in R}$ such that for each $r \in R$, at least $\sigma$ voters in $S$ approve $o_r$.
    
    We say that an outcome $\mathbf{o}$ provides \emph{extended justified representation\,+ (EJR+)} if for all $\sigma \in [n]$, $\tau \in [\ell]$, every $(\sigma, \tau)$-cohesive nonempty subset of voters $S \subseteq N$ and every round $r \in [\ell]$ with $\bigcap_{i \in S} s_{i,r} \neq \varnothing$, it holds that  (i) $\sat_i(\mathbf{o}) \geq \lfloor \tau \cdot \sigma/n \rfloor$ for some $i \in S$, or (ii) $o_r \in \bigcap_{i \in S} s_{i,r}$.
\end{definition}

Phillips et al.~\shortcite{phillips2025strengthening} proposed two strengthenings of EJR that are always satisfiable: EJR+ and \emph{full justified representation} (FJR).
The latter is inspired by an analogous strengthening studied in the participatory budgeting setting—a generalization of multiwinner voting that allows items to have variable costs \cite{peters2021fjr}.
In this work, we focus on EJR+ rather than FJR, as EJR+ is better established in the multiwinner voting literature and preserves key practical properties of its multiwinner analogue---most notably polynomial-time verifiability and satisfaction by a polynomial-time computable rule.
In contrast, while FJR is always satisfiable, the known constructive approach relies on a procedure whose running time is not polynomial, and Phillips et al.~\shortcite{phillips2025strengthening} observe that it remains open whether FJR outcomes can be computed in polynomial time.

\subsection{Price of Proportional Representation}
In this work, we focus on utilitarian social welfare, which is the canonical efficiency benchmark in approval-based voting and is standard in quantitative welfare-proportionality tradeoff analyses
\cite{LacknerSkowron2020,elkind2024price,elkind2024temporalelections}.
We view utilitarian welfare as the most informative baseline for isolating the efficiency cost of proportional representation:
unlike other (e.g., egalitarian or Nash) objectives, which already encode inequality aversion, utilitarian welfare measures aggregate efficiency directly.
Moreover, utilitarian welfare is additively separable across rounds, making the unconstrained optimum efficiently computable in our model and enabling transparent worst-case  comparisons.

Given a temporal election $E = (P, N, \ell, (\mathbf{s}_i)_{i \in N})$ and an $\mathbf{o}=(o_1,\dots,o_\ell)\in P^\ell$, the \emph{utilitarian welfare} of $\mathbf{o}$ under $E$ is the total number of approving voter-round pairs:
\begin{equation*}
    \util_E(\mathbf{o}) := \sum_{i\in N} \sat_i(\mathbf{o}) =\sum_{r=1}^{\ell}\bigl|\{ i\in N : o_r \in s_{i,r}\}\bigr|.
\end{equation*}
For notational simplicity, we omit the subscript $E$ when it is obvious from context.
More generally, one could consider an arbitrary welfare function
$W_E : P^\ell \to \mathbb{R}_{\ge 0}$.

Let $\Phi$ be a property of the outcome, e.g.,
$\Phi \in \{\JR, \PJR, \EJR, \EJR+\}$.
Fix integers $n,\ell \ge 1$, and let $\mathcal E^{n,\ell}$ be the
class of temporal elections with $n$ voters and $\ell$ rounds; define
$\mathcal E^{n,\ell}_{\ge 1}$ analogously for complete elections.
For an election $E \in \mathcal E^{n,\ell}$, define the unconstrained
optimum by $W^*(E) := \max_{\mathbf{o}\in P^\ell} W_E(\mathbf{o})$, and the $\Phi$-constrained optimum by $W^\Phi(E) := \max\{W_E(\mathbf{o}) : \mathbf{o} \in P^\ell \text{ and } \mathbf{o} \text{ provides } \Phi\}$.
For a fixed election $E$ with $W^\Phi(E)>0$, let $\rho_{W,\Phi}(E) := \frac{W^*(E)}{W^\Phi(E)}$.
For a class $\mathcal C \subseteq \mathcal E^{n,\ell}$, the price of
$\Phi$ with respect to $W$ over $\mathcal C$ is $\mathcal P_W(\Phi;\mathcal C)
:= \sup_{E\in \mathcal C: W^\Phi(E)>0} \rho_{W,\Phi}(E)$.
In Section~\ref{sec:price}, we take $\mathcal C=\mathcal E^{n,\ell}_{\ge 1}$, since the
price bounds there are for complete elections. To lighten notation, we let $\mathcal P_W^{n,\ell}(\Phi) := \mathcal P_W(\Phi;\mathcal E^{n,\ell}_{\ge 1})$ when the underlying class is clear.

Then, we will study the following decision problem.
\begin{tcolorbox}[title=$\Phi$-\textsc{UTIL}]
    \textbf{Input}: A temporal election $E$ and an integer threshold $B$.

    \textbf{Question}: Does there exist an outcome $\mathbf{o} \in P^\ell$ such that $\mathbf{o}$ provides $\Phi$ and $\util(\mathbf{o}) \geq B$?
\end{tcolorbox}
We also consider the associated optimization problem.
\begin{tcolorbox}[title=$\Phi$-\textsc{MaxUTIL}]
    \textbf{Input}: A temporal election $E$.

    \textbf{Task}: Compute an outcome $\mathbf{o}\in P^\ell$  such that $\mathbf{o}$ provides $\Phi$ and $\util(\mathbf{o})$ is maximized.
\end{tcolorbox}

\section{Price of Proportional Representation} \label{sec:price}
We begin by quantifying the welfare cost of enforcing proportional representation in temporal elections.

Here, we make two simple assumptions that are needed to ensure that the price measure is well-defined and meaningful.
First, we restrict attention to complete elections $E\in\mathcal{E}_{\ge 1}$, i.e., $s_{i,r}\neq\varnothing$ for all $i\in N$
and all $r\in[\ell]$. Without this, the ratio can be inflated by
trivial ``empty approval'' rounds in which no candidate is approved, and the measure would be uninformative.
Second, we focus on the setting where $\ell \geq n$.
To see why, our proportionality axioms ``assign representation'' in indivisible units of rounds via terms $\lfloor t \cdot |S| /n \rfloor$.
When $\ell < n$, even a voter who has a nonempty approval set at every round has $\lfloor \ell/n\rfloor = 0$, and JR/PJR/EJR impose no per-voter guarantee and can be satisfied while ignoring some voters in every round.
Moreover, this is unavoidable: in complete elections, one can have disjoint approvals so that each round can satisfy at most one voter, making ``everyone is represented at least once'' infeasible unless $\ell \geq n$.
This is also the minimal non-degenerate setting in which the proportionality axioms impose positive individual entitlements and have some bite; and thus, the resulting price isolates the meaningful welfare-proportionality tension.

We begin with a general lower bound.

\begin{proposition}\label{prop:sqrt-lb}
    Fix any $\ell \ge 1$ and $\Phi \in \{\JR, \PJR, \EJR, \EJR+ \}$.
There exists a complete temporal election $E\in \mathcal E^{\ell,\ell}_{\ge 1}$ such that $\rho_{\util,\Phi}(E) \ge \frac{1}{2}\sqrt{\ell}$.
Consequently, $\mathcal P_{\util}^{\ell,\ell}(\Phi) \ge \frac{1}{2}\sqrt{\ell} = \frac{1}{2}\sqrt n$.
\end{proposition}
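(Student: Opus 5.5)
The plan is to use a static ``majority block plus isolated singletons'' construction. One small cohesive block collects all of the unconstrained welfare. Many singleton voters, each approving only a private candidate, force any proportional outcome to spend one round on each of them.

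\textbf{Construction.} Set $n=\ell$ and choose a block size $g\in[\ell]$, with $g=\lceil\sqrt{\ell}\rceil$ or $\lfloor\sqrt{\ell}\rfloor$ to be fixed at the end. There is a common candidate $a$ and private candidates $c_1,\dots,c_{\ell-g}$. Voters $1,\dots,g$ statically approve $\{a\}$. Each voter $g+j$, for $j\in[\ell-g]$, statically approves $\{c_j\}$. The election is complete.

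\textbf{Unconstrained welfare.} Selecting $a$ in every round gives $W^*(E)\ge g\ell$.

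\textbf{Every $\Phi$-outcome serves every singleton.} A singleton $S=\{g+j\}$ agrees with itself on all $\ell$ rounds, so its JR entitlement is $\min(1,\lfloor \ell\cdot 1/\ell\rfloor)=1$. Hence every JR outcome selects $c_j$ at least once, for every $j$. The axioms are nested (EJR implies PJR implies JR), so this requirement carries over to PJR and EJR.

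EJR+ is not automatically covered, so I check it directly. Take $S=\{g+j\}$ with $\sigma=1$ and $\tau=\ell$; this $S$ is $(1,\ell)$-cohesive. For each round $r$, either condition (i) holds, giving $\sat_{g+j}(\mathbf o)\ge 1$, or condition (ii) holds, giving $o_r=c_j$. In both cases $c_j$ is selected at least once.

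\textbf{Upper bound on constrained welfare.} Any $\Phi$-outcome uses at least $\ell-g$ rounds on distinct private candidates, and each such round contributes welfare $1$. Every other round contributes at most $g$. So every $\Phi$-outcome satisfies $\util(\mathbf o)\le (\ell-g)+g\cdot g$.

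\textbf{$W^\Phi(E)>0$.} The outcome that takes $c_1,\dots,c_{\ell-g}$ once each and $a$ in the remaining $g$ rounds satisfies all four axioms here. Alternatively, one can cite existence of EJR+ outcomes from Phillips et al., together with the implications to the weaker axioms; every outcome has positive welfare in this election because each candidate is approved by at least one voter.

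\textbf{Ratio.} Combining the bounds,
\[
\rho_{\util,\Phi}(E)\;\ge\;\frac{g\ell}{g^2+\ell-g}.
\]
For $g\approx\sqrt\ell$ the denominator is at most about $2\ell$, so the ratio is about $\sqrt\ell/2$.

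\textbf{Main obstacle: rounding.} The delicate step is choosing $g$ so that the bound is at least $\tfrac12\sqrt\ell$ exactly, not just asymptotically. The condition rearranges to $2g\ell\ge\sqrt\ell\,(g^2+\ell-g)$, equivalently $g^2-g+\ell\le 2g\sqrt\ell$.

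With $g=\lfloor\sqrt\ell\rfloor$, write $\sqrt\ell=g+\delta$ with $\delta\in[0,1)$. Then $\ell=g^2+2g\delta+\delta^2$, and the condition becomes $\delta^2\le g$. This holds because $\delta^2<1\le g$.

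Finally, the consequence $\mathcal P^{\ell,\ell}_{\util}(\Phi)\ge\tfrac12\sqrt\ell=\tfrac12\sqrt n$ follows from the definition of the price as a supremum over $\mathcal E^{\ell,\ell}_{\ge 1}$, since $E$ belongs to that class and has $W^\Phi(E)>0$.
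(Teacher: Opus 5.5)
Your proof is correct and follows essentially the same route as the paper. Both use a core block of $g\approx\sqrt{\ell}$ voters sharing one popular candidate plus $\ell-g$ private voters, each of whom must be served in a distinct round; this caps the popular candidate at $g$ rounds, gives $\rho\ge \frac{g\ell}{g^2+\ell-g}$, and finishes with the same kind of rounding check. The differences are cosmetic: you use static approvals with one fixed private candidate per voter (the instance the paper later uses for Proposition~\ref{prop:jr-tight}) instead of round-indexed private candidates, and $g=\lfloor\sqrt{\ell}\rfloor$ instead of $\lceil\sqrt{\ell}\rceil$, and both choices work.
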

Intuitively, the construction creates a ``core'' block of about $\sqrt{\ell}$ voters who share a consistently popular candidate, alongside many voters whose approvals are essentially private.
The welfare-maximizing outcome repeatedly satisfies the popular block, but any $\Phi$-feasible outcome must spend many rounds on low-support candidates to ensure some baseline representation for the private voters, producing a $\Theta(\sqrt{\ell})$ welfare gap.

Next, we provide a universal upper bound, which serves more as a coarse sanity check showing imposing proportionality cannot blow up welfare by more than a factor $n$.

\begin{proposition}\label{prop:trivial-upper}
    Fix any $n,\ell \ge 1$, any $\Phi \in \{ \JR, \PJR, \EJR, \EJR+ \}$, and any complete temporal election $E\in \mathcal E^{n,\ell}_{\ge 1}$ with $\util^{\Phi}(E)>0$. Then $\rho_{\util,\Phi}(E) \le n$. Consequently, $\mathcal P_{\util}^{n,\ell}(\Phi) \le n$.
\end{proposition}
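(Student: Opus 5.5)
The plan is to compare both sides of the ratio with the number of rounds $\ell$, using only completeness of $E$.

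\emph{Upper bound on $W^*(E)$.} For any outcome $\mathbf{o}$ we have $\util(\mathbf{o})=\sum_{r=1}^{\ell}|\{i\in N: o_r\in s_{i,r}\}|\le n\ell$, since each round contributes at most $n$. In particular $W^*(E)\le n\ell$.

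\emph{Lower bound on $\util^{\Phi}(E)$.} I aim to show $\util(\mathbf{o})\ge \ell$ for the $\Phi$-optimal outcome, so that $W^\Phi(E)\ge \ell$ and hence $\rho_{\util,\Phi}(E)\le n\ell/\ell=n$. Per-outcome this can fail, because a $\Phi$-outcome might select a candidate in some round that nobody approves. I would use a local-improvement argument.
\begin{itemize}
\item Let $\mathbf{o}$ be a $\Phi$-outcome of maximum welfare (it exists, since $P^\ell$ is finite and nonempty by the hypothesis $\util^\Phi(E)>0$).
\item Suppose some round $r$ has $o_r$ approved by no voter. By completeness, some candidate $c$ is approved by at least one voter in round $r$. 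Let $\mathbf{o}'$ be $\mathbf{o}$ with $o_r$ replaced by $c$.
\item Then $\sat_i(\mathbf{o}')\ge\sat_i(\mathbf{o})$ for every $i$, and $\sat_S(\mathbf{o}')\ge\sat_S(\mathbf{o})$ for every $S$, since round $r$ previously satisfied nobody. Hence conditions (JR/PJR/EJR) and clause (i) of EJR+ are preserved.
\item Strictly, $\util(\mathbf{o}')>\util(\mathbf{o})$.
\end{itemize}

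\emph{Main obstacle: clause (ii) of EJR+.} Clause (ii) for round $r$ could fail after the swap, because it depends on which candidate is chosen in round $r$ rather than on satisfaction. However, for a cohesive $S$ with $\bigcap_{i\in S}s_{i,r}\ne\varnothing$, clause (ii) held for $\mathbf{o}$ at round $r$ only if $o_r\in\bigcap_{i\in S}s_{i,r}$. That is impossible, since $o_r$ is approved by nobody and $S$ is nonempty. So for every such $S$, clause (i) must have held for $\mathbf{o}$, and clause (i) is preserved by the swap. For rounds $r'\ne r$ nothing changed. Thus $\mathbf{o}'$ still provides $\Phi$, contradicting maximality.

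Therefore every round of the optimal $\mathbf{o}$ satisfies at least one voter, giving $W^\Phi(E)\ge\ell$. Combined with $W^*(E)\le n\ell$, this yields $\rho_{\util,\Phi}(E)\le n$, and taking the supremum over $E$ gives $\mathcal P^{n,\ell}_{\util}(\Phi)\le n$.
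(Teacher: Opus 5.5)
Your proof is correct and follows essentially the same route as the paper: bound $W^*(E)\le n\ell$, then show that replacing the candidate in a round where nobody approves it with an approved candidate preserves $\Phi$. For JR/PJR/EJR this uses monotonicity of all $\sat_i$ and $\sat_S$; for EJR+ it uses that clause (ii) was already false at the modified round, so clause (i) held there and survives. The only difference is cosmetic: you argue by contradiction with a welfare-maximal $\Phi$-outcome, whereas the paper repairs an arbitrary $\Phi$-outcome by making these replacements one round at a time.
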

The proof combines two simple facts: (i) welfare in any round is at most $n$, and $\util^*(E)\le n\ell$; (ii) for $E \in \mathcal{E}_{\geq 1}$, any $\Phi$-feasible outcome can be modified
(without affecting feasibility) so that every round selects a candidate approved by at least one voter, guaranteeing welfare at least $\ell$.

For JR we can substantially strengthen Proposition~\ref{prop:trivial-upper}; crucially, the bound
improves with number of rounds.

\begin{theorem}\label{thm:jr-ub}
    Fix $n,\ell$ with $\ell \ge n$. Then, for every
complete temporal election $E\in \mathcal E^{n,\ell}_{\ge 1}$, $\rho_{\util,\JR}(E)
\le \frac{\ell}{\ell-n+2\sqrt n-1}$.
Consequently, $\mathcal P_{\util}^{n,\ell}(\JR)
\le \frac{\ell}{\ell-n+2\sqrt n-1}$.
In particular, for every fixed $\varepsilon>0$, if
$\ell=(1+\varepsilon)n$, then $\mathcal P_{\util}^{n,\ell}(\JR)
\le \frac{1+\varepsilon}{\varepsilon+o(1)}$,
and if $\ell/n\to\infty$, then
$\mathcal P_{\util}^{n,\ell}(\JR)\to 1$.
\end{theorem}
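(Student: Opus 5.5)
The plan is to first reduce JR, under the hypotheses $\ell\ge n$ and completeness, to a pure covering condition, and then to build a JR outcome from a welfare-maximizing outcome by replacing a few of its lowest-welfare rounds with "repair" rounds.

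\textbf{Step 1: JR means covering every voter.} Since $E$ is complete, every singleton $\{i\}$ agrees in all $\ell$ rounds. Because $\lfloor \ell\cdot 1/n\rfloor\ge 1$ when $\ell\ge n$, JR forces $\sat_i(\mathbf{o})\ge 1$ for every voter $i$. Conversely, if every voter is satisfied at least once, then every nonempty group $S$ has $\sat_S(\mathbf{o})\ge 1$, which is all JR ever demands. So an outcome provides JR if and only if every voter approves the selected candidate in at least one round.

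\textbf{Step 2: the repaired outcome.} Let $\mathbf{o}^*$ maximize welfare; this is done round by round, so round $r$ has welfare $w_r=\max_p|\{i:p\in s_{i,r}\}|$. Let $a=\max_r w_r\ge 1$ and let $r^*$ attain it. Then
\[
\util^*(E)=\sum_r w_r\le a\ell .
\]
Round $r^*$ already satisfies $a$ voters, so at most $n-a$ voters remain uncovered. Order the rounds by decreasing $w_r$, with $r^*$ first, and keep $\mathbf{o}^*$ on the top $\ell-(n-a)$ rounds; this count is at least $1$ because $\ell\ge n$. On each of the remaining $n-a$ rounds, pick a distinct still-uncovered voter $i$ and select some candidate in $s_{i,r}$, which is nonempty by completeness. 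Leftover repair rounds, if any, are filled with any approved candidate. The resulting outcome covers all voters, so by Step 1 it provides JR.

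\textbf{Step 3: welfare accounting.} The kept rounds are the $\ell-n+a$ highest-welfare rounds, so their total is at least $\frac{\ell-n+a}{\ell}\util^*(E)$, by an averaging argument. Each repair round contributes at least $1$, and $n-a\ge\frac{n-a}{a\ell}\util^*(E)$. Combining,
\[
\util^{\JR}(E)\ \ge\ \frac{\util^*(E)}{\ell}\Bigl(\ell-n+a+\frac{n}{a}-1\Bigr)\ \ge\ \frac{\util^*(E)}{\ell}\bigl(\ell-n+2\sqrt n-1\bigr),
\]
where the last step uses AM--GM in the form $a+n/a\ge 2\sqrt n$. Rearranging gives the stated bound on $\rho_{\util,\JR}(E)$, and taking the supremum gives the bound on $\mathcal P^{n,\ell}_{\util}(\JR)$.

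\textbf{Step 4: asymptotics.} For $\ell=(1+\varepsilon)n$, dividing numerator and denominator by $n$ gives $\frac{1+\varepsilon}{\varepsilon+(2\sqrt n-1)/n}$. If $\ell/n\to\infty$, the ratio is at most $\ell/(\ell-n)\to 1$, and it is always at least $1$.

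The main obstacle is seeing the right trade-off between the two cases. A large $a$ means few repair rounds are needed. A small $a$ means $\util^*(E)\le a\ell$ is small, so each cheap repair round (welfare $\ge 1$) is proportionally valuable. Keeping the single best round $r^*$ is what couples these two effects. The point to check carefully is the averaging claim in Step 3, namely that dropping the lowest-welfare rounds loses at most a proportional share of $\util^*(E)$.
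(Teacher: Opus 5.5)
Your proof is correct. The construction is the same in spirit as the paper's: keep per-round maximizers on the high-welfare rounds, and use the lowest rounds to cover the remaining voters, with one high round covering many voters at once. The analysis, however, follows a genuinely different route.

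The paper reserves exactly the $n$ lowest rounds $B$ and sets $q=\max_{t\in B}a_t$, keeping the top $q$ rounds of $B$. It proves a local claim: every voter can be covered within $B$ while retaining a $\frac{2\sqrt n-1}{n}$ fraction of $\mathrm{OPT}_B$. This uses monotonicity in $Q$, the inequality $q\ge r$ with $r=\max_{t\in B\setminus T}a_t$, and AM--GM in $r$. It then lifts the claim to the full horizon via $\mathrm{OPT}_B\le\frac{n}{\ell}\util^*(E)$.

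You instead parameterize by the global maximum $a$ and repair only the bottom $n-a$ rounds. You lower-bound the kept rounds with a single averaging step. You then convert the trivial ``at least $1$ per repair round'' into a share of the optimum through $\util^*(E)\le a\ell$, and finish with AM--GM in $a$.

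Your argument is shorter, uses one parameter instead of three, and (since $a\ge q$) repairs no more rounds than the paper's construction. The paper's two-level argument buys a modular intermediate statement. Its block claim is exactly the $\ell=n$ guarantee $n/(2\sqrt n-1)$, matching Elkind et al., and the general-$\ell$ bound then follows by pure averaging.

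Both routes give the identical bound $\frac{\ell}{\ell-n+2\sqrt n-1}$, and your asymptotic Step~4 matches the statement.
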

At a high level, the proof is constructive: it identifies $n$ ``cheap'' rounds and uses them to ensure that every voter is represented at least once, while selecting per-round welfare maximizers in the remaining rounds. An averaging argument then shows that even on the reserved $n$ rounds, one can retain a $\Theta(1/\sqrt{n})$ fraction of the local welfare optimum.

We complement Theorem~\ref{thm:jr-ub} with a matching lower bound, showing tightness up to lower-order
terms.

\begin{proposition}\label{prop:jr-tight}
    Fix $n,\ell$ with $\ell \ge n$. There exists a complete temporal election $E\in \mathcal E^{n,\ell}_{\ge 1}$ such that $\rho_{\util,\JR}(E) \ge \frac{\ell}{\ell-n+2\sqrt n- \mathcal{O}(1)}$.
    Consequently, $\mathcal P_{\util}^{n,\ell}(\JR) \ge \frac{\ell}{\ell-n+2\sqrt n- \mathcal{O}(1)}$.
\end{proposition}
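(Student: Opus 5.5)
We construct a complete election with $n$ voters and $\ell$ rounds. The first $\ell-n$ rounds are ``free'' rounds, and the last $n$ rounds are ``reserved'' rounds, in which JR forces a welfare loss of the same form as the upper bound in Theorem~\ref{thm:jr-ub}.

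\emph{Construction.} Fix $k=\lfloor\sqrt n\rfloor$ and split the voters into a core block $C$ of size $n-k$ and a set $D$ of $k$ dissenters. In the first $\ell-n$ rounds, every voter approves a common candidate $a$. These rounds add welfare $n$ per round both to $W^*$ and to the JR optimum, since selecting $a$ there never hurts JR. In each of the last $n$ rounds, voters in $C$ approve $a$ and each dissenter $d\in D$ approves only a private candidate $p_d$.

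The difficulty is that with the first rounds in place, every voter already has satisfaction at least $1$, so JR is vacuous. I would therefore modify the free rounds so that dissenters never approve $a$ there; each dissenter instead approves only a private candidate. A dissenter is then a singleton group agreeing on all $\ell$ rounds, with entitlement $\lfloor \ell/n\rfloor\ge 1$. JR therefore requires each dissenter to be satisfied in at least one round.

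\emph{Welfare computation.} We have $W^*=(n-k)\ell$, obtained by choosing $a$ every round. Any JR outcome must give up at least $k$ rounds, each yielding welfare $1$ instead of $n-k$. Hence
$W^{\JR}\le (n-k)(\ell-k)+k$, and the ratio is $\frac{(n-k)\ell}{(n-k)(\ell-k)+k}$.

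This is only $\approx \ell/(\ell-k)$, which is weaker than the target $\ell/(\ell-n+2\sqrt n)$. The target requires roughly $n-2\sqrt n$ rounds to be wasted. So the construction must force about $n$ sacrificed rounds, with the JR optimum on them retaining only welfare about $\ell\cdot(2\sqrt n)/n$-fraction of the optimum.

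\emph{Revised plan.} I would mirror the averaging argument of Theorem~\ref{thm:jr-ub}. Use a core block of size $s\approx\sqrt n$ that approves $a$ in every round, and $n-s$ private voters, each approving only their own candidate in every round. Then $W^*=s\ell$. JR forces each private voter, as a singleton agreeing on $\ell\ge n$ rounds, to be covered once, which costs $n-s$ rounds of welfare $1$. This gives
\[
W^{\JR}=s(\ell-n+s)+(n-s).
\]
The ratio is
\[
\frac{s\ell}{s(\ell-n+s)+n-s}=\frac{\ell}{\ell-n+s+n/s-1}.
\]
Choosing $s=\sqrt n$, or the nearest integer, gives the denominator $\ell-n+2\sqrt n-1$ up to $O(1)$ from rounding. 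This matches Theorem~\ref{thm:jr-ub} and is the intended construction.

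\emph{Main obstacle.} The key step is the rigorous proof that $W^{\JR}$ is at most this value. The argument runs as follows.
\begin{itemize}
\item JR only constrains groups $S$ with $\lfloor t|S|/n\rfloor\ge1$.
\item Each private voter's rounds are disjoint from everyone else's, so the $n-s$ private voters need $n-s$ distinct rounds.
\item On each such round, welfare is exactly $1$.
\item All other rounds give welfare at most $s$.
\end{itemize}
Attainability follows by choosing $a$ elsewhere, since the core block is then satisfied. Completeness of the election holds because every voter approves something in every round.
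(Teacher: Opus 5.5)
Your revised construction is exactly the paper's: a core of $s\approx\sqrt{n}$ voters who always approve a common candidate, plus $n-s$ private voters, each forced by the singleton JR constraint (with $t=\ell\ge n$) into a distinct welfare-$1$ round, giving $W^{\JR}=s(\ell-n+s)+(n-s)$ and ratio $\ell/(\ell-n+s+n/s-1)$. The paper takes $s=\lceil\sqrt{n}\rceil$ and makes your rounding remark precise via $s+n/s=2\sqrt{n}+(s-\sqrt{n})^2/s$, which works equally well for the nearest integer; your two discarded attempts at the start can simply be dropped.
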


\begin{remark}
    Elkind et al.~\shortcite{elkind2024temporalelections} study temporal elections under an \emph{individual} proportionality
    axiom PROP, which requires each voter $i$ to be satisfied in at least $\lfloor \mu_i/n\rfloor$ rounds,     where $\mu_i$ is the number of rounds in which $i$ approves at least one candidate (so $\mu_i=\ell$ on complete elections).
    When $\lfloor \ell/n\rfloor=1$ (in particular, when $\ell=n$), PROP reduces to the requirement that
    every voter is satisfied at least once; on complete elections, this condition coincides with JR.
    In this setting, Elkind et al.~\cite[Thm.~5.6]{elkind2024temporalelections} determine the worst-case utilitarian price exactly as $n/(2\sqrt{n}-1)$; their lower-bound instance corresponds to the $\ell=n$ special case of
    Proposition~\ref{prop:jr-tight}.
    Our focus is different: we analyze group-based axioms and explicitly track how the welfare loss varies
    with the horizon $\ell$; Theorem~\ref{thm:jr-ub} and Proposition~\ref{prop:jr-tight} give a tight
    $\ell$-sensitive bound for JR, showing that its price decreases with $\ell$ and tends to $1$ when
    $\ell\gg n$.
\end{remark}
Finally, we make explicit the qualitative separation suggested by Theorem~\ref{thm:jr-ub}: increasing the time horizon can make JR cheap, but it need not reduce the cost of stronger axioms.

\begin{theorem}\label{thm:long-horizon-sep}
    Fix an integer $a \ge 2$ and set $\ell = an$. Then $P^{n,an}_{\util}(\JR) \le \frac{a}{a-1}+o(1)$ as $n\to\infty$. However, for each $\Phi \in \{\PJR,\EJR,\EJR+\}$, $P^{n,an}_{\util}(\Phi)=\Omega(\sqrt n)$.
\end{theorem}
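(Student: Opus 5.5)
The JR half follows directly from Theorem~\ref{thm:jr-ub}. With $\ell = an$, the bound becomes
\[
\frac{an}{an-n+2\sqrt n-1}=\frac{a}{a-1+(2\sqrt n-1)/n}\le \frac{a}{a-1}
\]
for $n\ge 1$, since $2\sqrt n-1\ge 0$. This already gives $\frac{a}{a-1}+o(1)$, in fact with no additive error term. So the JR claim is a one-line specialization.

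For the stronger axioms I would give a construction. It suffices to treat PJR, since EJR and EJR+ both imply PJR, so their feasible sets are contained in PJR's and their constrained optima are at most $W^{\PJR}$. The idea is to blow up the instance of Proposition~\ref{prop:sqrt-lb} so that the entitlement to "private" representation scales with $a$ rather than vanishing. Concretely, I would take static preferences with a core block $C$ of $k\approx\sqrt n$ voters who all approve a popular candidate $c$, and $n-k$ private voters, each approving only a private candidate $p_i$. Alone, each private voter is entitled to $\lfloor an/n\rfloor = a$ rounds. However, each private candidate has support $1$, so spending rounds on them costs welfare linearly. The point of forcing a PJR rather than JR requirement is that the private voters must collectively be served many times.

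To make this bite, I would instead group the private voters into blocks $G_j$ of size $s$ that share a common candidate $q_j$. Each block is cohesive on all $an$ rounds, so PJR demands $\sat_{G_j}\ge \lfloor an\cdot s/n\rfloor = as$ rounds in which some member of $G_j$ is satisfied. Summed over the $(n-k)/s$ blocks, this forces roughly $a(n-k)$ rounds spent on candidates of support $s$, and this is exactly the quantity JR cannot force (it demands only one round per block). Now set $s=1$ and $k=\sqrt n$, and give the core voters $c$ in every round. Then $W^*= k\cdot an = a n^{3/2}$. A PJR outcome must spend at least $a(n-k)$ of its $an$ rounds on private candidates, leaving at most $ak$ rounds for $c$, so
\[
W^{\PJR}\le a(n-k)\cdot 1 + ak\cdot k \le 2an.
\]
The ratio is therefore at least $\sqrt n/2=\Omega(\sqrt n)$. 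Finally, I would exhibit a feasible PJR outcome to confirm that $W^{\PJR}>0$ and that the constraint is satisfiable.

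The main obstacle is verifying the PJR lower bound rigorously: I must check that the only constraints that matter are the singleton private groups and the core group, and that mixed groups do not complicate the welfare upper bound. The upper bound, however, only uses necessary conditions (each singleton $\{i\}$ agreeing on $an$ rounds forces $\sat_i\ge a$), so this direction is clean. Feasibility is easy: give each private voter $a$ rounds and the core $ak$ rounds, which totals exactly $an$ rounds. Any union group then receives at least its proportional share by additivity, since the group's satisfaction counts rounds won by distinct members and these are disjoint.
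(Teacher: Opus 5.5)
Your proposal is correct and follows the paper's proof essentially verbatim. The JR half is the same specialization of Theorem~\ref{thm:jr-ub}, and your observation that the bound is in fact at most $\frac{a}{a-1}$ outright is a harmless sharpening. With $s=1$, your construction is exactly the paper's core/private instance, with the same singleton-entitlement count forcing $r_z\le ak$ and hence $\util(\mathbf{o})\le a(n-k+k^2)$. The only deviation is that you get EJR and EJR+ from the nesting $\EJRplus\Rightarrow\EJR\Rightarrow\PJR$ (valid by the first direction of Lemma~\ref{lem:static_ejr_ejr+}) rather than re-deriving $\sat_i(\mathbf{o})\ge a$ for each axiom. That route additionally needs $W^{\EJRplus}(E)>0$, which you get by checking that your exhibited outcome is EJR+-feasible, as the paper does. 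You should also take $k=\lceil\sqrt n\rceil$ so that $k$ is an integer.
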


The results in this section show that enforcing proportional
representation in temporal elections can incur a provable sublinear welfare loss. 
For JR, this loss decreases with $\ell$ and vanishes as $\ell$ becomes large relative to $n$ (Theorem~\ref{thm:jr-ub}), whereas for stronger axioms (PJR/EJR/EJR+) the cost can remain $\Omega(\sqrt{n})$ even when $\ell$ is a constant-factor multiple of $n$ (Theorem~\ref{thm:long-horizon-sep}).

\section{Computational Intractability Results} \label{sec:computational}
The previous section quantified the welfare loss that can arise from enforcing proportionality in temporal elections.
We now turn to the algorithmic question: given a temporal election, can we compute a high welfare outcome that
satisfies a chosen proportionality axiom?
Formally, we study the decision problem \textsc{$\Phi$-UTIL} and its optimization variant \textsc{$\Phi$-MaxUTIL} (as defined in Section~\ref{sec:prelims}) for $\Phi \in \{\JR, \PJR, \EJR, \EJRplus\}$.

Our key findings are: for each $\Phi \in \{\JR, \PJR, \EJR, \EJRplus\}$, maximizing utilitarian welfare subject to $\Phi$ is NP-hard, and even APX-hard, under very strong structural restrictions.
In particular, our reductions use static preferences (approval sets do not vary across rounds) and constant approval degree bounds (each candidate is approved by a constant number of voters), so that feasibility verification is straightforward; the hardness comes from the combinatorial property of simultaneously meeting many group-based representation constraints while retaining high welfare.

We begin with two simple structural observations that let us treat
$\JR/\PJR/\EJR$ (and, on static instances, also $\EJRplus$) in a unified way.
We then prove NP-completeness of \textsc{$\Phi$-Util} and strengthen this to APX-hardness for \textsc{$\Phi$-MaxUtil}
via a gap-preserving reduction.

Our NP-hardness construction then forces every \emph{positive} proportional entitlement to be exactly~$1$.
Intuitively, whenever a cohesive group is large/frequent enough to trigger a proportionality requirement at all, it is only entitled to \emph{one} unit of representation over the whole horizon.
In this setting, the distinctions between $\JR$, $\PJR$, and $\EJR$ disappear: all three axioms reduce to the same proportionality condition.

\begin{lemma}\label{lem:alpha01-collapse}
    Fix a temporal election $E=(P,N,\ell,(\mathbf{s}_i)_{i\in N})$.
    Assume that for every integer $t>0$ and every nonempty voter group $S\subseteq N$
    that agrees in a size-$t$ subset of rounds, we have $\left\lfloor t\cdot |S| / n \right\rfloor \le 1$.
    Then for any outcome $\mathbf{o}$, the following are equivalent: $\mathbf{o}$ satisfies JR, $\mathbf{o}$ satisfies PJR, and $\mathbf{o}$ satisfies EJR.
\end{lemma}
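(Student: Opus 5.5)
The plan is to prove the cycle of implications EJR $\Rightarrow$ PJR $\Rightarrow$ JR $\Rightarrow$ EJR. The first two are the standard containments and hold on any temporal election. The last one is where the hypothesis that all entitlements are at most $1$ is used.

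For EJR $\Rightarrow$ PJR, fix $t$ and a group $S$ that agrees in a size-$t$ set of rounds. EJR gives some $i\in S$ with $\sat_i(\mathbf{o})\ge\lfloor t|S|/n\rfloor$. Since $s_{i,r}\subseteq\bigcup_{j\in S}s_{j,r}$ for every round $r$, every round counted in $\sat_i(\mathbf{o})$ is also counted in $\sat_S(\mathbf{o})$. Hence $\sat_S(\mathbf{o})\ge\sat_i(\mathbf{o})\ge\lfloor t|S|/n\rfloor$. For PJR $\Rightarrow$ JR, we simply note that $\lfloor t|S|/n\rfloor\ge\min(1,\lfloor t|S|/n\rfloor)$.

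The only substantive step is JR $\Rightarrow$ EJR, and here the hypothesis enters. Fix $t>0$ and $S$ agreeing in a size-$t$ set of rounds, and let $k=\lfloor t|S|/n\rfloor$. By assumption $k\in\{0,1\}$.

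If $k=0$, the EJR requirement is vacuous. Any $i\in S$ has $\sat_i(\mathbf{o})\ge0$, and such an $i$ exists because $S$ is nonempty.

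If $k=1$, JR gives $\sat_S(\mathbf{o})\ge\min(1,1)=1$. So there is a round $r$ with $o_r\in\bigcup_{j\in S}s_{j,r}$. That is, some $i\in S$ has $o_r\in s_{i,r}$, and therefore $\sat_i(\mathbf{o})\ge1=k$, which is exactly the EJR condition for $S$.

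The point to emphasize is that the union-based satisfaction $\sat_S$ and the individual satisfaction $\sat_i$ coincide at the threshold $1$: $\sat_S(\mathbf{o})\ge1$ if and only if $\sat_i(\mathbf{o})\ge1$ for some $i\in S$. They only diverge at higher thresholds, which the hypothesis excludes. I do not expect any real obstacle. The only care needed is to apply the hypothesis to exactly the quantifier range over $t$ and $S$ that appears in Definition~\ref{defn:jr_pjr_ejr}, which it matches verbatim.
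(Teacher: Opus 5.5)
Your proposal is correct and follows essentially the same argument as the paper. The only difference is organizational: you prove the cycle EJR $\Rightarrow$ PJR $\Rightarrow$ JR $\Rightarrow$ EJR, whereas the paper shows directly that, for each pair $(S,t)$ with demand $d\in\{0,1\}$, the three constraints coincide. The key observation is the same in both: $\sat_S(\mathbf{o})\ge 1$ if and only if $\sat_i(\mathbf{o})\ge 1$ for some $i\in S$.
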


Lemma~\ref{lem:alpha01-collapse} lets a single reduction establish hardness for all of $\JR/\PJR/\EJR$, provided we ensure the stated condition.
Our constructions achieve this in a clean, verifiable way: we enforce a constant bound $k$ on the number of approvers per candidate, which implies that any agreeing group
$S$ must be contained in the approver set of some candidate and hence has size at most $k$.
By choosing $n$ sufficiently large relative to $\ell$ and~$k$, we guarantee
$\lfloor t\cdot |S|/n\rfloor \le \lfloor \ell \cdot k/n\rfloor \le 1$ for all relevant $(S,t)$.

Next we handle $\EJRplus$.
While $\EJRplus$ is strictly stronger than $\EJR$ in general, under static preferences it collapses back to $\EJR$:
if a group ever unanimously agrees, it agrees in \emph{every} round, so $\EJR$ already forces enough aggregate
satisfaction to satisfy the $\EJRplus$ disjunction via~(i).

\begin{lemma}\label{lem:static_ejr_ejr+}
    For any temporal election with static preferences, an outcome satisfies $\EJRplus$ iff it satisfies $\EJR$.
\end{lemma}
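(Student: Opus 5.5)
The plan is to prove both directions directly from Definitions~\ref{defn:jr_pjr_ejr} and~\ref{defn:ejrplus}. The one structural fact used throughout is this: under static preferences $s_{i,r}=A_i$ for all $r$, so $\bigcap_{i\in S}s_{i,r}=\bigcap_{i\in S}A_i$ does not depend on $r$. Hence a group $S$ agrees in some round if and only if it agrees in every round, and then it agrees in the full set $[\ell]$ of size $\ell$.

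\emph{EJR $\Rightarrow$ EJR+.} Let $S$ be a nonempty $(\sigma,\tau)$-cohesive group and let $r$ be a round with $\bigcap_{i\in S}s_{i,r}\neq\varnothing$. By the structural fact, $S$ agrees in all $\ell$ rounds, so we apply EJR with $t=\ell$. This gives some $i\in S$ with $\sat_i(\mathbf{o})\ge\lfloor \ell|S|/n\rfloor$. Two comparisons finish this direction. First, $\tau\le\ell$ by definition. Second, $\sigma\le|S|$, because cohesiveness requires at least $\sigma$ voters of $S$ to approve some $o_r$, and $\tau\ge1$ ensures at least one such round exists. Therefore $\lfloor \ell|S|/n\rfloor\ge\lfloor\tau\sigma/n\rfloor$, so condition~(i) of EJR+ holds.

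\emph{EJR+ $\Rightarrow$ EJR.} Let $S$ be nonempty and agree in a size-$t$ set of rounds with $t\ge1$. By the structural fact, there is a candidate $c\in\bigcap_{i\in S}A_i$. Setting $o'_r=c$ for all $r\in[\ell]$ shows that $S$ is $(|S|,\ell)$-cohesive, with $|S|\in[n]$. Now apply EJR+ with $\sigma=|S|$, $\tau=\ell$, and each round $r\in[\ell]$; every round qualifies, since the intersection is nonempty in all of them. There are two cases. If condition~(i) holds for at least one round, then some $i\in S$ has $\sat_i(\mathbf{o})\ge\lfloor \ell|S|/n\rfloor\ge\lfloor t|S|/n\rfloor$. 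Otherwise condition~(ii) holds in every round, so $o_r\in\bigcap_{i\in S}A_i$ for all $r$. Then every $i\in S$ has $\sat_i(\mathbf{o})=\ell\ge\lfloor t|S|/n\rfloor$, since $|S|\le n$. In both cases the EJR requirement for $S$ is met.

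The argument is essentially bookkeeping, and I expect no real obstacle. The only point needing care is the monotonicity step $\lfloor\tau\sigma/n\rfloor\le\lfloor\ell|S|/n\rfloor$, which rests on the inequality $\sigma\le|S|$ implied by cohesiveness.
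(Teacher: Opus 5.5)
Your proof is correct and follows essentially the same approach as the paper. For EJR $\Rightarrow$ EJR+, you apply EJR at $t=\ell$ and use $\tau\le\ell$ and $\sigma\le|S|$, and for EJR+ $\Rightarrow$ EJR, you use the (i)/(ii) case split. The one minor difference is in the second direction: the paper takes $\sigma=|S|$ and $\tau=t$, with the $t$ agreeing rounds as witnesses, so that direction holds even without static preferences, whereas you use staticity to pass to $\tau=\ell$.
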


Together, Lemmas~\ref{lem:alpha01-collapse} and~\ref{lem:static_ejr_ejr+} imply that, on the static, ``low demand'' instances used in our reductions, all axioms $\Phi \in \{\JR,\PJR,\EJR,\EJRplus\}$ impose the same feasibility constraints
(up to notational differences).

We now show that \textsc{$\Phi$-UTIL} is NP-complete even
under very strong restrictions.
The reduction is designed to isolate the algorithmic difficulty to the group-based representation constraints:
preferences are static, and each candidate is approved by only constantly many voters, so the structure of potentially binding groups is simple; nevertheless, selecting a high welfare, feasible outcome encodes an NP-complete covering problem.

\begin{theorem} \label{thm:intractable_NPc}
    Fix any $\Phi \in \{\JR,\PJR,\EJR,\EJRplus\}$.
    Then, $\Phi$-\textsc{UTIL} is NP-complete, even when voters have static preferences and each candidate is approved by at most $4$ voters. 
\end{theorem}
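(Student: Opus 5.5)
We prove NP-hardness by a reduction from \textsc{Exact Cover by 3-Sets} (X3C), restricted to instances in which every element lies in at most three sets; this restriction remains NP-complete. By Lemma~\ref{lem:static_ejr_ejr+} and Lemma~\ref{lem:alpha01-collapse}, it suffices to build a static instance in which every entitlement $\lfloor t|S|/n\rfloor$ is at most $1$, and then to argue about JR alone.

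\emph{Membership in NP.} We guess an outcome $\mathbf{o}$ and verify it. With static preferences, a group agrees in some round iff it agrees in every round, so it suffices to take $t=\ell$. Every agreeing group is a subset of the approver set of some candidate, which has size at most $4$. Hence there are only $O(m\cdot 2^4)$ groups to check, and each check is polynomial. EJR+ is handled by Lemma~\ref{lem:static_ejr_ejr+}.

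\emph{Construction.} Let the X3C instance have universe $U$ with $|U|=3q$ and sets $C_1,\dots,C_s$. We create the following voters and candidates.
\begin{itemize}
\item For each $u\in U$, an element voter $u$ and two dummy voters $d_{u,1},d_{u,2}$.
\item For each set $C_j$, a candidate $c_j$ approved exactly by the three element voters in $C_j$.
\item For each $u\in U$, a candidate $e_u$ approved by $u,d_{u,1},d_{u,2}$.
\item A popular candidate $d$ approved by four fresh filler voters.
\item Padding voters, each with a private candidate that only they approve.
\end{itemize}
Each element voter approves at most $4$ candidates. Every candidate has at most $4$ approvers.

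We add padding so that $n$ and $\ell$ satisfy $2\ell<n\le 3\ell$ and $\ell>q$. For example, take $n\ge 9q+4$ and $\ell=\lceil n/3\rceil$; adjusting $n$ by a little padding secures the strict inequality $2\ell<n$. With these parameters:
\begin{itemize}
\item every agreeing group $S$ has $|S|\le 4$, so $\lfloor \ell|S|/n\rfloor\le\lfloor 4\ell/n\rfloor\le 1$, and Lemma~\ref{lem:alpha01-collapse} applies;
\item a group has positive entitlement iff it agrees and $|S|\ge n/\ell$, and $2<n/\ell\le 3$, so this means iff $|S|\ge 3$.
\end{itemize}

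\emph{Which constraints bind.} The binding groups are the triples $\{u,d_{u,1},d_{u,2}\}$, the triples $C_j$ (as sets of element voters), and the $3$- and $4$-subsets of the filler voters. A triple $\{u,d_{u,1},d_{u,2}\}$ is represented iff some round selects $e_u$ or selects some $c_j$ with $u\in C_j$. A triple $C_j$ is represented once any of its elements is covered. The filler groups are represented once $d$ is selected at least once. Thus $\mathbf{o}$ satisfies JR, and hence every $\Phi$, iff two conditions hold: every element of $U$ is covered by a selected $c_j$ or $e_u$, and $d$ is selected at least once.

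\emph{Welfare and threshold.} Rounds spent on $d$ give welfare $4$ each, and rounds spent on any $c_j$ or $e_u$ give $3$. Selecting $d$ always dominates padding candidates, which give only $1$. If $r$ rounds are used for covering and the rest for $d$, the welfare is $4\ell-r$. Maximizing welfare is therefore the same as minimizing the number of covering picks, subject to covering all of $U$ with sets $C_j$ and singletons $\{u\}$. Each pick covers at most $3$ elements, so at least $q$ picks are needed, with equality iff an exact cover exists. Since $\ell>q$, at least one round remains for $d$. We set $B=4\ell-q$, and a feasible outcome with $\util\ge B$ exists iff the X3C instance is a yes-instance.

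\emph{Main obstacle.} The delicate step is the parameter calibration. We need $n/\ell$ to lie strictly between $2$ and $3$, so that exactly the size-$3$ triples bind while entitlements never exceed $1$. We must also check that no unintended agreeing group creates an extra constraint; for instance, a size-$2$ subset of some $c_j$'s approvers together with a dummy voter does not agree on any candidate. This is verified by enumerating the approver sets, which all have size at most $4$.
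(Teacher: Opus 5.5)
Your proposal is correct and essentially reproduces the paper's proof: the same X3C gadget (an element voter plus two dummies approving $e_u$, three-approver set candidates, a four-approver popular candidate), the same calibration $2\ell<n\le 3\ell$ with $n=9q+4$, $\ell=3q+2$, and threshold $B=4\ell-q$, and the same use of Lemmas~\ref{lem:alpha01-collapse} and~\ref{lem:static_ejr_ejr+} to collapse all four axioms. Two of your additions are unnecessary but harmless. Restricting X3C to elements lying in at most three sets only bounds voter approval degree, which the theorem does not constrain. The padding voters are also not needed, since $n=9q+4$ already gives $\ell=\lceil n/3\rceil=3q+2$ with $2\ell<n$.
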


The above result rules out polynomial-time exact optimization unless $\mathrm{P}=\mathrm{NP}$.
One might still hope for a PTAS for \textsc{$\Phi$-MaxUtil}.
We rule this out by proving APX-hardness under similarly strong restrictions.

\begin{theorem} \label{thm:intractable_APXh}
    Fix any $\Phi \in \{\text{JR, PJR, EJR, EJR+}\}$.
    Then $\Phi$-\textsc{MaxUTIL} is APX-hard, even when voters have static preferences, each voter approves at most $3$ candidates and each candidate is approved by at most $8$ voters.
\end{theorem}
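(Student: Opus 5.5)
We will prove APX-hardness by an L-reduction (equivalently, a gap-preserving reduction) from a bounded-degree APX-hard covering problem. The natural source is \textsc{Minimum Vertex Cover} on cubic graphs, or on graphs of maximum degree $3$; this problem is APX-hard. The degree bound $3$ matches the requirement that each voter approves at most $3$ candidates. The construction follows the NP-hardness reduction of Theorem~\ref{thm:intractable_NPc}, but we tune it so that welfare degrades linearly in the cover size.

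\emph{Construction.} Take a graph $G=(V,E)$ of maximum degree $3$.
\begin{itemize}
\item For every vertex $v$ we create a candidate $c_v$.
\item For every edge $e=\{u,v\}$ we create a constant-size gadget of edge voters. Each of them statically approves exactly $\{c_u,c_v\}$, plus possibly one private dummy candidate, so each approves at most $3$ candidates.
\item We add a block of ``popular'' voters who approve a high-welfare candidate $h$ that no edge voter approves.
\item We add padding voters, each approving only its own private dummy.
\end{itemize}
The parameters are chosen so that the hypothesis of Lemma~\ref{lem:alpha01-collapse} holds: every agreeing group lies inside the approver set of a single candidate, of size at most $8$, and $n \ge 8\ell$ gives $\lfloor t|S|/n\rfloor\le 1$. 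Each $c_v$ is approved by the voters of at most $3$ incident edge gadgets, which gives the bound of $8$ on approvers per candidate; the popular candidate $h$ must also be split into several copies of bounded support for the same reason. The gadget sizes are set so that exactly the edge gadgets, as groups of voters agreeing on all $\ell$ rounds, have entitlement $\lfloor \ell|S|/n\rfloor=1$, while dummies' singletons have entitlement $0$. By Lemmas~\ref{lem:alpha01-collapse} and~\ref{lem:static_ejr_ejr+}, for all $\Phi$ feasibility then means the following: for every edge, some round selects $c_u$ or $c_v$. The set of vertices whose candidates are selected is therefore a vertex cover.

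\emph{Welfare accounting.} An outcome using vertex set $C$ (a cover) spends $|C|$ rounds on vertex candidates and the remaining rounds on the popular candidates. Its welfare is therefore $W = \ell\cdot w_h - |C|(w_h - w_c)$, where $w_h$ is the support of a popular copy and $w_c$ is the support of a vertex candidate. A standard normalization is needed here. Selecting the same $c_v$ twice is never useful, and swapping any round onto a popular copy only increases welfare as long as the cover is kept. Hence $W^{\Phi}=\ell w_h - \tau(G)(w_h-w_c)$, and from any feasible $\mathbf{o}$ we can extract in polynomial time a cover of size at most $(\ell w_h - W(\mathbf{o}))/(w_h-w_c)$.

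\emph{L-reduction conditions.} We choose $\ell=\Theta(|V|)$, using just enough rounds that at least $|V|$ rounds are available. Since $\tau(G)\ge |E|/3\ge |V|/6$ for cubic graphs, we get $W^{\Phi}=O(\tau(G))$, which gives the first L-condition. The second holds with constant $1/(w_h-w_c)$, since $\tau(C_{\mathbf{o}})-\tau(G)\le (W^{\Phi}-W(\mathbf{o}))/(w_h-w_c)$.

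\emph{Main obstacle.} The hard step is balancing the numerical parameters. Three requirements pull against each other:
\begin{enumerate}
\item the degree bounds ($\le 3$ approvals per voter, $\le 8$ approvers per candidate);
\item making edge gadgets, and only edge gadgets, hit entitlement exactly $1$;
\item keeping $\ell=\Theta(|V|)$ so that welfare stays linear in $\tau(G)$.
\end{enumerate}
Together these force $n\approx \ell\cdot|S_e|$. We would first try edge gadgets of $2$ voters and $n$ chosen so that $\lfloor 2\ell/n\rfloor=1$ while $\lfloor \ell/n\rfloor=0$. We would then check that no other agreeing group qualifies, such as the up to $6$ voters sharing a $c_v$. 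If some such group does qualify, we would need to adjust sizes or add dummy singletons so that its entitlement is automatically satisfied by any cover. We would also verify that the popular copies' groups (size $\le 8$) are either entitled $0$ or are trivially satisfied whenever any round picks them, which holds as long as $\ell>|V|$.
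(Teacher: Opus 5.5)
There is a genuine gap, and it sits in the step you defer to the end. When edge voters approve $\{c_u,c_v\}$ directly, the parameters cannot be balanced at all.

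\begin{itemize}
\item Suppose an edge gadget of size $g$ has entitlement $\lfloor g\ell/n\rfloor=1$. In a cubic graph the $3g$ approvers of $c_v$ then form a cohesive group with demand $\lfloor 3g\ell/n\rfloor\ge 3$. Likewise, two gadgets meeting at $v$ form a cohesive group with demand at least $2$.
\item So the hypothesis of Lemma~\ref{lem:alpha01-collapse} is unattainable. Your own stated requirements already clash: $g\ell\ge n\ge 8\ell$ needs $g\ge 8$, whereas $3g\le 8$ needs $g\le 2$.
\item Adding padding voters, or changing $\ell$, scales every demand by the same factor. It therefore cannot bring the vertex groups down to $1$ without bringing the edge gadgets down to $0$.
\end{itemize}

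For JR the cap $\min\{1,\cdot\}$ saves you: a cover satisfies every edge voter, and hence every vertex group. For PJR and EJR (and so EJR+, by Lemma~\ref{lem:static_ejr_ejr+}) the extra demands bite. Take $g=2$ and $\ell<n\le 2\ell$.
\begin{itemize}
\item PJR requires at least three rounds selecting $c_v$, a neighbour's candidate, or a dummy of one of the six voters.
\item EJR requires one of these six voters to be satisfied at least three times.
\end{itemize}
In a bipartite cubic graph, one side is a minimum vertex cover. Selecting each of its vertices once violates both requirements. So feasible outcomes no longer correspond to vertex covers, and $W^{\Phi}=\ell w_h-\tau(G)(w_h-w_c)$ is false.

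The missing idea is to decouple the size of the edge groups from the size of the vertex groups. The paper does this as follows.
\begin{itemize}
\item Each edge group is made cohesive through an edge candidate $p_e$ approved by four voters. Only one of them, $a_{e,4}$, also approves $c_u$ and $c_v$.
\item Each vertex gets two supporters, so $w(c_v)=5$, and eight baseline voters approve $z$.
\item The parameters are $n=8(m+1)$ and $\ell=n/4$.
\end{itemize}
With these choices:
\begin{itemize}
\item each $S_e$ has demand exactly $1$;
\item every $c_v$-group has at most $5$ members and demand $\lfloor 5/4\rfloor=1$, which is met whenever every $a_{e,4}$ is satisfied;
\item the $z$-group has demand $2$, which is met trivially;
\item selecting $p_e$ is dominated, because swapping it for $c_u$ raises welfare from $4$ to $5$ while keeping $a_{e,4}$ satisfied.
\end{itemize}
Canonical outcomes then have welfare $8\ell-3k$ for a cover of size $k$. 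From there the approximation-preserving argument goes through essentially as you sketched.
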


    It is worth contrasting our computational results with those of Elkind et al.~\shortcite{elkind2024temporalelections}.
    They show NP-hardness of deciding whether there exists an \emph{individually} proportional (PROP) outcome that also maximizes utilitarian welfare.
    Our focus is different: we study \emph{group-based} temporal proportionality axioms $\JR/\PJR/\EJR/\EJRplus$, which impose fundamentally different constraints (and are in general incomparable with PROP).
    Moreover, our hardness reductions already go through under static preferences, so the intractability is not due to time-varying approvals.
    Finally, both reductions operate in settings where $\lfloor \ell/n\rfloor = 0$, so PROP is vacuous, yet
    $\JR/\PJR/\EJR/\EJRplus$ remain nontrivial due to cohesive groups—highlighting that the difficulty is inherent
    to \emph{group} representation over time.

\section{Parameterized Complexity Results} \label{sec:parameterized}
The intractability results in the previous section hold even under static preferences, and rule out polynomial-time algorithms or approximation schemes in the worst case.
Nevertheless, temporal elections in practice often exhibit additional structure.
In such settings, worst-case hardness need not preclude efficient algorithms.
Thus, we study the parameterized complexity of $\Phi$-\textsc{MaxUTIL} (see Section~\ref{sec:prelims}). 
Our goal is to identify natural structural parameters under which the problem becomes tractable, and to make explicit how the algorithmic difficulty depends on these parameters.
Throughout, we give exact algorithms (rather than approximation guarantees), thereby complementing the hardness results of Section~\ref{sec:computational}.

As a simple baseline, when the number of rounds $\ell$ is treated as a parameter, $\Phi$-\textsc{MaxUTIL} can be solved by brute-force enumeration of outcomes: we enumerate all sequences $\mathbf{o} \in P^\ell$, compute $\util(\mathbf{o})$, and retain the best $\Phi$-feasible outcome. 
Since verifying $\Phi$-feasibility of a given outcome $o$ is in XP with respect to $\ell$~\cite[Prop.~5.3]{elkind2025verifying}, enumerating all $m^\ell$ outcomes gives us an XP algorithm with respect to $\ell$: for some computable function $f$, the running time is $m^\ell \cdot (n+m)^{f(\ell)}$.
Thus, $\Phi$-MAXUTIL is in XP with respect to $\ell$. While this approach might be impractical for large $\ell$, it provides a useful reference point in settings where $\ell$ is constant/bounded.
We then turn to more meaningful structural parameters.

\subsection{Fixed $m$ under Static Preferences}
Section~\ref{sec:computational} shows that even under static preferences, $\Phi$-\textsc{UTIL} is NP-hard and $\Phi$-\textsc{MaxUTIL} is APX-hard for all $\Phi \in \{ \JR, \PJR, \EJR, \EJR+\}$; thus, tractability cannot be recovered merely by assuming static preferences. 
However, this structural assumption becomes algorithmically useful when combined with small parameters. 
In particular, if the number of candidates $m$ is fixed, an outcome is characterized (up to permuting rounds) by its candidate multiplicities (i.e., how many times each candidate is selected).
Formally, each voter $i\in N$ approves a fixed set $A_i\subseteq P$ in every round. Any outcome $\mathbf{o} \in P^\ell$ is therefore fully specified (up to permuting rounds) by its \emph{multiplicity vector} $\mathbf{x}\in\mathbb{Z}_{\ge 0}^{P}$ with $x_p := |\{r\in[\ell]: o_r = p\}|\quad\text{and}\quad \sum_{p\in P} x_p = \ell$.
Under $\mathbf{x}$, voter $i$'s satisfaction and utilitarian welfare are linear: $\sat_i(\mathbf{x})=\sum_{p\in A_i} x_p$ and 
$\util(\mathbf{x})=\sum_{p\in P} w(p)x_p$, where $w(p):=|\{i\in N: p\in A_i\}|.$
To compress the instance, we group identical approval sets into \emph{approval classes}
$\mathcal{C}:=\{A_i : i\in N\}$, and write $n_C := |\{i\in N : A_i=C\}|$ for each class
$C\in\mathcal{C}$. All voters in a class share the same satisfaction value $v_C(\mathbf{x}) := \sum_{p\in C} x_p$.

The remaining work is to encode the proportionality axiom as constraints over $\mathbf{x}$ (and a bounded number of auxiliary integer variables).
When the number of candidates $m$ is fixed, these ILPs have a number of integer variables bounded by a function of $m$ only (e.g., via at most $|\mathcal{C}|\le 2^m$ auxiliary variables), and thus are solvable in time $f(m)\cdot \mathrm{poly}(n,\ell)$ for ILPs in fixed dimension using the result of Lenstra Jr~\shortcite{lenstra1983integer}.

Although EJR implies PJR implies JR (and on static instances EJR+ is equivalent to EJR by Lemma~\ref{lem:static_ejr_ejr+}), we state separate optimization theorems because the objective is welfare maximization under constraints: weaker axioms enlarge the feasible set, so an outcome that maximizes welfare subject to EJR may be feasible for PJR and JR but not optimal for them. Moreover, the constraint structure differs qualitatively. 

This contrasts with the ``low demand'' setting used in Section~\ref{sec:computational} (Lemma~\ref{lem:alpha01-collapse}), where the three axioms coincide; here demands can exceed $1$, so the distinctions matter for optimization.

\begin{theorem}\label{thm:JR-FPT-static-m}
    Fix a temporal election $E=(P,N,\ell,(A_i)_{i\in N})$ with static
    preferences, and let $m=|P|$. Then $\JR$-MAXUTIL is
    solvable in time $f(m)\cdot \mathrm{poly}(n,m,\ell)$ for some
    computable function $f$. In particular, it is FPT with respect to
    $m$.
\end{theorem}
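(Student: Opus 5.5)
We write JR as a system of linear constraints over the multiplicity vector $\mathbf{x}$ together with a bounded number of binary indicator variables, and then apply Lenstra's algorithm for ILPs in fixed dimension. Under static preferences, a group $S$ agrees in some round if and only if $\bigcap_{i\in S}A_i\neq\varnothing$. In that case it agrees in all $\ell$ rounds, so the largest relevant $t$ is $\ell$. Since $\lfloor t|S|/n\rfloor$ is monotone in $t$, and the satisfaction requirement does not depend on $t$, JR reduces to the following condition: for every $S$ with $\bigcap_{i\in S}A_i\neq\varnothing$ and $\lfloor \ell|S|/n\rfloor\ge 1$, we need $\sat_S(\mathbf{x})\ge 1$.

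Next we reduce the exponentially many groups to a family indexed by candidates. If $S$ agrees, it is contained in $N_p:=\{i: p\in A_i\}$ for some $p\in P$. The constraint $\sat_S\ge1$ says that some selected candidate is approved by some member of $S$. The hardest subgroups to satisfy are the large ones whose members are all unsatisfied. Concretely, JR is violated if and only if there exists $p\in P$ such that the set $U_p(\mathbf{x}):=\{i\in N_p: v_{A_i}(\mathbf{x})=0\}$ has size at least $n/\ell$. For the forward direction, a violating $S\subseteq N_p$ has $\sat_S=0$, so $S\subseteq U_p$, and $|S|\ge n/\ell$. For the converse, $U_p$ itself is a violating group.

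So JR is equivalent to requiring $\sum_{C\in\mathcal{C}:\,p\in C} n_C\,z_C < n/\ell$ for every $p\in P$, where $z_C\in\{0,1\}$ indicates that class $C$ is unsatisfied, meaning $v_C(\mathbf{x})=0$. Equivalently, $\ell\sum_{C\ni p} n_C z_C\le n-1$. We link $z_C$ to $\mathbf{x}$ by the constraint $v_C(\mathbf{x})\ge 1-z_C$. This makes $z_C=1$ mandatory whenever $C$ is unsatisfied. Setting $z_C=1$ unnecessarily only tightens the JR constraints and never helps, so any feasible $(\mathbf{x},\mathbf{z})$ yields a JR outcome, and every JR outcome has a feasible $\mathbf{z}$, namely the true indicators.

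The resulting ILP is: maximize $\sum_p w(p)x_p$ subject to $\sum_p x_p=\ell$, $x\ge0$ integer, the linking constraints, and the $m$ JR constraints. It has at most $m+2^m$ variables and $O(m+2^m)$ constraints, with coefficients polynomial in $n$ and $\ell$. Lenstra's algorithm, with standard extensions to optimization, solves it in time $f(m)\cdot\mathrm{poly}(\log n,\log\ell)$. Computing $w(p)$, $\mathcal{C}$ and $n_C$ takes $\mathrm{poly}(n,m)$ time. Finally, we convert $\mathbf{x}$ into any sequence $\mathbf{o}$ with these multiplicities in $O(\ell)$ time.

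The main obstacle is the second step: establishing that the exponential family of group constraints collapses to the $m$ per-candidate constraints on the unsatisfied mass. This needs care with the floor threshold, using $\lfloor \ell|S|/n\rfloor\ge1 \iff |S|\ge n/\ell$, and with the one-directional linking of $z_C$ to $\mathbf{x}$.
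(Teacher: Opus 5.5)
Your proposal is correct and follows essentially the same route as the paper: under static preferences, reduce JR to the per-candidate condition that the unsatisfied approvers of each $p$ number fewer than $n/\ell$ (the paper writes this as $|U_p(\mathbf{x})|\le \lceil n/\ell\rceil-1$), encode it with one binary indicator per approval class, and solve the resulting ILP with at most $m+2^m$ variables via Lenstra. The only difference is cosmetic: you link the indicators to $\mathbf{x}$ in one direction only, which is valid because overstating the unsatisfied mass only tightens the constraints, whereas the paper pins the indicator exactly via $y_C\le\sum_{p\in C}x_p\le \ell y_C$.
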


\begin{theorem}\label{thm:PJR-FPT-static-m}
Fix a temporal election with static preferences $A_1,\dots,A_n\subseteq P$ and $\ell$ rounds.
Then PJR\textsc{-MaxUTIL} is solvable in time $f(m)\cdot \mathrm{poly}(n,\ell)$ for some computable function $f$. In particular, it is FPT with respect to $m$.
\end{theorem}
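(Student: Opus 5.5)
We solve PJR-MaxUTIL with an integer linear program whose number of variables depends only on $m$, and then invoke Lenstra's fixed-dimension ILP algorithm, as in the proof of Theorem~\ref{thm:JR-FPT-static-m}. We work with the multiplicity vector $\mathbf{x}\in\mathbb{Z}_{\ge 0}^P$, $\sum_p x_p=\ell$, and maximize the linear objective $\sum_p w(p)x_p$. The task is to express PJR as finitely many linear constraints in $\mathbf{x}$, with the number of constraints bounded by a function of $m$ times $\mathrm{poly}(n,\ell)$.

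\textbf{Step 1: reduce the PJR constraints.} Under static preferences, a group $S$ agrees in some round iff $\bigcap_{i\in S}A_i\neq\varnothing$. In that case it agrees in all $\ell$ rounds, so the binding choice is $t=\ell$. Hence PJR is equivalent to the following: for every nonempty $S$ with $\bigcap_{i\in S}A_i\ne\varnothing$,
\[
\sum_{p\in \bigcup_{i\in S}A_i} x_p \ge \lfloor \ell|S|/n\rfloor .
\]
There are exponentially many such $S$, so we compress them. Fix a set $U\subseteq P$ and a candidate $c\in P$. Among all groups whose members all approve $c$ and whose union of approval sets is contained in $U$, the strongest constraint comes from the largest one, namely $S_{c,U}=\{i: c\in A_i\subseteq U\}$. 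Any group $S$ with common candidate $c$ and union exactly $U$ is a subset of $S_{c,U}$, so its demand is at most that of $S_{c,U}$. Moreover, the union of $S_{c,U}$ is some $U'\subseteq U$, and $\sum_{p\in U'}x_p\le\sum_{p\in U}x_p$. It therefore suffices to impose, for every pair $(c,U)$ with $S_{c,U}\neq\varnothing$, the constraint
\[
\sum_{p\in \bigcup_{i\in S_{c,U}}A_i}x_p\ \ge\ \lfloor \ell |S_{c,U}|/n\rfloor .
\]
These constraints are also necessary, since each $S_{c,U}$ is itself a valid cohesive group. This yields at most $m2^m$ linear constraints in the $m$ variables $\mathbf{x}$. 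The sizes $|S_{c,U}|$ are computed in $\mathrm{poly}(n,m)$ time per pair, and no auxiliary variables are needed.

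\textbf{Step 2: solve.} The resulting ILP has dimension $m$, at most $m2^m+m+1$ constraints, and coefficients bounded by $\max(n,\ell)$. Lenstra's algorithm solves it in time $f(m)\cdot\mathrm{poly}(\log n,\log\ell)$; constructing it takes $2^m m\cdot\mathrm{poly}(n)$ time. From an optimal $\mathbf{x}$ we output any sequence realizing these multiplicities. Welfare and PJR depend only on $\mathbf{x}$ in the static setting, so this sequence is optimal.

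\textbf{Main obstacle.} The delicate step is the dominance argument in Step 1: showing that the maximal groups $S_{c,U}$ capture every PJR constraint. This requires both monotonicities above to hold, namely that the demand $\lfloor\ell|S|/n\rfloor$ is nondecreasing in $|S|$, and that the satisfaction $\sat_S$ is nondecreasing in $S$ because the union of approval sets only grows. It also requires the observation that, under static preferences, $t=\ell$ is always available, so that smaller values of $t$ never give a stronger constraint.
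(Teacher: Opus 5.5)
Your proposal is correct and follows essentially the same route as the paper. Both arguments reduce static PJR to the case $t=\ell$, compress all cohesive groups to the maximal ones $\{i : c\in A_i\subseteq U\}$, and solve the resulting $m$-variable ILP with Lenstra's algorithm. The only cosmetic difference is that the paper aggregates over $c$ into $g(U)=\max_{c\in U}|\{i : c\in A_i\subseteq U\}|$ and imposes $2^m$ constraints on $\sum_{p\in U}x_p$. You instead keep one constraint per pair $(c,U)$, with left-hand side over the union of $S_{c,U}$ (which in fact equals $U$ whenever $U$ is the union of a cohesive group), giving at most $m2^m$ constraints.
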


\begin{theorem}\label{thm:fpt-m-candidates-ejr}
    Fix a temporal election with static preferences $A_1,\dots,A_n\subseteq P$ and $\ell$ rounds. 
    Then EJR\textsc{-MaxUtil} (and thus EJR+-\textsc{MaxUtil}) is solvable in time $f(m)\cdot \mathrm{poly}(n,\ell)$ for some computable function $f$. In particular, it is FPT with respect to $m$.
\end{theorem}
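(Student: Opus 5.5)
We encode EJR-MaxUtil on static instances as an ILP whose number of integer variables depends only on $m$, and then solve it with Lenstra's algorithm. By Lemma~\ref{lem:static_ejr_ejr+}, the same algorithm also solves EJR+-MaxUtil.

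\textbf{Simplifying the constraints.} Under static preferences, a group $S$ agrees in a round if and only if $\bigcap_{i\in S}A_i\neq\varnothing$, and then it agrees in every round. So the binding value of $t$ is $t=\ell$. A smaller $t$ gives a smaller floor, so those constraints are implied.

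Satisfaction depends only on a voter's approval class. Hence EJR fails if and only if there is a candidate $p$ and a group $S$ with $S\subseteq N_p:=\{i:p\in A_i\}$ such that every $i\in S$ has $v_{A_i}(\mathbf x)<\lfloor \ell|S|/n\rfloor$. The largest such $S$ for a given threshold is what matters. So the condition is equivalent to the following. For every candidate $p$ and every integer $k\ge 1$, let $D_{p,k}$ be the number of voters in $N_p$ whose class $C$ has $v_C(\mathbf x)<k$. Then we need $\lfloor \ell D_{p,k}/n\rfloor< k$.

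Only finitely many $k$ matter: $k\le \lfloor \ell n/n\rfloor=\ell$. That is still too many thresholds to enumerate naively inside the ILP. To handle this, we sort the classes by satisfaction.

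\textbf{Guessing the class ordering.} We enumerate all weak orderings $\pi$ of the at most $2^m$ classes by their satisfaction values $v_C$. There are $f(m)$ of them. For a fixed $\pi$, we add linear constraints $v_C\le v_{C'}$ or $v_C=v_{C'}$ that enforce $\pi$.

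Fix $p$ and a prefix $Q$ of $\pi$ restricted to classes containing $p$, meaning the classes with the smallest satisfactions. Let $d_Q=\sum_{C\in Q,\,p\in C} n_C$; this is a constant. Let $C^+$ be the least-satisfied class not in $Q$. Failure with exactly $Q$ as the deficient set means $\max_{C\in Q}v_C<k\le v_{C^+}$ and $k\le \lfloor \ell d_Q/n\rfloor$. The group $S$ may be any subset of $N_p$, and the voters with the smallest satisfaction are always best. So EJR is equivalent to the following linear constraint, one for each $p$ and each prefix $Q$:
\[
\max_{C\in Q} v_C(\mathbf x)\;\ge\; \lfloor \ell d_Q/n\rfloor .
\]
Because $\pi$ fixes which class in $Q$ is the maximum, this is linear.

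The justification runs as follows. Choose $S$ to be the $s$ least-satisfied voters of $N_p$, with $s$ ranging over class boundaries. A partial class is never worse to complete, since completing it raises $s$ without raising the maximum satisfaction. The constraint for the empty prefix is vacuous.

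\textbf{Solving and assembling the outcome.} This yields at most $m\cdot 2^m$ constraints over $m+2^m$ variables $x_p$ and $v_C$. We also impose $\sum_p x_p=\ell$, $x\ge 0$, and $v_C=\sum_{p\in C}x_p$, and we maximize $\sum_p w(p)x_p$. Lenstra's algorithm solves each ILP in time $f(m)\,\mathrm{poly}(\log n,\log\ell)$. We take the best solution over all $\pi$ and output any sequence with multiplicity vector $\mathbf x$. Building the instance takes $\mathrm{poly}(n)$ time, so the total running time is $f(m)\,\mathrm{poly}(n,\ell)$.

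\textbf{Main obstacle.} The main risk is the prefix reduction: proving that worst-case violating groups can be taken as unions of the least-satisfied whole classes within some $N_p$. We also need to handle ties in $v_C$ correctly, which is why we use weak orderings.
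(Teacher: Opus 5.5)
Your proposal is correct and takes essentially the same route as the paper. Both arguments reduce to $t=\ell$, characterize EJR by per-candidate conditions on the least-satisfied voters of $N(c)$, and enumerate orderings of the at most $2^m$ approval classes to linearize these conditions into boundary constraints $v_{\pi_c(j)}\ge\lfloor \ell H_{c,j}/n\rfloor$, solving each resulting ILP with Lenstra's algorithm. Using weak orderings rather than permutations with non-strict inequalities, and phrasing the equivalence through deficient sets $D_{p,k}$ rather than order statistics $b_{c,s}$, are only cosmetic differences.
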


\subsection{Fixed Voter Types}
We now move beyond static preferences while retaining a common form of structure in applications: although approvals may vary over time, the electorate often consists of a small number of \emph{behavioral types} (e.g., user segments) whose members react similarly in every round. 
We show that this structure gives us fixed-parameter tractability for welfare maximization under temporal proportionality.

Formally, let $T$ be a set of voter types and let $\kappa := |T|$. The voters are partitioned into disjoint sets $(N_\theta)_{\theta\in T}$, where $n_\theta := |N_\theta|$ and $\sum_{\theta\in T} n_\theta = n$. For each round $r\in[\ell]$, voters of type $\theta$ share a common approval set $A_{\theta,r}\subseteq P$; i.e., for every $i\in N_\theta$ we have $s_{i,r}=A_{\theta,r}$.

This representation allows us to work at the type level. For an outcome $\mathbf{o}=(o_1,\dots,o_\ell)$ define the (common) satisfaction of type $\theta$ by $\mathrm{sat}_\theta(\mathbf{o}) := \left|\{r\in[\ell] : o_r \in A_{\theta,r}\}\right|$.
For any voter $i\in N_\theta$, we have $\sat_i(\mathbf{o})=\sat_\theta(\mathbf{o})$.
Moreover, utilitarian welfare can be written compactly as
$\util(\mathbf{o}) = \sum_{r=1}^\ell \sum_{\theta\in T :  o_r \in A_{\theta,r}} n_\theta$.
It will be convenient to encode the effect of choosing a candidate in round $r$ by the set of approving types. 
For each round $r$ and candidate $p\in P$, let $X_r(p) := \{\theta\in T : p \in A_{\theta,r}\}\subseteq T$, and $w(X) := \sum_{\theta\in X} n_\theta$.
Then selecting $p$ in round $r$ contributes $w(X_r(p))$ to welfare and increases $\mathrm{sat}_\theta(\cdot)$ by $1$ exactly for $\theta\in X_r(p)$. 
Importantly, for a fixed round $r$, the welfare and all proportionality-relevant effects depend only on the \emph{type pattern} $X_r(p)$, not on the identity of $p$. Thus in round $r$ we may compress candidates into the family $\mathcal{X}_r := \{X_r(p) : p\in P\}\subseteq 2^T$, 
keeping one representative candidate per pattern (so $|\mathcal{X}_r|\le 2^\kappa$).

We show that this structure leads to efficient algorithms even when preferences can vary across rounds.
For JR, we obtain a dynamic program whose running time is exponential only in the number of types $\kappa$. 

\begin{theorem} \label{thm:votertype_jr}
    JR-\textsc{MaxUTIL} is solvable in time $f(\kappa)\cdot \mathrm{poly}(n,m,\ell)$ for some computable function $f$. In particular, it is FPT with respect to $\kappa$.
\end{theorem}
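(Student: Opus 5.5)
The plan is to reduce JR, in the voter-type model, to a condition on the \emph{set of types that are ever satisfied}, and then to optimize welfare by a dynamic program over rounds whose state is that set.

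\textbf{Step 1: only unions of whole types matter.} For a nonempty $S\subseteq N$, let $Y(S):=\{\theta\in T: S\cap N_\theta\neq\varnothing\}$, and write $N_Y:=\bigcup_{\theta\in Y}N_\theta$. Since $s_{i,r}=A_{\theta,r}$ for $i\in N_\theta$, the group $S$ agrees in round $r$ iff $\bigcap_{\theta\in Y(S)}A_{\theta,r}\neq\varnothing$. This depends only on $Y(S)$. Likewise $\sat_S(\mathbf{o})\ge 1$ iff some $\theta\in Y(S)$ has $\sat_\theta(\mathbf{o})\ge 1$. For $Y\subseteq T$, let $t_Y$ be the number of rounds in which the types in $Y$ agree. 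Since agreement of $S$ depends only on $Y(S)$, the maximal $t$ for which $S$ agrees in a size-$t$ set of rounds is $t_{Y(S)}$, and $\lfloor t|S|/n\rfloor$ is monotone in $t$. We have $S\subseteq N_{Y(S)}$, the two sets agree in exactly the same rounds, and they have the same "ever satisfied" status. So the JR constraint for $N_{Y(S)}$ implies the one for $S$. Hence $\mathbf{o}$ provides JR iff, for every nonempty $Y\subseteq T$ that is \emph{demanding}, meaning $t_Y\cdot n_Y\ge n$ where $n_Y:=\sum_{\theta\in Y}n_\theta$, the set $Z(\mathbf{o}):=\{\theta:\sat_\theta(\mathbf{o})\ge 1\}$ intersects $Y$. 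The family $\mathcal{D}$ of demanding $Y$ is computed in $2^\kappa\cdot\ell\cdot\mathrm{poly}(m)$ time by intersecting the at most $\kappa$ approval sets per round.

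\textbf{Step 2: dynamic program.} Process rounds $r=1,\dots,\ell$ with state $Z\subseteq T$, the set of types satisfied at least once so far. Let $F_r(Z)$ be the maximum welfare of a prefix $o_1,\dots,o_r$ whose satisfied-type set is exactly $Z$; set $F_0(\varnothing)=0$ and all other entries to $-\infty$. The transition uses the compressed patterns $X\in\mathcal{X}_r$:
\[F_r(Z\cup X)\gets\max\bigl(F_r(Z\cup X),\,F_{r-1}(Z)+w(X)\bigr).\]
This is valid because welfare contributions are additive across rounds and depend only on the pattern. Finally, return $\max\{F_\ell(Z): Z\cap Y\neq\varnothing\ \forall Y\in\mathcal{D}\}$ and recover an outcome by standard backpointers, mapping each chosen pattern to its representative candidate. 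Building $\mathcal{X}_r$ costs $\mathrm{poly}(m,\kappa)$ per round. The DP then runs in $O(\ell\cdot 2^\kappa\cdot 2^\kappa)$ time, so the total is $4^\kappa\cdot\mathrm{poly}(n,m,\ell)$.

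\textbf{Correctness and main obstacle.} Correctness follows because every outcome induces a path through the DP with the same welfare and final state, and conversely. The only delicate point is Step 1: justifying that it suffices to check the JR constraint on the type-closed groups $N_Y$. The argument there rests on three observations: (i) agreement and satisfaction are type-determined; (ii) enlarging $S$ to $N_{Y(S)}$ can only increase $\lfloor t|S|/n\rfloor$ while leaving the set of agreeing rounds unchanged; and (iii) JR demands at most $1$, so only "ever satisfied" matters, which is why $2^\kappa$ states suffice. The same collapse would fail for PJR/EJR, where the demands exceed one.
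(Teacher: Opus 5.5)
Your proposal is correct and follows essentially the same route as the paper. The paper likewise reduces JR to requiring that the set of satisfied types hit every JR-relevant type set, which matches your ``demanding'' sets, since the maximal agreement length of $S_U$ is exactly your $t_Y$. It then runs the same $O(\ell\cdot 4^\kappa)$ dynamic program over rounds, with the state being the set of satisfied types and transitions over the compressed type patterns.
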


For EJR we need to account for \emph{how many times} each type is satisfied, since demands scale with group size and cohesion frequency. 
Let $\gamma_U := \left|\left\{r\in[\ell] : \bigcap_{\theta\in U} A_{\theta,r}\neq\emptyset\right\}\right|$ be the number of rounds in which all types in $U$ ``agree'' (i.e., have a unanimously approved candidate in that round).
Then, define for each $U \subseteq T$ the type set $U$ induces an integer demand $d_U:=\left\lfloor \gamma_U\cdot w(U)/n\right\rfloor$, and $D := \max_{U\subseteq T, U \neq \varnothing} d_U$.
Note that $D\le \ell$ (since $d_U\le \gamma_U\le \ell$). 
Under voter types, EJR can be expressed purely at the type level: for every $U$ with $d_U>0$, at least one type $\theta\in U$ must achieve satisfaction at least $d_U$.
Then, we get fixed-parameter tractability with respect to $\kappa$ and this demand parameter $D$.

\begin{theorem}\label{thm:ejr-maxutil-types}
    EJR\textsc{-MaxUTIL} is solvable in time $f(\kappa,D)\cdot \mathrm{poly}(n,m,\ell)$ for some computable function $f$. In particular, it is FPT with respect to $\kappa + D$.
\end{theorem}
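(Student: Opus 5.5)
We first reduce the EJR constraint to the type level. A group $S$ that agrees in $t$ rounds can be enlarged to the union of the full type classes meeting $S$. Let $U$ be the set of types meeting $S$. The enlarged group agrees in exactly $\gamma_U \ge t$ rounds, since agreement depends only on the types present. It also has $w(U)\ge |S|$, so its demand $d_U$ is at least $\lfloor t|S|/n\rfloor$. Satisfaction of a voter depends only on its type, and the types represented in the enlarged group are exactly those in $S$. Hence EJR for $\mathbf{o}$ is equivalent to the following: for every nonempty $U\subseteq T$ with $d_U>0$, some $\theta\in U$ has $\sat_\theta(\mathbf{o})\ge d_U$. All $\gamma_U$, and hence all $d_U$ and $D$, can be computed in time $2^\kappa\cdot\mathrm{poly}(n,m,\ell)$ by scanning rounds and testing pairwise intersections of approval sets.

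Next, we observe that the constraint only depends on the truncated satisfaction vector $(\min(\sat_\theta(\mathbf{o}),D))_{\theta\in T}\in\{0,\dots,D\}^T$. This is because every $d_U\le D$, so the test $\sat_\theta\ge d_U$ is unaffected by capping at $D$. We then run a dynamic program over rounds. The state after round $r$ is the capped satisfaction vector $\mathbf{c}\in\{0,\dots,D\}^{\kappa}$, and the value is the maximum welfare of a partial outcome $o_1,\dots,o_r$ reaching that state. In round $r$ we branch over the compressed patterns $X\in\mathcal{X}_r$, keeping one representative candidate per pattern, so there are at most $2^\kappa$ patterns. Choosing $X$ adds $w(X)$ to the welfare and increments $c_\theta$ (capped at $D$) exactly for $\theta\in X$. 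Welfare is additive across rounds and the capped state fully determines future feasibility, so the standard exchange argument shows optimal substructure. At the end we take the best value among final states $\mathbf{c}$ that satisfy all type-level constraints, and recover the outcome by backtracking. The running time is $\ell\cdot (D+1)^\kappa\cdot 2^\kappa\cdot\mathrm{poly}(n,m)$, plus the preprocessing that builds $\mathcal{X}_r$ in $\mathrm{poly}(n,m,\ell)$ time. This has the form $f(\kappa,D)\cdot\mathrm{poly}(n,m,\ell)$.

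The main point requiring care is the first step: the reduction from arbitrary voter groups to type sets. The argument needs the facts that enlarging $S$ to full type classes never decreases the demand, and that it preserves the set of satisfaction values available among its members. Conversely, every type set $U$ is itself realized by the group $\bigcup_{\theta\in U}N_\theta$, which agrees in exactly $\gamma_U$ rounds. The DP itself is routine once the capping observation is in place. We also need that an EJR outcome always exists, as shown by Chandak et al., so the maximum is taken over a nonempty set.
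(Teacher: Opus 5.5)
Your proposal is correct and follows essentially the same route as the paper. Both reduce EJR to the type-level condition that every nonempty $U\subseteq T$ with $d_U>0$ contains a type with $\sat_\theta(\mathbf{o})\ge d_U$, by enlarging $S$ to $\bigcup_{\theta\in U}N_\theta$, which agrees in exactly the $\gamma_U\ge t$ rounds; both then run a DP over rounds on satisfaction vectors capped at $D$, with one representative candidate per type pattern. One small fix: computing $\gamma_U$ requires testing whether the full intersection $\bigcap_{\theta\in U}A_{\theta,r}$ is nonempty, since pairwise intersections do not suffice when $|U|\ge 3$, but this remains polynomial per $U$ and round.
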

The parameter $D$ captures the ``largest proportional claim'' any cohesive union of types can generate; when no type set can demand many satisfied rounds (e.g., because unanimous agreement is rare), the state space remains small.

Since EJR implies PJR, the DP in Theorem~\ref{thm:ejr-maxutil-types} also outputs a PJR-feasible outcome. However, because PJR is strictly weaker than EJR in general, this outcome need not be welfare-optimal among all PJR-feasible outcomes.

The above DPs scale linearly with $\ell$. In many temporal settings, however, the approval pattern repeats: many rounds share the same ``profile'' across types (e.g., weekly cycles). We can exploit this further by grouping rounds into a small number of distinct profiles.
Two rounds $r,r'\in[\ell]$ have the same profile if $A_{\theta,r}=A_{\theta,r'}$ for every $\theta\in T$. 
Then, we get the following result.

\begin{proposition} \label{prop:votertype_pjr}
    On instances with $\kappa$ voter types and $q$ round profiles,
    PJR\textsc{-MaxUTIL} is solvable in time $f(\kappa,q)\cdot \mathrm{poly}(n,\ell,m)$ for a computable function $f$. 
    In particular, it is FPT with respect to $(\kappa,q)$.
\end{proposition}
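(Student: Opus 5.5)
Because the outcome can be reduced to per-profile counts, the plan is to write PJR-\textsc{MaxUTIL} as an integer linear program whose dimension depends only on $\kappa$ and $q$, and then apply Lenstra's fixed-dimension algorithm, as the paper already does for the static case with fixed $m$.

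\textbf{Step 1: compress candidates and rounds.} Within one profile $j\in[q]$, every round has the same family of type patterns $\mathcal{X}^{(j)}\subseteq 2^T$, and $|\mathcal{X}^{(j)}|\le 2^\kappa$. The order of rounds does not matter for welfare or for any $\sat_S(\cdot)$, since both are counts. So an outcome is determined, up to equivalence, by integers $y_{j,X}\ge 0$ for $j\in[q]$ and $X\in\mathcal{X}^{(j)}$. Here $y_{j,X}$ is the number of rounds of profile $j$ whose chosen candidate has pattern $X$, subject to $\sum_X y_{j,X}=\ell_j$, where $\ell_j$ is the number of rounds with profile $j$. This gives at most $q2^\kappa$ variables. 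The objective is linear: $\sum_{j,X} w(X)\,y_{j,X}$.

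\textbf{Step 2: encode PJR.} A group $S$ touches some set of types $U(S)$. Its satisfaction $\sat_S(\mathbf{o})$ equals the number of rounds whose chosen pattern meets $U(S)$, namely $\sum_{j}\sum_{X\cap U\neq\varnothing} y_{j,X}$. This depends only on $U=U(S)$. Members of $S$ agree in round $r$ iff the types in $U$ agree there, so the maximum cohesion $t$ for $S$ is $\gamma_U$. The requirement $\lfloor t|S|/n\rfloor$ is monotone in both $t$ and $|S|$. Hence the binding constraint for a given $U$ takes $|S|=w(U)$, i.e.\ all voters of those types, and $t=\gamma_U$.

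I will argue that every PJR constraint is implied by the one for $U(S)$ with full types. Such a constraint has the same left-hand side and a weakly larger right-hand side. A smaller $t$ or a smaller $|S|$ only weakens the demand. The constraint family therefore becomes
\[
\sum_{j}\sum_{X\in\mathcal{X}^{(j)}:\,X\cap U\neq\varnothing} y_{j,X}\;\ge\;\Bigl\lfloor \gamma_U\, w(U)/n\Bigr\rfloor \qquad \text{for all nonempty } U\subseteq T.
\]
This gives $2^\kappa$ linear constraints. Each $\gamma_U$ is computed in $\mathrm{poly}(\ell,m,\kappa)$ time, or as a sum over profiles.

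\textbf{Step 3: solve and reconstruct.} The ILP has dimension at most $q2^\kappa$ and $O(q+2^\kappa)$ constraints, with coefficients bounded by $n$ and $\ell$. Lenstra's algorithm therefore solves it in time $f(\kappa,q)\cdot\mathrm{poly}(\log n,\log\ell)$. To build the outcome, assign to the rounds of profile $j$ representative candidates with pattern $X$, exactly $y_{j,X}$ times each. Identifying profiles and patterns costs $\mathrm{poly}(n,m,\ell)$.

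The main obstacle is the monotonicity reduction in Step 2. It must be checked that PJR's union-satisfaction depends only on the set of types present. It must also be checked that taking every voter of each type in $U$ never removes cohesion: voters of the same type have identical approvals, so adding them preserves agreement rounds. With these two facts, the reduction from exponentially many voter groups to $2^\kappa$ type sets goes through.
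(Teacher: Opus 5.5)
Your proposal is correct and follows essentially the same route as the paper. Both use count variables $y_{j,X}$ for each (profile, type-pattern) pair, reduce PJR to the $2^\kappa$ constraints $\sat_{S_U}(\mathbf{o})\ge \lfloor \gamma_U w(U)/n\rfloor$ (because a group's per-round unions and intersections depend only on its set of types), and then apply Lenstra and rebuild the outcome from representative candidates.
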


\noindent
The same profile-based variable scheme can also give us ILP formulations (and hence FPT algorithms parameterized by $(\kappa,q)$) for \textsc{JR-MaxUTIL} and \textsc{EJR-MaxUTIL} by adapting the constraints to the corresponding feasibility conditions.

Section~4 showed that static preferences alone do not restore tractability. In contrast, bounding the diversity of \emph{temporal behavior}—via a small number of voter types and, optionally, a small number of round profiles—does lead to exact FPT algorithms for welfare maximization under temporal proportionality. This pinpoints temporal structure, rather than time-invariance per se, as a main driver of algorithmic tractability in temporal voting.
Together, these results show that although $\Phi$-\textsc{MaxUTIL} is intractable in full generality, it becomes efficiently solvable in a range of structured settings that naturally arise in temporal voting settings. This delineates a clear boundary between worst-case hardness and practically relevant tractability.

\section{Conclusion}
In this work, we investigated proportional representation in temporal voting
through a quantitative \emph{price of proportional representation} framework,
measuring the worst-case loss in utilitarian welfare incurred by enforcing
JR/PJR/EJR/EJR+. Our bounds make the welfare-proportionality tension in this
setting precise: enforcing proportionality can cause a growing, yet sublinear,
welfare loss, and the magnitude of this loss depends sharply on both the axiom
and the time horizon. In particular, the cost of JR vanishes on long horizons:
its price approaches $1$ as the number of rounds grows large relative to the
number of voters. By contrast, stronger notions such as PJR/EJR/EJR+ can still
incur an $\Omega(\sqrt n)$ loss even when $\ell$ is only a constant-factor
multiple of $n$.

On the algorithmic side, we showed that maximizing utilitarian welfare subject to these proportionality constraints is NP-complete, and APX-hard, even under structural restrictions such as static preferences and bounded approval degrees. 
At the same time, we identified structured regimes in which optimal outcomes can be computed efficiently via fixed-parameter methods, including parameterizations by the number of candidates, the number of voter types, and the number of distinct round profiles. 

Our work naturally gives rise to several promising avenues for further work.
One direction is to refine the quantitative bounds, particularly for stronger proportionality notions such as PJR/EJR/EJR+.
Such refinements would help clarify whether the observed separations are inherent or driven by worst-case constructions.
A second direction is to extend the analysis beyond utilitarian welfare. Studying the price of proportionality with respect to alternative objectives (such as egalitarian or Nash welfare) may uncover qualitatively different trade-offs and lead to new insights into how proportionality interacts with social welfare under richer normative criteria.

\bibliographystyle{named}
\bibliography{abb,bib}

\clearpage 
\appendix
\begin{center}
\Large
\textbf{Appendix}
\end{center}

\vspace{2mm}

\section{Further Related Work}

\paragraph{Temporal allocation and fair division.}
Our model is also related to temporal allocation and fair division problems, where a sequence of decisions must be evaluated not only round by round but also in terms of the cumulative guarantees it provides over time. In temporal slot assignment, Elkind et al.~\shortcite{elkind2022temporalslot} study assignment of slots and consider proportionality-type objectives across the whole horizon; this is close in spirit to our setting, since selecting an alternative in each round can be viewed as allocating a slot. 
However, in our model the chosen outcome is a public decision observed by all voters, and proportionality is imposed through group-based voting axioms rather than through assignment-specific fairness criteria. 
A related line of work in temporal fair division studies how fairness evolves when indivisible items arrive sequentially: Elkind et al.~\shortcite{elkind2025temporalfairdivision} analyze cumulative approximate envy-freeness, while Neoh et al.~\shortcite{neoh2025online} and Choo et al.~\shortcite{choo2025approxproponline} investigate online fair division with predictions, and approximate proportionality guarantees. These works share with ours the idea that temporal structure can make fairness and efficiency interact in nontrivial ways, but they typically concern private allocations and online arrival models. Finally, Lim et al.~\shortcite{lim2025} study repeated matching under a maximin objective, providing another example where repeated decisions create intertemporal fairness-efficiency tradeoffs. These papers motivate the broader temporal decision-making perspective, while our contribution is to quantify and compute the welfare cost of enforcing JR/PJR/EJR/EJR+ in the temporal voting model.

\section{Omitted Proofs in Section~\ref{sec:price}}

\subsection{Proof of Proposition~\ref{prop:sqrt-lb}}
    Fix $\ell \ge 1$ and let $n := \ell$ and $k := \lceil \sqrt{\ell}\rceil$.
    Partition the voters into two disjoint sets $C, D$ with $|C| = k$ and $|D| = n-k = \ell-k$.
    
    Define the candidate set
    \[
    P := \{z\}\ \uplus\ \{x_{i,r} : i \in D,\ r \in [\ell]\}\ \uplus\ \{y_{c,r} : c \in C,\ r \in [\ell]\},
    \]
    where all symbols denote distinct candidates.
    Define the approval sets as follows for each round $r \in [\ell]$:
    \begin{itemize}
        \item $s_{c,r} := \{z,\, y_{c,r}\}$ for each $c \in C$, and
        \item $s_{i,r} := \{x_{i,r}\}$ for each $i \in D$.
    \end{itemize}
    Note that $s_{i,r}\neq \varnothing$ for all $i\in N$ and all $r\in[\ell]$, so $E\in \mathcal{E}_{\ge 1}$.
    
    Fix any round $r\in[\ell]$. Candidate $z$ is approved by exactly the $k$ voters in $C$.
    Every other candidate is approved by \emph{at most} one voter in round $r$
    (in particular, $x_{i,r}$ is approved only by voter $i$, $y_{c,r}$ only by voter $c$,
    and candidates with mismatched indices may be approved by no one in that round).
    Thus, the maximum welfare contribution in round $r$ is $k$, attained by selecting $z$.
    Therefore the unconstrained optimal outcome is $\mathbf{o}^*=(z,\ldots,z)$ and
    \[
    \util^*(E)=k\ell.
    \]
    
    Let $\mathbf{o}$ be any outcome that provides $\Phi$.
    Note that a $\Phi$-feasible outcome exists: select $x_{i,r}$ for each $i\in D$ in a distinct round (thus satisfying every $i\in D$ once), and select $z$ in the remaining $k$ rounds (thus satisfying all $c\in C$ many times). Since the only cohesive groups are subsets of $C$ (and singletons in $D$), this outcome satisfies JR/PJR/EJR, and also EJR+.
    
    Fix a voter $i\in N$ and consider the singleton group $S:=\{i\}$.
    Since $E \in \mathcal{E}_{\geq 1}$, $S$ agrees in every round (since
    $\bigcap_{j\in S} s_{j,r}=s_{i,r}\neq \varnothing$ for all $r$), and thus agrees in a size-$\ell$ subset of rounds. 
    As $n=\ell$, we have
    \begin{equation*}
        \left\lfloor \frac{\ell\cdot |S|}{n}\right\rfloor=\left\lfloor \frac{\ell}{\ell}\right\rfloor = 1.
    \end{equation*}
    If $\Phi=\JR$, then $\sat_i(\mathbf{o})\geq \min\{1,\lfloor \ell/n\rfloor\}=1$.
    If $\Phi=\PJR$, then $\sat_i(\mathbf{o})\geq \lfloor \ell/n\rfloor=1$.
    If $\Phi=\EJR$, then $\sat_i(\mathbf{o})\ge 1$.
    
    Finally, suppose $\Phi=\EJRplus$. 
    Set $\sigma:=1$ and $\tau:=\ell$.
    For each round $r\in[\ell]$, pick an arbitrary candidate $p_r\in s_{i,r}$ (possible since $s_{i,r}\neq \varnothing$),
    let $R:=[\ell]$, and let $\tilde o_R:=(p_r)_{r\in R}$. Then $S$ is $(\sigma,\tau)$-cohesive.
    Since $\bigcap_{j\in S} s_{j,r}=s_{i,r}\neq \varnothing$ for all $r$, Definition~\ref{defn:ejrplus} applied to $S$ and any round $r$
    implies either
    (i) $\sat_i(\mathbf{o})\ge \left\lfloor \tau\sigma/n\right\rfloor=\left\lfloor \ell/\ell\right\rfloor=1$,
    or (ii) $o_r\in \bigcap_{j\in S} s_{j,r}=s_{i,r}$, so voter $i$ is satisfied in round $r$, and thus $\sat_i(\mathbf{o})\ge 1$.
    Thus in all cases $\sat_i(\mathbf{o})\ge 1$.
    
    Consequently, for every $\Phi\in\{\JR,\PJR,\EJR,\EJRplus\}$ and every $\Phi$-feasible outcome $\mathbf{o}$,
    \begin{equation}\label{eq:prop31-singleton}
        \sat_i(\mathbf{o})\ge 1 \quad \text{for all } i\in N.
    \end{equation}
    
    Now, let $r_z := \bigl|\{r\in[\ell] : o_r = z\}\bigr|$ be the number of rounds in which $z$ is selected.
    In each round with $o_r=z$, exactly the $k$ voters in $C$ approve $o_r$, so that round contributes $k$ to welfare.
    In every other round, the selected candidate is not $z$ and is approved by at most one voter in that round,
    so that round contributes at most $1$.
    Thus,
    \begin{equation*}
        \util(\mathbf{o})\leq r_z\cdot k + (\ell-r_z)\cdot 1 = \ell + r_z(k-1).
    \end{equation*}
    Now focus on the voters in $D$. In round $r$, voter $i\in D$ approves exactly the candidate $x_{i,r}$.
    Moreover, for any fixed $r$ the candidates $\{x_{i,r}: i\in D\}$ are distinct and each is approved by a different voter,
    so at most one voter in $D$ can be satisfied in a single round.
    By~\eqref{eq:prop31-singleton}, every voter in $D$ must be satisfied at least once, so we need at least $|D|=\ell-k$
    rounds in which some voter in $D$ is satisfied. In any round where a voter in $D$ is satisfied, we cannot have chosen $z$
    (since no voter in $D$ approves $z$). Therefore,
    \[
    \ell - r_z \ge \ell-k \implies r_z \le k.
    \]
    This gives us
    \[
    \util(\mathbf{o})\le \ell + k(k-1) = \ell - k + k^2.
    \]
    Since $\mathbf{o}$ was an arbitrary $\Phi$-feasible outcome, it follows that
    \[
    \util^\Phi(E)\le \ell - k + k^2.
    \]
    Therefore,
    \[
    \frac{\util^*(E)}{\util^\Phi(E)} \ge \frac{k\ell}{\ell-k+k^2}.
    \]
    Let $s:=\sqrt{\ell}$, so $\ell=s^2$ and $k=\lceil s\rceil$.
    We claim that $\ell-k+k^2\le 2ks$. Indeed,
    \[
    2ks-(\ell-k+k^2)=2ks-(s^2-k+k^2)=k-(s-k)^2.
    \]
    Since $k=\lceil s\rceil$, we have $0\le k-s<1$, thus, $(s-k)^2<1\le k$, so the right-hand side is nonnegative.
    Thus $\ell-k+k^2\le 2ks$, and consequently
    \[
    \frac{\util^*(E)}{\util^\Phi(E)} \ge \frac{k s^2}{2ks} = \frac{s}{2} = \frac{1}{2}\sqrt{\ell}.
    \]
    By definition, $\mathcal{P}_\util(\Phi)$ is the supremum of $\util^*(E)/\util^\Phi(E)$ over elections with $\util_\Phi(E)>0$,
    so the existence of this election implies $\mathcal{P}_\util(\Phi)\geq \frac{1}{2}\sqrt{\ell}$.

\subsection{Proof of Proposition~\ref{prop:trivial-upper}}
    Fix $\Phi \in \{\JR,\PJR,\EJR,\EJRplus\}$ and let
    $E=(P,N,\ell,(s_i)_{i\in N}) \in \mathcal{E}_{\ge 1}$.
    Assume $\util^{\Phi}(E)>0$, i.e., there exists at least one $\Phi$-feasible outcome.
    
    For any outcome $\mathbf{o}=(o_1,\dots,o_\ell)\in P^\ell$,
    \begin{equation*}
        \util(\mathbf{o}) = \sum_{t=1}^{\ell} \bigl|\{\, i\in N : o_t \in s_{i,t}\,\}\bigr| \leq \sum_{t=1}^{\ell} n
        = n\ell,
    \end{equation*}
    because in each round at most $n$ voters can approve the selected candidate. 
    Thus, $\util^*(E)\le n\ell$.
    
    Now, let $\mathbf{o}=(o_1,\dots,o_\ell)$ be any outcome satisfying $\Phi$ (it exists by assumption).
    We transform $\mathbf{o}$ by eliminating \emph{empty} rounds, i.e., rounds in which the selected candidate is approved by no voter.
    
    Formally, for each round $t\in[\ell]$ with
    \begin{equation*}
        \{ i\in N : o_t \in s_{i,t}\}=\varnothing,
    \end{equation*}
    pick an arbitrary voter $i_t\in N$ and then pick any candidate $p_t\in s_{i_t,t}$.
    Such a candidate exists because $E\in \mathcal{E}_{\ge 1}$, i.e., $s_{i,t}\neq\varnothing$ for all $i\in N$ and all $t\in[\ell]$.
    Replace $o_t$ by $p_t$ and keep all other rounds unchanged.
    Let $\widehat{\mathbf{o}}$ denote the outcome obtained after performing this replacement for every empty round.
    
    By construction, for every round $t\in[\ell]$ there exists at least one voter (namely $i_t$ if the round was modified, or some voter who already approved $o_t$ otherwise) who approves $\widehat{o}_t$.
    Therefore each round contributes at least $1$ to utilitarian welfare, and thus
    \begin{equation*}
        \util(\widehat{\mathbf{o}}) \geq \ell.
    \end{equation*}
    
    It suffices to argue that a \emph{single} replacement (in one empty round) cannot violate $\Phi$; then applying this argument sequentially gives us that $\widehat{\mathbf{o}}$ is still $\Phi$-feasible.
    
    So fix an empty round $t$ and let $\mathbf{o}'$ be the outcome obtained from $\mathbf{o}$ by replacing $o_t$ with $p_t$.
    Since the round was empty, $o_t\notin s_{i,t}$ for every voter $i\in N$. Thus, no voter loses satisfaction in round $t$ when we replace $o_t$ by $p_t$, and no other round changes. 
    Concretely, for every $i\in N$,
    \begin{equation*}
        \sat_i(\mathbf{o}') = \sat_i(\mathbf{o}) + \mathbf{1}[p_t\in s_{i,t}] \ge \sat_i(\mathbf{o}).
    \end{equation*}
    Similarly, for every voter group $S\subseteq N$, round $t$ contributed $0$ to $\sat_S(\mathbf{o})$ (because $o_t$ was in no approval set at round $t$), so
    \begin{equation*}
        \sat_S(\mathbf{o}') = \sat_S(\mathbf{o}) + \mathbf{1}\left[\exists i\in S:\ p_t\in s_{i,t}\right] \ge \sat_S(\mathbf{o}).
    \end{equation*}    
    
    We first prove the result for JR/PJR/EJR.
    All constraints in Definition~\ref{defn:jr_pjr_ejr} have the form $\sat_S(\cdot)\ge \text{(threshold)}$ (JR, PJR)
    or $\exists i\in S$ with $\sat_i(\cdot)\ge \text{(threshold)}$ (EJR), where the thresholds depend only on the instance (and on $S,t$), not on the outcome.
    Since every $\sat_S$ and every $\sat_i$ weakly increases when passing from $\mathbf{o}$ to $\mathbf{o}'$, every JR, PJR, and EJR
    constraint that held for $\mathbf{o}$ also holds for $\mathbf{o}'$.

    Next, we prove the result for EJR+.
    Fix any $\sigma\in[n]$, $\tau\in[\ell]$, any $(\sigma,\tau)$-cohesive nonempty $S\subseteq N$, and any round
    $r\in[\ell]$ with $\bigcap_{i\in S} s_{i,r}\neq \varnothing$. We show the EJR+ requirement remains satisfied after the replacement.
    If $r\neq t$, then the chosen candidate in round $r$ does not change (only round $t$ was modified), so (ii) is unchanged; and (i) can only become easier to satisfy since all $\sat_i$ weakly increase.
    If $r=t$, then the original chosen candidate $o_t$ was approved by no voter, and thus, $o_t\notin \bigcap_{i\in S} s_{i,t}$; therefore (ii) was false for $\mathbf{o}$ at this constraint, so the
    constraint had to be satisfied via (i), i.e., $\sat_i(\mathbf{o})\ge \lfloor\tau\sigma/n\rfloor$ for some $i\in S$.
    Because $\sat_i(\mathbf{o}')\geq \sat_i(\mathbf{o})$ for all $i$, (i) remains true for $\mathbf{o}'$ as well.
    Thus EJR+ is preserved by the replacement.
    
    Consequently, $\widehat{\mathbf{o}}$ satisfies $\Phi$, and therefore
    \begin{equation*}
        \util^{\Phi}(E) \ge \util(\widehat{\mathbf{o}}) \ge \ell.
    \end{equation*}
    
    Combining $\util^*(E)\leq n\ell$ with $\util^{\Phi}(E)\geq \ell$, we get that
    \begin{equation*}
        \frac{\util^*(E)}{\util^{\Phi}(E)} \leq \frac{n\ell}{\ell} = n,
    \end{equation*}
    as claimed.

\subsection{Proof of Theorem~\ref{thm:jr-ub}}
    Fix a complete temporal election
    $E=(P,N,\ell,(s_i)_{i\in N})\in \mathcal E^{n,\ell}_{\ge 1}$
    with $\ell\ge n$.
    For each round $t\in[\ell]$, define the maximum approval count
    \[
    a_t \;:=\; \max_{p\in P}\bigl|\{\,i\in N : p\in s_{i,t}\,\}\bigr|.
    \]
    Since utilitarian welfare is additive across rounds, an unconstrained utilitarian optimum can be obtained by choosing,
    in each round $t$, a candidate that attains $a_t$ approvals; thus
    \begin{equation*}
        \util^*(E)=\sum_{t=1}^{\ell} a_t.
    \end{equation*}
    Let $B\subseteq[\ell]$ be a set of $n$ rounds minimizing $\sum_{t\in B}a_t$ (equivalently, $B$ consists of the $n$
    smallest values among $\{a_t\}_{t\in[\ell]}$). Write $\bar B := [\ell]\setminus B$ and define
    \[
    \mathrm{OPT}_B := \sum_{t\in B} a_t,
    \mathrm{OPT}_{\bar B} := \sum_{t\in \bar B} a_t
    = \util^*(E)-\mathrm{OPT}_B.
    \]
    For any set of rounds $R\subseteq[\ell]$ and any suboutcome $\mathbf{o}_R=(o_t)_{t\in R}$, define the welfare contributed by $R$ as
    \[
    \util_R(o_R)\;:=\;\sum_{t\in R}\bigl|\{\,i\in N : o_t\in s_{i,t}\,\}\bigr|.
    \]
    
    \medskip
    \noindent\textbf{Claim.}
    There exists a suboutcome $\mathbf{o}_B$ on the rounds in $B$ such that every voter is satisfied at least once within $B$
    (i.e., $sat_i(\mathbf{o}_B)\ge 1$ for all $i\in N$) and
    \[
    \util_B(\mathbf{o}_B)\ge \frac{2\sqrt{n}-1}{n}\cdot \mathrm{OPT}_B.
    \]
    
    \medskip
    \noindent\emph{Proof of Claim.}
    Let $q:=\max_{t\in B} a_t$ and fix $t^*\in B$ with $a_{t^*}=q$.
    Let $T\subseteq B$ be any set of size $q$ (since $q \leq n$ since $a_t \leq n$ for all rounds, so such a set exists) consisting of $t^*$ and the other $q-1$ rounds in $B$ with
    largest $a_t$-values.
    For each $t\in T$, choose a candidate $p_t\in P$ attaining $a_t$ approvals in round $t$ and set $o_t:=p_t$.
    Let
    \[
    Q := \sum_{t\in T} a_t,
    \]
    which is exactly the welfare contributed by the rounds in $T$.
    
    In particular, in round $t^*$ the chosen candidate is approved by $q$ voters, so at least $q$ voters are already satisfied
    after fixing the choices on $T$. Thus, at most $n-q$ voters remain unsatisfied.
    Since $|B\setminus T|=n-q$ and $s_{i,t}\neq\varnothing$ for all $i,t$, we can choose the remaining entries $o_t$ for
    $t\in B\setminus T$ so that:
    \begin{itemize}
    \item each previously unsatisfied voter becomes satisfied at least once within $B$
    (assign each such voter injectively to a distinct round in $B\setminus T$ and pick a candidate she approves); and
    \item every round in $B\setminus T$ selects a candidate approved by at least one voter
    (if some rounds remain unassigned, pick any voter and a candidate she approves in that round).
    \end{itemize}
    Thus every voter is satisfied at least once within $B$, and every round in $B\setminus T$ contributes at least $1$ to welfare.
    Therefore
    \[
    \util_B(\mathbf{o}_B)\ge Q + (n-q).
    \]
    
    If $B\setminus T=\varnothing$, then $q=|B|=n$ and the construction achieves $\util_B(\mathbf{o}_B)=\mathrm{OPT}_B$,
    so assume $B\setminus T\neq\varnothing$ and define
    \[
    r := \max_{t\in B\setminus T} a_t \;\ge\; 1.
    \]
    Then
    \[
    \mathrm{OPT}_B = \sum_{t\in T} a_t + \sum_{t\in B\setminus T} a_t
    \le Q + (n-q)\,r.
    \]
    Moreover, since $T$ contains the $q$ rounds of $B$ with largest $a_t$-values, we have $a_t\ge r$ for all $t\in T$,
    so $Q\ge rq$. Combining these bounds gives us
    \begin{align*}
        \frac{\util_B(\mathbf{o}_B)}{\mathrm{OPT}_B} 
         \geq \frac{Q+n-q}{Q+(n-q)r} 
         & \ge \frac{rq+n-q}{rq+(n-q)r} \\
         & = \frac{rq+n-q}{nr},
    \end{align*}
    where the second inequality follows from the fact that $r \geq 1$, so the function $f(Q)=\frac{Q+n-q}{Q+(n-q)r}$ is nondecreasing in $Q$, and $Q\ge rq$, so $f(Q)\ge f(rq)$.
    
    Since $q\ge r$, we have $rq+n-q = n+q(r-1)\ge n+r(r-1)$, and thus
    \[
    \frac{\util_B(\mathbf{o}_B)}{\mathrm{OPT}_B}
    \ge \frac{n+r(r-1)}{nr}
    = \frac{r-1}{n}+\frac{1}{r}
    =\left(\frac{r}{n}+\frac{1}{r}\right)-\frac{1}{n}.
    \]
    By the AM-GM inequality, $\frac{r}{n}+\frac{1}{r}\ge 2/\sqrt{n}$, so
    \[
    \frac{\util_B(\mathbf{o}_B)}{\mathrm{OPT}_B}
    \ge \frac{2}{\sqrt{n}}-\frac{1}{n}
    = \frac{2\sqrt{n}-1}{n},
    \]
    proving the claim. $ \hfill \square$
    
    \medskip
    Now construct a (full time horizon) outcome $\mathbf{o}$ as follows:
    \begin{itemize}
    \item for each $t\in \bar B$, choose a candidate attaining $a_t$ approvals in round $t$;
    \item for each $t\in B$, use the choice from $o_B$.
    \end{itemize}
    Then
    \begin{align*}
        \util(\mathbf{o}) & = \mathrm{OPT}_{\bar B}+\util_B(\mathbf{o}_B) \\
        & \ge \util^*(E)-\mathrm{OPT}_B +\frac{2\sqrt{n}-1}{n}\cdot \mathrm{OPT}_B \\
        & =\util^*(E)-\left(1-\frac{2\sqrt{n}-1}{n}\right)\mathrm{OPT}_B.
    \end{align*}
    Since $B$ consists of the $n$ smallest values among $\{a_t\}_{t\in[\ell]}$, its average is at most the overall average.
    Thus,
    \[
    \mathrm{OPT}_B
    \le \frac{n}{\ell}\sum_{t=1}^{\ell} a_t
    = \frac{n}{\ell}\cdot \util^*(E).
    \]
    Indeed, sorting $a_1,\dots,a_\ell$ in nondecreasing order, $\mathrm{OPT}_B$ is the sum of the first $n$, whose average cannot exceed the overall average.
    Substituting gives
    \begin{align*}
        \util(\mathbf{o}) & \ge \left[1-\left(1-\frac{2\sqrt{n}-1}{n}\right)\frac{n}{\ell}\right]\util^*(E) \\
        & = \frac{\ell-n+2\sqrt{n}-1}{\ell}\cdot \util^*(E).
    \end{align*}
    Finally, by construction, every voter is satisfied at least once in $\mathbf{o}$ (indeed, already within $B$), i.e.,
    $\sat_i(\mathbf{o})\ge 1$ for all $i\in N$.
    Therefore, for any nonempty voter set $S\subseteq N$ we have $\sat_S(\mathbf{o})\ge \sat_i(\mathbf{o})\ge 1$ for any $i\in S$.
    Since the right-hand side of the JR constraint is $\min\{1,\lfloor t|S|/n\rfloor\}\in\{0,1\}$, this implies that all JR constraints
    are satisfied; thus, $\mathbf{o}$ provides JR. Consequently $\util_{\JR}(E)\ge \util(\mathbf{o})$, and thus
    \[
    \frac{\util^*(E)}{\util_{\JR}(E)}
    \le \frac{\ell}{\ell-n+2\sqrt{n}-1}.
    \]
    Taking the supremum over complete instances concludes the proof.

\subsection{Proof of Proposition~\ref{prop:jr-tight}}
    Fix $n,\ell$ with $\ell\ge n$, and let $k:=\lceil \sqrt n\rceil$.
    We construct a complete temporal election
    $E=(P,N,\ell,(s_i)_{i\in N})$ as follows.
    
    Partition the voters into two disjoint sets $N=C \dot\cup R$ with
    $|C|=k$ (the \emph{core} voters) and $|R|=n-k$ (the \emph{private} voters).
    Let $P := \{z\} \cup \{p_i : i\in R\}$, where all candidates are distinct.
    
    For every round $t\in[\ell]$, define approval sets by
    \begin{equation*}
        s_{c,t}:=\{z\}\quad\text{for all }c\in C, \text{ and } s_{i,t}:=\{p_i\}\quad\text{for all }i\in R.
    \end{equation*}
    Then $E\in \mathcal{E}_{\ge 1}$ (every voter approves at least one candidate in every round), and preferences are static.
    
    Now, in every round $t$, candidate $z$ is approved by exactly the $k$ voters in $C$, while each $p_i$ is approved by exactly one voter.
    Thus, the maximum possible per-round welfare is $k$, attained by choosing $z$.
    Therefore the unconstrained utilitarian optimum is achieved by
    $\mathbf{o}^*:=(z,z,\dots,z)$ and
    \[
    \util^*(E)=k\ell.
    \]
    
    Let $\mathbf{o}=(o_1,\dots,o_\ell)\in P^\ell$ be any outcome satisfying JR.
    We first show that every voter must be satisfied at least once.
    Fix any voter $i\in N$ and consider the singleton group $S:=\{i\}$.
    Since $E$ is complete, $S$ agrees in every round, so it agrees in a size-$\ell$ subset of rounds.
    Applying Definition~\ref{defn:jr_pjr_ejr} with $t=\ell$ gives the JR requirement
    \begin{equation*}
        \sat_S(\mathbf{o}) \geq \min\{1,\lfloor \ell\cdot |S|/n\rfloor\}
        =\min\{1,\lfloor \ell/n\rfloor\}=1,
    \end{equation*}
    where the last equality uses $\ell\ge n$.
    Since $S$ is a singleton, $\sat_S(\mathbf{o})=\sat_i(\mathbf{o})$, thus, $\sat_i(\mathbf{o})\ge 1$ for every $i\in N$.
    
    In particular, every private voter $i\in R$ must be satisfied in some round.
    But $i$ approves only $p_i$ in every round, so there exists a round $t(i)\in[\ell]$ with $o_{t(i)}=p_i$.
    If $i\neq j$ are two private voters, then $\{p_i\}\cap\{p_j\}=\varnothing$, so they cannot share the same witness round:
    $t(i)\neq t(j)$.
    Thus $\mathbf{o}$ selects a private candidate in at least $|R|=n-k$ distinct rounds.
    Equivalently, if $r:=\bigl|\{t\in[\ell]: o_t=z\}\bigr|$ is the number of rounds selecting $z$, then $\ell-r\ge n-k$, i.e., $r\le \ell-n+k$.
    
    A round with $o_t=z$ contributes exactly $k$ to welfare, while a round with $o_t\neq z$ contributes exactly $1$.
    Thus,
    \begin{align*}
        \util(\mathbf{o})=rk+(\ell-r)\cdot 1 & =\ell+r(k-1) \\
        & \le \ell+(\ell-n+k)(k-1) \\
        & = k\ell-(k-1)(n-k).
    \end{align*}
    This upper bound is achieved by selecting each private candidate $p_i$ once (for $i\in R$) and selecting $z$ in all remaining $\ell-(n-k)$ rounds; this outcome satisfies every voter at least once and therefore satisfies JR.
    Consequently,
    \[
    \util_{\JR}(E)=k\ell-(k-1)(n-k).
    \]
    For this instance,
    \[
    \frac{\util^*(E)}{\util_{\JR}(E)}
    =\frac{k\ell}{k\ell-(k-1)(n-k)}
    =\frac{\ell}{\ell-n+\frac{n}{k}+k-1}.
    \]
    Now set $s:=\sqrt{n}$ and write $k=s+\delta$ for some $\delta\in[0,1)$ (since $k=\lceil s\rceil$).
    Then
    \begin{align*}
        \frac{n}{k}+k-1 & = \frac{s^2}{s+\delta}+s+\delta-1 \\
        & =2s-1+\frac{\delta^2}{s+\delta}
    =2\sqrt{n}-1+\frac{\delta^2}{k}
    \le 2\sqrt{n}-\frac12,
    \end{align*}
    where the last inequality uses $\delta^2<1$ and $k\ge 2$ for $n\ge 2$ (and the case $n=1$ can be checked directly).
    Therefore
    \[
    \frac{\util^*(E)}{\util_{\JR}(E)}
    \ \ge\ \frac{\ell}{\ell-n+2\sqrt{n}-\frac12}
    \ =\ \frac{\ell}{\ell-n+2\sqrt{n}-\mathcal{O}(1)}.
    \]
    Since $\mathcal{P}_\util(\JR)$ is defined as a supremum over instances, this establishes the claimed lower bound.

\subsection{Proof of Theorem~\ref{thm:long-horizon-sep}}
    Fix an integer $a\ge 2$.
    Note that Theorem~\ref{thm:jr-ub} shows that for every election with $n$ voters and $\ell\ge n$ rounds,
    \[
    \frac{\util^*(E)}{\util_{\JR}(E)} \le \frac{\ell}{\ell-n+2\sqrt{n}-1}.
    \]
    Substituting $\ell=an$ gives
    \begin{align*}
        \mathcal{P}_\util(\JR) & \le \frac{an}{an-n+2\sqrt{n}-1} \\
        & =\frac{a}{(a-1)+\frac{2}{\sqrt{n}}-\frac{1}{n}} \\
        & = \frac{a}{a-1}+o(1) \text{ as } n\to\infty.
    \end{align*}
    
    For each $n$, let $k:=\lceil \sqrt{n}\rceil$ and set $\ell:=an$.
    Construct a temporal election
    $E=(P,N,\ell,(s_i)_{i\in N})$ where agents have static preferences, as follows.
    Partition the voters as $N=C\dot\cup R$ with $|C|=k$ (core voters) and $|R|=n-k$ (private voters).
    Let
    \begin{equation*}
        P := \{z\} \cup \{p_i : i\in R\},
    \end{equation*}
    where all candidates are distinct.
    For every round $r\in[\ell]$, define approval sets by
    \[
    s_{c,r}:=\{z\}\ \text{ for all }c\in C \text{ and }
    s_{i,r}:=\{p_i\}\ \text{ for all }i\in R.
    \]
    Then, $E \in \mathcal{E}_{\geq 1}$ (every $s_{i,r}\neq\varnothing$) and in every round $z$ is approved by exactly
    $k$ voters, while every $p_i$ is approved by exactly one voter.
    
    In each round, selecting $z$ gives us welfare $k$, while selecting any $p_i$ gives us welfare $1$.
    Thus, the unconstrained optimum is to pick $z$ in every round, giving
    \[
    \util^*(E)=k\ell = akn.
    \]
    Now fix $\Phi\in\{\PJR,\EJR,\EJRplus\}$ and let $\mathbf{o}=(o_1,\dots,o_\ell)$ be any $\Phi$-feasible outcome.

    We first prove that every voter must be satisfied at least $a$ times.
    Fix any voter $i\in N$ and consider the singleton group $S:=\{i\}$.
    Since $E \in \mathcal{E}_{\geq 1}$, $S$ agrees in every round, thus, in a size-$\ell$ subset of rounds.
    Moreover, $\lfloor \ell/n\rfloor=\lfloor an/n\rfloor=a$.
    
    \begin{itemize}
        \item If $\Phi\in\{\PJR,\EJR\}$, apply Definition~\ref{defn:jr_pjr_ejr} with $t=\ell$ to obtain
        $\sat_S(\mathbf{o})=\sat_i(\mathbf{o})\ge \lfloor \ell/n\rfloor=a$.
        \item If $\Phi=\EJRplus$, take $\sigma:=1$ and $\tau:=\ell$.
        Since $s_{i,r}\neq\varnothing$ for all $r$, the singleton $S$ is $(1,\ell)$-cohesive (witnessed by
        choosing in each round an arbitrary candidate in $s_{i,r}$).
        Applying Definition~\ref{defn:ejrplus} to this $S$ gives, for every round $r$, that either
        (i) $\sat_i(\mathbf{o})\geq \lfloor \tau\sigma/n\rfloor=\lfloor \ell/n\rfloor=a$ or (ii) $o_r\in s_{i,r}$.
        Note that (i) depends only on the full outcome (not on $r$), so if it is false then (ii) must hold for every $r$.
        If (i) holds we are done; otherwise (ii) holds for all $r$, which implies $\sat_i(\mathbf{o})=\ell\ge a$.
        Thus in all cases $\sat_i(\mathbf{o})\ge a$.
    \end{itemize}

    Next, we show that at least $a(n-k)$ rounds must select private candidates.
    Each private voter $i\in R$ approves only $p_i$, so $\sat_i(\mathbf{o})$ equals the number of rounds with $o_r=p_i$.
    Then, $\sat_i(\mathbf{o})\ge a$, so $p_i$ must be selected at least $a$ times.
    Since each round can select only one candidate, the total number of rounds selecting private candidates is at least $a|R|=a(n-k)$.
    Let $r_z := \bigl|\{r\in[\ell]: o_r=z\}\bigr|$  denote the number of rounds in which $z$ is selected. Then
    \[
    r_z \le \ell - a(n-k) = an - a(n-k) = ak.
    \]
    Now, every round with $o_r=z$ contributes exactly $k$ to welfare, while every other round contributes exactly $1$.
    Therefore,
    \begin{align*}
        \util(\mathbf{o}) 
        = r_zk+(\ell-r_z) & = \ell+r_z(k-1) \\
        & \le \ell+ak(k-1) \\
        & =an+ak^2-ak \\
        & =a(n-k+k^2).
    \end{align*}
    This bound is achievable by selecting each $p_i$ (for $i\in R$) exactly $a$ times and selecting $z$ in the
    remaining $ak$ rounds; this outcome satisfies $\sat_i(\mathbf{o})\ge a$ for all $i\in N$, and (as noted above) the only nontrivial cohesive groups are subsets of $C$ and singleton private voters, all of whose demands are met.
    More concretely, under static preferences, a voter set $S$ agrees (i.e., has nonempty intersection) iff either $S\subseteq C$ (intersection $\{z\}$) or $S=\{i\}$ for some $i\in R$ (intersection $\{p_i\}$); any set containing both a core and a private voter (or two distinct private voters) has empty intersection and imposes no JR/PJR/EJR constraint.
    If $S\subseteq C$ with $|S|=s$, then $\lfloor \ell s/n\rfloor=\lfloor (an)s/n\rfloor = as$, and since $z$ is chosen $ak$ times, we have $\sat_S(\mathbf{o})=ak\ge as$ (PJR) and each $c\in C$ has $\sat_c(\mathbf{o})=ak\ge as$ (EJR).
    If $S=\{i\}\subseteq R$, then $\sat_i(\mathbf{o})=a$, meeting the singleton demand $a$.
    For EJR+, the only sets $S$ that create constraints are again those with nonempty intersection (i.e., $S\subseteq C$ or $S=\{i\}\subseteq R)$; in both cases every voter in $S$ has satisfaction at least $a$ and in fact core voters have satisfaction $ak$, which is at least $\lfloor \tau\sigma/n\rfloor$ for all admissible $\tau\le \ell and \sigma\le |S|$, so (i) holds.

    Consequently,
    \[
    \util^\Phi(E)= a(n-k+k^2).
    \]
    We obtain
    \[
    \frac{\util^*(E)}{\util^\Phi(E)}
    = \frac{akn}{a(n-k+k^2)}
    = \frac{kn}{n-k+k^2}.
    \]
    With $k=\lceil \sqrt{n}\rceil$ we have $k\ge \sqrt{n}$ and $k\le \sqrt{n}+1$, thus,
    $n-k+k^2 \le n+k^2 \le n+(\sqrt{n}+1)^2 = 2n + 2\sqrt{n} + 1 \le 5n$ for all $n\ge 1$.
    Therefore
    \[
    \frac{\util^*(E)}{\util^\Phi(E)} \ge \frac{\sqrt{n}\cdot n}{5n} = \frac{1}{5}\sqrt{n},
    \]
    so in particular $\mathcal{P}_\util(\Phi)=\Omega(\sqrt{n})$ for each $\Phi\in\{\PJR,\EJR,\EJRplus\}$ in the case where $\ell=an$.

\section{Omitted Proofs in Section~\ref{sec:computational}}

\subsection{Proof of Lemma~\ref{lem:alpha01-collapse}}
    Fix any integer $t>0$ and any nonempty voter subset $S\subseteq N$ that agrees in a size-$t$ subset of rounds. Set $d := \left\lfloor \frac{t\cdot |S|}{n}\right\rfloor$.
    By the assumption of the lemma, $d\in\{0,1\}$.
    
    If $d=0$, then JR/PJR require $\sat_S(\mathbf{o})\geq 0$, and EJR requires $\exists i\in S$ with $\sat_i(\mathbf{o})\ge 0$, which holds since $S\neq\varnothing$.
    
    Assume $d=1$. Then the JR and PJR requirements for $(S,t)$ both reduce to $\sat_S(\mathbf{o})\ge 1$. Moreover,
    \begin{align*}
        & \sat_S(\mathbf{o})\ge 1 \\
        & \iff \exists r\in[\ell]\ \exists i\in S:\ o_r\in s_{i,r} \\
        & \iff  \exists i\in S:\ \sat_i(\mathbf{o})\ge 1,
    \end{align*}
    so the EJR requirement for $(S,t)$ is equivalent to the same condition. Thus, for every such pair $(S,t)$, JR, PJR, and EJR impose identical constraints on $\mathbf{o}$, and thus, the three axioms are equivalent.

\subsection{Proof of Lemma~\ref{lem:static_ejr_ejr+}}
    We first show that every EJR+ outcome is EJR (this direction does not rely on static preferences).
    Let $\mathbf{o}$ satisfy EJR+ and fix any integer $t>0$ and any nonempty voter subset $S\subseteq N$ that agrees in a size-$t$ subset of rounds. Let $R\subseteq[\ell]$ with $|R|=t$ witness this agreement, i.e., $\bigcap_{i\in S}s_{i,r}\neq\varnothing$ for all $r\in R$. For each $r\in R$ choose any $p_r\in\bigcap_{i\in S}s_{i,r}$ and define $\tilde{o}_R := (p_r)_{r\in R}$. Then $S$ is $(|S|,t)$-cohesive. Applying EJR+ to the parameters $\sigma:=|S|$, $\tau:=t$, the set $S$, and any round $r\in R$ gives us either:
    \begin{enumerate}
    \item[(i)] there exists $i\in S$ with $\sat_i(\mathbf{o})\ge \left\lfloor \frac{t|S|}{n}\right\rfloor$, which is exactly the EJR requirement for $(S,t)$; or
    \item[(ii)] $o_r\in\bigcap_{i\in S}s_{i,r}$.
    \end{enumerate}
    If (i) holds we are done. Otherwise (i) is false, so by EJR+ we must have (ii) for every $r\in R$. Thus, every voter $i\in S$ approves the chosen candidate in each round of $R$, and therefore $\sat_i(\mathbf{o})\ge t$ for all $i\in S$. 
    Since $|S| \leq n$, we have that $\left\lfloor \frac{t|S|}{n}\right\rfloor\le t$, and the EJR requirement again holds. 
    Thus $\mathbf{o}$ satisfies EJR.
    
    Now assume preferences are static and that $\mathbf{o}$ satisfies EJR. Let $\sigma\in[n]$, $\tau\in[\ell]$, let $S$ be any $(\sigma,\tau)$-cohesive subset, and let $r\in[\ell]$ be any round with $\bigcap_{i\in S}s_{i,r}\neq\varnothing$. Under static preferences, $\bigcap_{i\in S}s_{i,r}\neq\varnothing$ implies $\bigcap_{i\in S}s_{i,r'}\neq\varnothing$ for every $r'\in[\ell]$, so $S$ agrees in a size-$\ell$ subset of rounds. Applying EJR to $(S,\ell)$ gives us some $i\in S$ with
    \[
    \sat_i(\mathbf{o})\ge \left\lfloor \frac{\ell|S|}{n}\right\rfloor.
    \]
    Since $S$ is $(\sigma,\tau)$-cohesive we necessarily have $|S|\ge \sigma$, and we also have $\ell\ge \tau$; thus,
    \[
    \left\lfloor \frac{\ell|S|}{n}\right\rfloor \ge \left\lfloor \frac{\tau\sigma}{n}\right\rfloor.
    \]
    Consequently the same voter $i$ satisfies $\sat_i(\mathbf{o})\ge \left\lfloor \frac{\tau\sigma}{n}\right\rfloor$, so the EJR+ condition holds via (i). 
    As $\sigma,\tau,S,$ and $r$ were arbitrary, $\mathbf{o}$ satisfies EJR+.

\subsection{Proof of Theorem~\ref{thm:intractable_NPc}}
    Fix $\Phi \in \{\JR,\PJR,\EJR,\EJRplus\}$.
    We prove NP-hardness via a reduction from \textsc{Exact Cover by 3-Sets} (X3C), which is known to be NP-complete.
    An instance of X3C consists of a universe $U=\{1,\dots,u\}$ with $u=3q$ and a family $\mathcal{S}=\{S_1,\dots,S_m\}$ of 3-element subsets of $U$; it is a yes-instance if there exist $q$ sets whose union is exactly $U$, and a no-instance otherwise.
    Without loss of generality, assume every element of $U$ appears in at least one set (otherwise the instance is trivially a no-instance and we can map it to any fixed no-instance of $\Phi$-\textsc{UTIL}).
    We may also assume $q\ge 1$; constant-size instances can be handled separately.
    
    Given $(U,\mathcal{S})$, we build a temporal election $E=(P,N,\ell,(s_i)_{i\in N})$
    with static preferences as follows.
    \begin{itemize}
      \item \emph{Voters:} for every element $e\in U$, create three voters
      $x_e,y_e^1,y_e^2$. Additionally, create four dummy voters $d_1,d_2,d_3,d_4$.
      Thus $n:=3u+4=9q+4$.
      \item \emph{Rounds:} set $\ell:=3q+2$.
      \item \emph{Candidates:} for every set $S_j\in \mathcal{S}$, create a
      candidate $c_j$; for every element $e\in U$, create a candidate $p_e$; and
      create one special candidate $z$.
      \item \emph{Static preferences:} for every $e\in U$, define
      $A_{y_e^1}=A_{y_e^2}=\{p_e\}$ and
      $A_{x_e}=\{p_e\}\cup \{c_j : e\in S_j\}$.
      For each dummy voter $d_r$, let $A_{d_r}=\{z\}$.
      Set $s_{i,t}=A_i$ for all voters $i$ and all rounds $t$ (hence preferences are static).
    \end{itemize}
    By construction, each set-candidate $c_j$ is approved by exactly the three
    voters $\{x_e : e\in S_j\}$, each element-candidate $p_e$ is approved by exactly
    $\{x_e,y_e^1,y_e^2\}$, and $z$ is approved by exactly the four dummy voters.
    Thus, each candidate is approved by at most $4$ voters.
    
    Finally, set the welfare threshold
    \[
    B := 4\ell - q.
    \]
    For any outcome $\mathbf{o}=(o_1,\dots,o_\ell)$, define
    \[
    w(\mathbf{o}):=\bigl|\{t\in[\ell] : o_t \neq z\}\bigr|.
    \]
    Every round with $o_t=z$ contributes $4$ to the round welfare, while every round
    with $o_t\neq z$ contributes $3$. Therefore
    \[
    \util(\mathbf{o})=4(\ell-w(\mathbf{o}))+3w(\mathbf{o})=4\ell-w(\mathbf{o}),
    \]
    and thus
    \[
    \util(\mathbf{o})\ge B \iff w(\mathbf{o})\le q.
    \]

    Let $S\subseteq N$ be any voter subset and let $t\in[\ell]$.
    Because preferences are static, if $S$ agrees in a size-$t$ subset of rounds,
    then $\bigcap_{i\in S} A_i \neq \varnothing$, so $S$ is contained in the approvers
    of some single candidate and hence $|S|\le 4$.
    Consequently,
    \[
    \Bigl\lfloor \frac{t\cdot |S|}{n}\Bigr\rfloor \;\le\;
    \Bigl\lfloor \frac{\ell\cdot 4}{n}\Bigr\rfloor
    \;=\;
    \Bigl\lfloor \frac{4(3q+2)}{9q+4}\Bigr\rfloor
    \;=\; 1,
    \]
    where the last equality uses $q\ge 1$, so $4(3q+2)<2(9q+4)$ and $4(3q+2)\ge 9q+4$.
    Thus, in this instance, every threshold of the form
    $\lfloor t|S|/n\rfloor$ is in $\{0,1\}$.
    For EJR+, note that constraints are only for rounds $r$ where $\bigcap_{i \in S} s_{i,r} \neq \varnothing$.
    In our instance, any such $S$ must be contained in the approvers of a single candidate, and thus $|S| \leq 4$, giving us $\sigma \leq 4$.
    Thus, we get that
    \[
    \Bigl\lfloor \frac{\tau\cdot \sigma}{n}\Bigr\rfloor \leq
    \Bigl\lfloor \frac{4\ell}{n}\Bigr\rfloor = 1.
    \]
    
    We first prove yes-instance $\implies$ feasible high welfare outcome.
    Assume the X3C instance is a yes-instance, and fix $q$ sets
    $S_{j_1},\dots,S_{j_q}$ whose union is exactly $U$.
    Define an outcome $\mathbf{o}$ by selecting $c_{j_k}$ in round $k$ for $k=1,\dots,q$, and
    selecting $z$ in the remaining $\ell-q$ rounds.
    Then $w(\mathbf{o})=q$, so $\util(\mathbf{o})=4\ell-q=B$.
    
    We verify that $\mathbf{o}$ satisfies $\Phi$.
    First note that every dummy voter $d_r$ has $\sat_{d_r}(\mathbf{o})\ge 1$ because
    $\mathbf{o}$ selects $z$ in $\ell-q=2q+2\ge 1$ rounds.
    Next, for each element $e\in U$, since $\{S_{j_1},\dots,S_{j_q}\}$ is an exact
    cover, there is a unique $k$ such that $e\in S_{j_k}$, and then $x_e$ approves
    $c_{j_k}$; hence $\sat_{x_e}(\mathbf{o})\ge 1$ for all $e\in U$.
    
    Now consider any constraint relevant to $\Phi$.
    \begin{itemize}
      \item If $\Phi\in\{\JR,\PJR,\EJR\}$, fix any $t>0$ and
      any $S\subseteq N$ that agrees in a size-$t$ subset of rounds. If
      $\lfloor t \cdot |S|/n\rfloor=0$ the constraint is trivial. Otherwise
      $\lfloor t \cdot |S|/n\rfloor=1$, and since $t\le \ell=3q+2$ and $n=9q+4>2\ell$, if $|S|\leq 2$, then $t \cdot|S|\le 2\ell<n$; thus we must have that $|S|\ge 3$.
      Then, (as argued above) $S$ is a subset of the approvers of a single candidate, and thus it is contained in one of the following approver sets: $\{d_1,d_2,d_3,d_4\}$, $\{x_e,y_e^1,y_e^2\}$ for some
      $e$, or $\{x_{e_1},x_{e_2},x_{e_3}\}$ for some 3-set $\{e_1,e_2,e_3\}$.
      In all cases, $S$ contains at least one voter with satisfaction at least $1$
      (a dummy voter, or some $x_e$), so $\JR$, $\PJR$, and
      $\EJR$ are satisfied for this $(S,t)$.
      \item If $\Phi=\EJRplus$, fix any $\sigma\in[n]$, $\tau\in[\ell]$, any
      $(\sigma,\tau)$-cohesive set $S$, and any round $r$ with
      $\bigcap_{i\in S} s_{i,r}\neq\varnothing$. If $\lfloor \tau\sigma/n\rfloor=0$,
      condition (i) holds trivially. Otherwise $\lfloor \tau\sigma/n\rfloor=1$,
      which implies $\sigma\ge 3$ and thus $|S|\ge 3$. The nonempty intersection
      implies again that $S$ is contained in the approvers of a single candidate, so
      by the same case analysis as above, $S$ contains a voter with satisfaction at
      least $1$. Hence $\sat_i(\mathbf{o})\ge \lfloor\tau\sigma/n\rfloor$ for some
      $i\in S$, i.e., $\EJRplus$ condition (i) holds.
    \end{itemize}
    Therefore the constructed $\Phi$-\textsc{UTIL} instance is a yes-instance.
    
    Next we prove no-instance $\implies$ no feasible high welfare outcome.
    Assume the X3C instance is a no-instance. Suppose for contradiction that there
    exists an outcome $\mathbf{o}$ satisfying $\Phi$ with $\util(\mathbf{o})\ge B$.
    Then $w(\mathbf{o})\leq q$.
    
    Fix any element $e\in U$ and consider the element group
    \[
    G_e := \{x_e,y_e^1,y_e^2\}.
    \]
    Since all three voters approve $p_e$, we have $\bigcap_{i\in G_e} A_i=\{p_e\}\neq\varnothing$,
    so $G_e$ agrees in a size-$\ell$ subset of rounds. Moreover,
    \[
    \Bigl\lfloor \frac{\ell\cdot |G_e|}{n}\Bigr\rfloor
    =
    \Bigl\lfloor \frac{3\ell}{n}\Bigr\rfloor
    =
    \Bigl\lfloor \frac{3(3q+2)}{9q+4}\Bigr\rfloor
    =
    \Bigl\lfloor \frac{9q+6}{9q+4}\Bigr\rfloor
    =1.
    \]
    Thus, by Definition~\ref{defn:jr_pjr_ejr}, if $\Phi\in\{\JR,\PJR\}$ we obtain
    $\sat_{G_e}(\mathbf{o})\ge 1$, and if $\Phi=\EJR$ we obtain the existence of some $i\in G_e$ with $\sat_i(\mathbf{o})\ge 1$.
    If $\Phi=\EJRplus$, note that $G_e$ is $(3,\ell)$-cohesive (select $p_e$
    in all $\ell$ rounds), and for every round $r$ we have
    $\bigcap_{i\in G_e} s_{i,r}=\{p_e\}\neq\varnothing$, so applying Definition~\ref{defn:ejrplus}
    implies either (i) some $i\in G_e$ has $\sat_i(\mathbf{o})\ge \lfloor 3\ell/n\rfloor=1$
    or (ii) $o_r=p_e$; in either case, at least one voter in $G_e$ is satisfied in
    at least one round. Therefore, for every $e\in U$ there exists some round $t$
    such that $o_t$ is approved by at least one voter in $G_e$.
    
    But the only candidates approved by any voter in $G_e$ are $p_e$ (approved by all
    three) and the set-candidates $c_j$ with $e\in S_j$ (approved by $x_e$).
    Hence, for each $e\in U$, the outcome must select either $p_e$ in some round or
    some $c_j$ with $e\in S_j$ in some round.
    
    Let $w_{\mathrm{set}} := \bigl|\{t\in[\ell] : o_t=c_j \text{ for some } j\}\bigr|$ and $w_{\mathrm{elm}} := \bigl|\{t\in[\ell] : o_t=p_e \text{ for some } e\}\bigr|$.
    Then $w_{\mathrm{set}}+w_{\mathrm{elm}}=w(\mathbf{o})\le q$.
    A set-round (selecting some $c_j$) can certify the above condition for at most
    the three elements in $S_j$, whereas an element-round (selecting some $p_e$)
    can certify it for exactly one element.
    Thus the number of \emph{distinct} covered elements is at most
    $3w_{\mathrm{set}}+w_{\mathrm{elm}}$, and therefore
    \[
    |U|=3q \;\le\; 3w_{\mathrm{set}}+w_{\mathrm{elm}}
    \le 3(w_{\mathrm{set}}+w_{\mathrm{elm}}) \;\le\; 3q.
    \]
    All inequalities must be tight. In particular, $w_{\mathrm{set}}+w_{\mathrm{elm}}=q$
    and $w_{\mathrm{elm}}=0$, so $\mathbf{o}$ uses exactly $q$ set-rounds and no element-rounds.
    Tightness also forces these $q$ set-rounds to cover $3q$ distinct elements, hence
    their underlying 3-sets are pairwise disjoint and have union $U$, resulting in an exact cover of size $q$. This contradicts that the X3C instance is a no-instance.
    Therefore, the constructed $\Phi$-\textsc{UTIL} instance is a no-instance, and
    $\Phi$-\textsc{UTIL} is NP-hard.
    
    Finally, we show membership in NP.
    Given a certificate outcome $\mathbf{o} \in P^\ell$, we can compute $\util(\mathbf{o})$ and
    all $\sat_i(\mathbf{o})$ in $\mathcal{O}(n\ell)$ time.
    To verify $\Phi$ under static preferences and the promise that each candidate has at most $4$ approvers, it suffices to enumerate all voter groups $S\subseteq N$ with $\bigcap_{i\in S} A_i\neq \varnothing$:
    for any such $S$ pick any $p\in \bigcap_{i\in S} A_i$; then
    $S\subseteq \{i\in N : p\in A_i\}$, so $|S|\le 4$ and thus there are at most
    $2^4$ such groups per candidate.
    For $\JR/\PJR/\EJR$, under static preferences, any group either agrees in all rounds or in none, so it suffices to check only $t=\ell$ because $t\mapsto \lfloor t \cdot |S|/n\rfloor$ is nondecreasing and the left-hand side does not depend on $t$.
    For $\Phi = \EJRplus$, membership in NP follows from Lemma~\ref{lem:static_ejr_ejr+}, and we can verify EJR in polynomial time.
    Thus, $\Phi$-\textsc{UTIL} is in NP.
    
    Therefore, $\Phi$-\textsc{UTIL} is NP-complete, even when voters have static preferences and each candidate is approved by at most $4$ voters.

\subsection{Proof of Theorem~\ref{thm:intractable_APXh}}
    We give a PTAS-reduction from \textsc{Min-Vertex-Cover} on cubic graphs, which is known to be APX-complete \cite{AlimontiKann}.
    
    Fix $\Phi\in\{\JR,\PJR,\EJR,\EJRplus\}$. If $\Phi=\EJRplus$, then by Lemma~\ref{lem:static_ejr_ejr+}, it suffices to prove the claim for $\EJR$. Hence, throughout the proof we assume $\Phi\in\{\JR,\PJR,\EJR\}$.
    
    Let $G=(V,E)$ be a cubic graph with $|V|=m$ and $|E|=3m/2$.
    We construct a temporal election $E_G=(P,N,\ell,(A_i)_{i\in N})$
    with static preferences as follows.
    \begin{itemize}
        \item \emph{Candidates. }
        \begin{equation*}
            P := \{z\}\ \cup\ \{c_v : v\in V\}\ \cup\ \{p_e : e\in E\}.
        \end{equation*}
        \item \emph{Voters and approvals.} Define three types of voters.
        \begin{itemize}
            \item \emph{Baseline voters:} voters $b_1,\dots,b_8$ with $A_{b_j}=\{z\}$.
            \item \emph{Vertex supporters:} for each $v\in V$, voters $s_{v,1},s_{v,2}$ with
            $A_{s_{v,k}}=\{c_v\}$ for $k\in\{1,2\}$.
            \item \emph{Edge gadget:} for each edge $e=\{u,v\}\in E$, voters $a_{e,1},a_{e,2},a_{e,3},a_{e,4}$ with $A_{a_{e,1}}=A_{a_{e,2}}=A_{a_{e,3}}=\{p_e\}$ and $A_{a_{e,4}}=\{p_e,c_u,c_v\}$.
        \end{itemize}
        Thus every voter approves at most $3$ candidates.
    \end{itemize}
    Then, the number of voters is
    \begin{equation*}
        n:=8+2|V|+4|E|=8+2m+4\cdot\frac{3m}{2}=8m+8=8(m+1).
    \end{equation*}
    Set the number of rounds
    \begin{equation*}
        \ell := \frac{n}{4}=2(m+1)=2m+2.
    \end{equation*}
    For each candidate $x\in P$, define its approval weight
    \[
    w(x):=\bigl|\{i\in N : x\in A_i\}\bigr|.
    \]
    By construction, $w(z)=8$, $w(p_e)=4$ for all $e \in E$, and $w(c_v)=2+\deg_G(v)=5$ for all $v \in V$, since $G$ is cubic, where $\deg_G(\cdot)$ is a function that returns the degree of the vertex in $G$. 
    Under static preferences, for any outcome $\mathbf{o}=(o_1,\dots,o_\ell)\in P^\ell$, we have
    \begin{equation}\label{eq:static-welfare}
        \util(\mathbf{o})=\sum_{r=1}^{\ell} w(o_r).
    \end{equation}    

    Now, fix an edge $e=\{u,v\}\in E$ and define the voter set
    \begin{equation*}
        S_e := \{a_{e,1},a_{e,2},a_{e,3},a_{e,4}\}.
    \end{equation*}
    All voters in $S_e$ approve $p_e$, thus (under static preferences) $S_e$ agrees in all $\ell$ rounds, and
    \[
    \left\lfloor \frac{\ell\cdot |S_e|}{n}\right\rfloor
    =
    \left\lfloor \frac{(n/4)\cdot 4}{n}\right\rfloor
    =1.
    \]
    Therefore, for every $\Phi\in\{\JR,\PJR,\EJR\}$, providing $\Phi$ (i.e., $\Phi$-feasible) implies that $S_e$ is represented at least once, i.e.,
    \begin{equation}\label{eq:edge-rep}
    \exists\, t\in[\ell]\ \text{such that}\ o_t \in \bigcup_{i\in S_e} A_i = \{p_e,c_u,c_v\}.
    \end{equation}
    
    Next, we show that we may restrict our attention to a simple class of outcomes.
    
    \medskip
    \noindent\emph{Claim 1.} There exists an optimal $\Phi$-feasible outcome that never selects any $p_e$.
    
    \smallskip
    \noindent\emph{Proof.}
    Let $\mathbf{o}$ be any outcome that provides $\Phi$ (i.e., $\Phi$-feasible), and suppose $o_t=p_e$ for some $e=\{u,v\}$.
    Construct $\mathbf{o}'$ by replacing $o_t$ with $c_u$ (either endpoint works).
    Condition~\eqref{eq:edge-rep} for edge $e$ remains satisfied because $c_u\in\{p_e,c_u,c_v\}$.
    No other representation constraint can be violated by this replacement: the only voters who approve $p_e$ are exactly the members of $S_e$, and after the replacement the voter $a_{e,4}\in S_e$ is still satisfied in round $t$ (since $c_u\in A_{a_{e,4}}$).
    Finally, by~\eqref{eq:static-welfare} the welfare strictly increases because $w(c_u)=5>w(p_e)=4$.
    Moreover, any cohesive group whose common approved candidate is $p_e$ is contained in $S_e$; and since $\ell=n/4$, every proper subset $S\subsetneq S_e$ has demand $\lfloor \ell|S|/n\rfloor=\lfloor |S|/4\rfloor=0$, so the only potentially binding constraint involving $p_e$ is for $S_e$ itself.
    Repeating this replacement exhaustively gives us an optimal $\Phi$-feasible outcome with no $p_e$. \qed
    
    \medskip
    \noindent\emph{Claim 2.} There exists an optimal $\Phi$-feasible outcome in which each $c_v$ appears in at most one round.
    
    \smallskip
    \noindent\emph{Proof.}
    Take an optimal $\Phi$-feasible outcome with no $p_e$ (Claim~1).
    If some $c_v$ appears in at least two rounds, keep one occurrence and replace each additional occurrence by $z$.
    This strictly increases welfare since $w(z)=8>w(c_v)=5$.
    We argue that $\Phi$-feasibility is preserved.
    
    First, for every edge $e$ incident to $v$, condition~\eqref{eq:edge-rep} only requires \emph{existence} of a witnessing round, and the remaining single occurrence of $c_v$ still witnesses~\eqref{eq:edge-rep} for all edges incident to $v$.
    
    Second, any cohesive group $S$ with $c_v \in \bigcap_{i\in S}A_i$ must satisfy $S\subseteq \{i\in N: c_v\in A_i\}$, which has size $w(c_v)=5$.
    Since $\ell=n/4$, we have
    \[
    \left\lfloor \frac{\ell\cdot |S|}{n}\right\rfloor
    \le
    \left\lfloor \frac{\ell\cdot 5}{n}\right\rfloor
    =
    \left\lfloor \frac{5}{4}\right\rfloor
    =1.
    \]
    Thus any $\Phi$-requirement involving such an $S$ can be satisfied already with a single round in which some voter in $S$ approves the chosen candidate; keeping one occurrence of $c_v$ suffices, and replacing further occurrences by $z$ cannot destroy feasibility.
    \qed
    
    \medskip
    By Claims~1--2, there exists an optimal $\Phi$-feasible outcome of the following canonical form:
    choose a set $U\subseteq V$, select each $c_v$ for $v\in U$ in exactly one round, and select $z$ in the remaining $\ell-|U|$ rounds.

    Now, we prove that there exists a vertex cover of $G$ if and only if there exists a feasible canonical outcome in our constructed instance.

    For the `if' direction, let $\mathbf{o}$ be canonical with corresponding $U\subseteq V$.
    Since $\mathbf{o}$ never selects any $p_e$, condition~\eqref{eq:edge-rep} implies that for each edge $e=\{u,v\}$ at least one of $c_u,c_v$ is selected, i.e., at least one of $u,v$ lies in $U$.
    Thus $U$ is a vertex cover of $G$.
    
    For the `only if' direction, let $U$ be a vertex cover of $G$, and consider the canonical outcome $\mathbf{o}$ that selects each $c_v$ for $v\in U$ exactly once and selects $z$ in the remaining rounds.
    We verify that $\mathbf{o}$ is $\Phi$-feasible.
    Consider any $\Phi$-constraint induced by some cohesive group $S$ and some $t>0$ for which the demanded amount is positive.
    Since $t\le \ell=n/4$, positivity of $\lfloor t \cdot |S|/n\rfloor$ implies $|S|\ge 4$.
    Since preferences are static and $S$ is cohesive, pick any $x\in\bigcap_{i\in S}A_i$; then $S\subseteq \{i\in N: x\in A_i\}$.
    We distinguish cases.
    
    \begin{itemize}
    \item If $x=z$, then $S\subseteq\{b_1,\dots,b_8\}$, and the maximum possible demand over such groups is
    $\lfloor \ell\cdot 8/n\rfloor = 2$.
    Note that for PJR/EJR, the maximum demand is $2$ (for $|S|=8$); for JR it is at most $1$. Since we select $z$ at least $m+2\ge 2$ times, all these constraints are satisfied.
    
    \item If $x=p_e$ for some $e=\{u,v\}$, then necessarily $S\subseteq \{i\in N: p_e\in A_i\}=S_e$ and $|S|\ge 4$ implies $S=S_e$.
    Since $U$ is a vertex cover, at least one endpoint (say $u$) lies in $U$, so $\mathbf{o}$ selects $c_u$ in some round.
    Because $c_u\in A_{a_{e,4}}$ and $a_{e,4}\in S_e$, this gives the required representation for $S_e$ (and its demand equals $1$ as computed above).
    
    \item If $x=c_v$ for some $v\in V$, then $S\subseteq \{i\in N: c_v\in A_i\}$, which has size $5$; hence the demand is at most $1$ as in Claim~2.
    Moreover, since only $s_{v,1},s_{v,2}$ approve \emph{only} $c_v$, any such $S$ with $|S|\ge 4$ must contain some voter of the form $a_{e,4}$ for an edge $e=\{v,u\}$ incident to $v$.
    If $v\in U$, then $\mathbf{o}$ selects $c_v$ and the constraint is satisfied.
    If $v\notin U$, then because $U$ is a vertex cover we have $u\in U$, so $\mathbf{o}$ selects $c_u$; since $c_u\in A_{a_{e,4}}$, the voter $a_{e,4}\in S$ is satisfied, which meets the $\JR/\PJR/\EJR$ requirement (with demand $1$).
    \end{itemize}
    Thus $\mathbf{o}$ is $\Phi$-feasible.

    If $|U|=k$, then by~\eqref{eq:static-welfare}, a canonical outcome has welfare
    \[
    \util(\mathbf{o})=8(\ell-k)+5k=8\ell-3k.
    \]
    Hence maximizing $\util$ over $\Phi$-feasible outcomes is equivalent to minimizing $k$, and
    \[
    W^\Phi(E_G)=8\ell-3\cdot \mathrm{vc}(G),
    \]
    where $\mathrm{vc}(G)$ is the minimum vertex cover size.
    
    Assume for contradiction that $\Phi\textsc{-MaxUTIL}$ admits a PTAS.
    Given $\varepsilon\in(0,1)$, let $\delta:=\varepsilon/15$ and run the PTAS on $E_G$ to obtain a $\Phi$-feasible outcome $\mathbf{o}$ with
    $\util(\mathbf{o})\ge (1-\delta) \cdot W^\Phi(E_G)$.
    Canonicalize $\mathbf{o}$ using Claims~1--2 (this does not decrease welfare), and let $k$ be the size of the corresponding vertex cover; let $k^*:=\mathrm{vc}(G)$.
    Then
    \[
    8\ell-3k \ \ge\ (1-\delta)(8\ell-3k^*)
    \implies 
    k \ \le\ (1-\delta)k^* + \frac{8\delta\ell}{3}.
    \]
    In a cubic graph, each vertex covers at most $3$ edges, so
    $k^*\ge |E|/3 = (3m/2)/3 = m/2$.
    Thus $m\le 2k^*$, and hence
    \[
    \ell = 2m+2 \le 4k^*+2 \le 6k^*,
    \]
    since $k^*\ge 1$ for cubic graphs.
    Therefore,
    \[
    k \ \le\ (1-\delta)k^* + \frac{8\delta\cdot 6k^*}{3}
    = (1+15\delta)k^*
    = (1+\varepsilon)k^*.
    \]
    So the PTAS for $\Phi\textsc{-MaxUTIL}$ would give us a PTAS for \textsc{Min-Vertex-Cover} on cubic graphs, contradicting APX-completeness unless $\mathrm{P}=\mathrm{NP}$.
    Thus, $\Phi\textsc{-MaxUTIL}$ is APX-hard.
    
    Finally, note that by construction, preferences are static, each voter approves at most $3$ candidates, and each candidate is approved by at most $8$ voters.

\section{Omitted Proofs from Section~\ref{sec:parameterized}}

\subsection{Proof of Theorem~\ref{thm:JR-FPT-static-m}}
    Fix a nonempty voter group $S\subseteq N$ with $\bigcap_{i\in S}A_i\neq\varnothing$ (so $S$ agrees in every round). 
    If $\bigcap_{i\in S}A_i=\varnothing$, then under static preferences $S$ never agrees in any round, hence imposes no JR constraint; thus we consider only $\bigcap_{i\in S}A_i\neq\varnothing$.
    Since the right-hand side in Definition~\ref{defn:jr_pjr_ejr} is nondecreasing in $t$ and $\sat_S(\mathbf{o})$ does not depend on $t$, the strongest JR requirement for such $S$ is obtained by taking $t=\ell$, i.e.,
    \[
    \sat_S(\mathbf{o}) \ge \min\left\{1,\left\lfloor\frac{\ell \cdot |S|}{n}\right\rfloor\right\}.
    \]
    Define $\eta := \left\lceil\frac{n}{\ell}\right\rceil$.
    Then $\min\{1,\lfloor \ell \cdot |S|/n\rfloor\}=1$ if and only if $\ell \cdot |S|\ge n$, equivalently $|S|\ge \eta$; otherwise the JR inequality is trivial. Moreover, under static preferences, for any outcome $\mathbf{o}$ with multiplicities $\mathbf{x}$, we have
    \begin{equation*}
        \sat_S(\mathbf{o})\ge 1
    \iff
    \exists\, i\in S:\ \sat_i(\mathbf{x})\geq 1,
    \end{equation*}
    because $\sat_S(\mathbf{o})$ counts rounds $r$ with $o_r \in \bigcup_{i\in S}A_i$, which happens in some round if and only if some voter in $S$ approves the chosen candidate in that round.
    
    For each candidate $p\in P$ let $N(p):=\{i\in N: p\in A_i\}$ be its set of approvers. Given $\mathbf{x}$, define the set of \emph{unsatisfied} approvers of $p$ by
    \begin{equation*}
        U_p(x) := \{i\in N(p): \sat_i(x)=0\}.
    \end{equation*}
    We claim that a multiplicity vector $\mathbf{x}$ is JR-feasible if and only if
    \begin{equation}\label{eq:JR-candwise}
        |U_p(x)| \leq \eta-1\qquad\text{for all }p\in P.
    \end{equation}
    Indeed, if \eqref{eq:JR-candwise} fails for some $p$, then $U_p(x)$ is cohesive (all its members approve $p$), has size at least $\eta$ (and since $|U_p(\mathbf{x})|\ge \eta=\lceil n/\ell\rceil$, we have $\ell|U_p(\mathbf{x})|\ge \ell\eta\ge n$, and thus $\min\{1,\lfloor \ell|U_p(\mathbf{x})|/n\rfloor\}=1$), and contains no satisfied voter; equivalently $\sat_{U_p(\mathbf{x})}(\mathbf{o})=0$, contradicting JR.
    Conversely, if JR is violated, then there exists a cohesive group $S$ with $|S|\ge \eta$ such that no voter in $S$ is satisfied (so $\sat_i(\mathbf{x})=0$ for all $i\in S$). Pick any $p\in\bigcap_{i\in S}A_i$; then $S\subseteq U_p(x)$, so $|U_p(\mathbf{x})|\ge |S|\ge \eta$, again contradicting \eqref{eq:JR-candwise}.
    
    Let $\mathcal{C}:=\{A_i: i\in N\}$ be the set of distinct approval sets (approval classes). For each class $C\in\mathcal{C}$ let
    $n_C:=|\{i\in N: A_i=C\}|$.
    All voters in class $C$ have the same satisfaction
    $v_C(x):=\sum_{p\in C}x_p$.
    Introduce, for each class $C\in\mathcal{C}$, a binary variable $y_C$ that indicates whether the class is satisfied at least once:
    \[
    y_C=1 \ \Longleftrightarrow\ v_C(x)\ge 1.
    \]
    This is enforced by the linear constraints
    \[
    \sum_{p\in C}x_p \ge y_C, \
    \sum_{p\in C}x_p \le \ell y_C, \
    y_C\in\{0,1\}\text{ for all }C\in\mathcal{C}.
    \]
    Then the number of unsatisfied approvers of a candidate $p$ equals
    \[
    |U_p(x)| = \sum_{C\in\mathcal{C}: p\in C} n_C (1-y_C),
    \]
    and the JR feasibility constraints \eqref{eq:JR-candwise} become the linear inequalities
    \[
    \sum_{C\in\mathcal{C}: p\in C} n_C(1-y_C) \le \eta-1
    \text{ for all } p\in P.
    \]
    
    We therefore solve the following ILP:
    \[
    \max \sum_{p\in P} w(p)\,x_p
    \]
    subject to
    \[
    \sum_{p\in P}x_p=\ell \text{ and } x_p\in\mathbb{Z}_{\ge 0} \text{ for all } p\in P,
    \]
    the class constraints defining the $y_C$ above, and the JR constraints for all $p\in P$.
    By the discussion above, feasible solutions correspond exactly to JR-feasible multiplicity vectors, and the objective equals $\util(\mathbf{x})$.
    
    The ILP has $m$ integer variables $(x_p)_{p\in P}$ and $|\mathcal{C}|$ binary (hence integer) variables $(y_C)_{C\in\mathcal{C}}$. Since $\mathcal{C}\subseteq 2^{P}$, we have $|\mathcal{C}|\le 2^{m}$, so the number of integer variables is bounded by a function of $m$ only. By Lenstra's theorem, this ILP can be solved in time $f(m)\cdot \mathrm{poly}(n,  \ell)$ for some computable function $f$. From an optimal solution $\mathbf{x}$, we output an outcome by selecting each candidate $p$ in exactly $x_p$ rounds (in arbitrary order), which achieves optimal utilitarian welfare among all JR-feasible outcomes.

\subsection{Proof of Theorem~\ref{thm:PJR-FPT-static-m}}
    Fix a nonempty voter group $S\subseteq N$ and define $U_S:=\bigcup_{i\in S}A_i\subseteq P$.
    If $\bigcap_{i\in S}A_i=\varnothing$, then $S$ does not agree in any round (and thus cannot agree in any nonempty subset of rounds), so Definition~\ref{defn:jr_pjr_ejr} imposes no PJR constraint for $S$.
    Otherwise, $S$ agrees in every round, and since $t\mapsto \lfloor t \cdot |S|/n\rfloor$ is nondecreasing, it suffices to enforce the strongest constraint obtained by taking $t=\ell$:
    \[
    \sat_S(\mathbf{o})
    = \bigl|\{r\in[\ell] : o_r\in U_S\}\bigr|
    = \sum_{p\in U_S} x_p
    \ \ge\
    \Bigl\lfloor \frac{\ell \cdot|S|}{n}\Bigr\rfloor .
    \]

    Under static preferences, either $\bigcap_{i\in S}A_i=\varnothing$ and $S$ agrees in no round, or $\bigcap_{i\in S}A_i\neq\varnothing$ and $S$ agrees in every round; hence for cohesive $S$ it suffices to consider $t=\ell$.

    For every subset $U\subseteq P$, define
    \[
    g(U):=
    \begin{cases}
    0, & U=\varnothing,\\[2mm]
    \max\limits_{c\in U}\ \bigl|\{i\in N : c\in A_i \text{ and } A_i\subseteq U\}\bigr|, & U\neq\varnothing.
    \end{cases}
    \]
    We claim that a multiplicity vector $\mathbf{x}$ satisfies PJR if and only if for every $U\subseteq P$,
    \begin{equation}\label{eq:PJR-compressed}
    \sum_{p\in U} x_p \ \ge\ \Bigl\lfloor \frac{\ell\cdot g(U)}{n}\Bigr\rfloor.
    \end{equation}
    
    We first prove the forward direction.
    Fix $U\subseteq P$. If $g(U)=0$, then \eqref{eq:PJR-compressed} is trivial.
    Otherwise, pick $c\in U$ attaining $g(U)$ and let
    \[
    S := \{i\in N : c\in A_i \text{ and } A_i\subseteq U\}.
    \]
    Then $S\neq\varnothing$, and $\{c\} \subseteq \bigcap_{i\in S}A_i$, so $S$ agrees in every round. Moreover $\bigcup_{i\in S}A_i\subseteq U$.
    Applying the (static) PJR inequality to $S$ (with $t=\ell$) gives us
    \[
    \sum_{p\in U}x_p
    \ge 
    \sum_{p\in \cup_{i\in S}A_i}x_p
    \ge
    \Bigl\lfloor \frac{\ell \cdot|S|}{n}\Bigr\rfloor
    =
    \Bigl\lfloor \frac{\ell\cdot g(U)}{n}\Bigr\rfloor,
    \]
    which is exactly \eqref{eq:PJR-compressed}.
    
    Next, we prove the backward direction.
    Let $S\subseteq N$ be any nonempty group with $\bigcap_{i\in S}A_i\neq\varnothing$, and set $U:=U_S=\bigcup_{i\in S}A_i$.
    Pick any $c\in \bigcap_{i\in S}A_i$. Then $c\in U$ and for every $i\in S$ we have $c\in A_i$ and $A_i\subseteq U$, so
    \[
    |S|
    \ \le\
    \bigl|\{i\in N : c\in A_i \text{ and } A_i\subseteq U\}\bigr|
    \ \le\
    g(U).
    \]
    Using \eqref{eq:PJR-compressed} for this $U$ gives
    \[
    \sat_S(\mathbf{o})
    = \sum_{p\in U}x_p
    \ \ge\
    \Bigl\lfloor \frac{\ell\cdot g(U)}{n}\Bigr\rfloor
    \ \ge\
    \Bigl\lfloor \frac{\ell \cdot |S|}{n}\Bigr\rfloor,
    \]
    which is exactly the strongest static PJR requirement for $S$ (and hence implies all weaker requirements for smaller $t$).
    
    By the claim, PJR-feasible outcomes correspond exactly to multiplicity vectors $\mathbf{x}\in\mathbb{Z}_{\ge 0}^{P}$ satisfying $\sum_{p\in P}x_p=\ell$ and \eqref{eq:PJR-compressed} for all $U\subseteq P$.
    Therefore PJR\textsc{-MaxUTIL} reduces to the ILP
    \begin{equation*}
        \max \sum_{p\in P} w(p)x_p
    \end{equation*}
    subject to
    \begin{align*}
        \sum_{p\in P}x_p=\ell, x_p\in\mathbb{Z}_{\ge 0} \text{ for all } p \in P, \text{ and} \\
        \sum_{p\in U}x_p \ge \Bigl\lfloor \frac{\ell\cdot g(U)}{n}\Bigr\rfloor \text{ for all } U\subseteq P.
    \end{align*}
    This ILP has exactly $m$ integer variables. 
    The number of constraints is $2^m+1$, and all coefficients/right-hand sides can be computed from the input in time $f_1(m)\cdot \mathrm{poly}(n,\ell)$.
    By Lenstra's theorem \cite{lenstra1983integer}, an ILP with a fixed number of integer variables can be solved in time $f_2(m)\cdot \mathrm{poly}(n,\ell)$ for a computable function $f_2$ depending only on $m$.
    From an optimal solution $\mathbf{x}$, we output any outcome that selects each candidate $p$ exactly $x_p$ times; under static preferences this outcome achieves welfare $\util(\mathbf{x})$ and satisfies PJR by construction.

\subsection{Proof of Theorem~\ref{thm:fpt-m-candidates-ejr}}
    Fix a nonempty voter set $S\subseteq N$. Under static preferences, $S$ agrees in a round $r$ iff $\bigcap_{i\in S} s_{i,r}=\bigcap_{i\in S} A_i\neq\varnothing$; in that case it agrees in \emph{every} round. Since $t\mapsto \lfloor t|S|/n\rfloor$ is nondecreasing and $\sat_i(\mathbf{x})$ is independent of $t$, it suffices to enforce EJR only for $t=\ell$:
    \begin{equation}\label{eq:thm53-static-ejr}
    \forall\,S\subseteq N\text{ with }\bigcap_{i\in S} A_i\neq\varnothing,\quad
    \exists i\in S: \sat_i(x) \geq \Bigl\lfloor \frac{\ell \cdot |S|}{n}\Bigr\rfloor .
    \end{equation}
    
    For each candidate $c\in P$, let
    \[
    N(c) := \{ i\in N : c\in A_i\}
    \]
    be the set of voters approving $c$. 
    For a fixed $\mathbf{x}$, sort the multiset $\{\sat_i(\mathbf{x}): i\in N(c)\}$ in nondecreasing order and denote the order statistics by
    \[
    b_{c,1}(\mathbf{x}) \le b_{c,2}(\mathbf{x}) \leq \cdots \leq b_{c,|N(c)|}(\mathbf{x}).
    \]
    (If $N(c)=\varnothing$, there are no constraints associated with $c$.)
    
    \begin{claim}\label{clm:thm53-order}
        A multiplicity vector $\mathbf{x}$ satisfies EJR if and only if for every $c\in P$ and every $s\in[|N(c)|]$,
    \begin{equation}\label{eq:thm53-order-ineq}
    b_{c,s}(\mathbf{x}) \ge \Bigl\lfloor \frac{\ell s}{n}\Bigr\rfloor .
    \end{equation}
    \end{claim}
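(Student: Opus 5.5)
The plan is to prove the two directions of Claim~\ref{clm:thm53-order} separately, using the reduction~\eqref{eq:thm53-static-ejr}. Under static preferences the only EJR constraints come from groups $S$ with $\bigcap_{i\in S}A_i\neq\varnothing$, each taken at $t=\ell$. Every such $S$ is contained in $N(c)$ for some $c\in\bigcap_{i\in S}A_i$. So the claim amounts to a statement about subsets of each approver set $N(c)$, which I will handle via order statistics.

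For the forward direction, assume $\mathbf{x}$ satisfies EJR, and fix $c\in P$ and $s\in[|N(c)|]$. Let $S\subseteq N(c)$ consist of the $s$ voters of $N(c)$ with the smallest satisfactions, breaking ties arbitrarily, so that their satisfactions are exactly $b_{c,1}(\mathbf{x}),\dots,b_{c,s}(\mathbf{x})$. Then $c\in\bigcap_{i\in S}A_i$, so $S$ is cohesive and $|S|=s$. By~\eqref{eq:thm53-static-ejr}, some $i\in S$ has $\sat_i(\mathbf{x})\ge\lfloor \ell s/n\rfloor$. Every voter in $S$ has satisfaction at most $b_{c,s}(\mathbf{x})$, so $b_{c,s}(\mathbf{x})\ge \max_{i\in S}\sat_i(\mathbf{x})\ge\lfloor \ell s/n\rfloor$, which is~\eqref{eq:thm53-order-ineq}.

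For the backward direction, assume~\eqref{eq:thm53-order-ineq} holds for all $c,s$. Take any nonempty $S$ with $\bigcap_{i\in S}A_i\neq\varnothing$, pick $c$ in the intersection, and set $s:=|S|$, so $S\subseteq N(c)$ and $s\le|N(c)|$. The key fact is that the maximum satisfaction over any $s$-element subset of $N(c)$ is at least the $s$-th smallest value $b_{c,s}(\mathbf{x})$. Indeed, if all $s$ members had satisfaction strictly below $b_{c,s}$, there would be $s$ elements of the multiset below its $s$-th order statistic, which is impossible. Hence $\max_{i\in S}\sat_i(\mathbf{x})\ge b_{c,s}(\mathbf{x})\ge\lfloor \ell s/n\rfloor$, so~\eqref{eq:thm53-static-ejr} holds for $S$, and $\mathbf{x}$ satisfies EJR.

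The only delicate point is this pigeonhole step about order statistics with ties, and stating it with multiset counting handles it cleanly. The rest is immediate once the problem has been reduced to $t=\ell$ and to subsets of the sets $N(c)$. Both reductions are already given in~\eqref{eq:thm53-static-ejr}, and the first also rests on Lemma~\ref{lem:static_ejr_ejr+} for EJR+.
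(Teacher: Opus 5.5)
Your proof is correct and follows the paper's argument essentially verbatim. In the forward direction you take the $s$ lowest-satisfaction voters of $N(c)$ as the cohesive witness group, and in the backward direction you use the same counting fact that at most $s-1$ entries of the multiset lie strictly below $b_{c,s}(\mathbf{x})$. (One small note on your closing remark: the reduction to $t=\ell$ needs only static preferences and the monotonicity of $t\mapsto\lfloor t|S|/n\rfloor$. Lemma~\ref{lem:static_ejr_ejr+} plays no role in the claim itself and is used only afterwards to carry the algorithm over to EJR+.)
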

    
    \begin{proof}[Proof of Claim~\ref{clm:thm53-order}]
    We first prove the forward direction.
    Assume \eqref{eq:thm53-static-ejr} holds. Fix $c\in P$ and $s\in[|N(c)|]$, and let $S$ be the set of the $s$ voters in $N(c)$ with the smallest satisfaction values under $\mathbf{x}$. 
    Then $S$ is cohesive because $c\in \bigcap_{i\in S}A_i$, so \eqref{eq:thm53-static-ejr} implies that some $i\in S$ has $\sat_i(\mathbf{x})\ge \lfloor \ell s/n\rfloor$. 
    By construction, $\max_{i\in S}\sat_i(\mathbf{x})=b_{c,s}(\mathbf{x})$, hence \eqref{eq:thm53-order-ineq} follows.
    
    Next, we prove the backward direction.
    Assume \eqref{eq:thm53-order-ineq} holds for all $c$ and $s$. Let $S\subseteq N$ be any nonempty cohesive group and pick some $c\in \bigcap_{i\in S}A_i$; then $S\subseteq N(c)$. 
    Let $s:=|S|$. Among the $|N(c)|$ values $\{\sat_i(\mathbf{x}): i\in N(c)\}$, at most $s-1$ are strictly smaller than $b_{c,s}(\mathbf{x})$.
    Thus, there exists $i\in S$ with $\sat_i(\mathbf{x})\geq b_{c,s}(\mathbf{x})\ge \lfloor \ell s/n\rfloor = \lfloor \ell \cdot |S|/n\rfloor$, which is exactly \eqref{eq:thm53-static-ejr}. 
    Thus $\mathbf{x}$ satisfies EJR.
    \end{proof}
    
    Let $\mathcal{C}:=\{A_i : i\in N\}\subseteq 2^P$ be the set of approval classes, and for each $C\in \mathcal{C}$ let
    \[
    n_C\ := |\{i\in N : A_i=C\}|
    \text{ and }
    v_C(\mathbf{x}) := \sum_{p\in C} x_p.
    \]
    Thus all voters in class $C$ have satisfaction $v_C(x)$. Let $\rho:=|C|$; since $C\subseteq 2^P$, we have $\rho\le 2^m$.
    
    Claim~\ref{clm:thm53-order} reduces EJR to inequalities about the order statistics of satisfactions among $N(c)$ for each $c$. These order statistics depend only on the relative order of the class values $(v_C(\mathbf{x}))_{C\in \mathcal{C}}$, which we linearize by enumerating all possible orders.
    
    Specifically, enumerate all permutations $\pi$ of the $\rho$ classes. 
    For a fixed permutation $\pi$, introduce integer variables $v_C$ (for $C\in \mathcal{C}$) and enforce:
    \begin{equation}\label{eq:thm53-ordpi}
    v_{\pi(1)}\ \le\ v_{\pi(2)} \le \cdots \le v_{\pi(\rho)}.
    \end{equation}
    Ties are allowed; if multiple classes share the same satisfaction value, any order among them is consistent with the sorted multiset.
    For a candidate $c\in P$, let $\pi_c(1),\ldots,\pi_c(k_c)$ be the subsequence of $\pi$ consisting of exactly those classes that contain $c$, in the same relative order as in $\pi$. Define
    \[
    H_{c,0}:=0 \text{ and }
    H_{c,j}:=\sum_{h=1}^j n_{\pi_c(h)}\text{ for } j=1,\ldots,k_c,
    \]
    so that $H_{c,k_c}=|N(c)|$. Under \eqref{eq:thm53-ordpi}, the sorted list of satisfactions of voters in $N(c)$ is obtained by taking $n_{\pi_c(1)}$ copies of $v_{\pi_c(1)}$, then $n_{\pi_c(2)}$ copies of $v_{\pi_c(2)}$, and so on. Hence, for any $s$ with $H_{c,j-1}<s\le H_{c,j}$ we have $b_{c,s}(x)=v_{\pi_c(j)}$.
    
    Because $s\mapsto \lfloor \ell s/n\rfloor$ is nondecreasing, within each block $(H_{c,j-1},H_{c,j}]$ the strongest inequality in \eqref{eq:thm53-order-ineq} occurs at $s=H_{c,j}$. Therefore, for a fixed $\pi$, the family of inequalities \eqref{eq:thm53-order-ineq} for candidate $c$ is equivalent to the boundary constraints
    \begin{equation}\label{eq:thm53-bndpi}
    v_{\pi_c(j)} \ge \Bigl\lfloor \frac{\ell \cdot H_{c,j}}{n}\Bigr\rfloor \text{ for all } c\in P \text{ and } j\in[k_c].
    \end{equation}
    
    Fix a permutation $\pi$. Consider the following ILP with integer variables $(x_p)_{p\in P}$ and $(v_C)_{C\in \mathcal{C}}$:
    \[
    \max \sum_{p\in P} w(p)\,x_p
    \]
    subject to
    \begin{align*}
        \sum_{p\in P} x_p=\ell, x_p\in\mathbb{Z}_{\ge 0}\text{ for all } p\in P, \text{ and } \\
        v_C=\sum_{p\in C} x_p \text{ for all } C\in \mathcal{C},
    \end{align*}
    and the constraints \eqref{eq:thm53-ordpi} and \eqref{eq:thm53-bndpi}.
    Every feasible solution defines a multiplicity vector $\mathbf{x}$ whose induced class satisfactions respect the order $\pi$ and satisfy all inequalities \eqref{eq:thm53-order-ineq}, hence satisfy EJR by Claim~\ref{clm:thm53-order}. Conversely, if $\mathbf{x}$ is EJR-feasible, choose any permutation $\pi$ that orders the classes nondecreasingly by the realized values $v_C(\mathbf{x})$ (breaking ties arbitrarily); then $(\mathbf{x},(v_C(\mathbf{x}))_{C\in \mathcal{C}})$ satisfies \eqref{eq:thm53-ordpi} and \eqref{eq:thm53-bndpi}, so it is feasible for the corresponding ILP and attains the same welfare.
    Indeed, \eqref{eq:thm53-bndpi} follows by applying \eqref{eq:thm53-order-ineq} at $s=H_{c,j}$ and using $b_{c,H_{c,j}}(x) = v_{\pi_c(j)}(x)$ under the consistent order \eqref{eq:thm53-ordpi}.
    
    Thus, the optimal EJR-feasible welfare equals the maximum ILP optimum over all permutations $\pi$. We compute this by enumerating all $\rho!$ permutations and solving each ILP, taking the best solution. Each ILP has $m+\rho\le m+2^m$ integer variables, so by Lenstra's theorem \cite{lenstra1983integer} it can be solved in time $g(m+\rho)\cdot \mathrm{poly}(n,\ell)$ for some computable $g$. Since $\rho\le 2^m$, the total running time is $f(m)\cdot \mathrm{poly}(n,\ell)$ for a computable $f$.
    
    Finally, from an optimal multiplicity vector $x$ we output an outcome $o\in P^\ell$ that selects each candidate $p$ exactly $x_p$ times; this preserves all satisfactions and $\util$, hence gives us an optimal solution to EJR\textsc{-MaxUTIL}.
    
    By Lemma~\ref{lem:static_ejr_ejr+}, under static preferences EJR and EJR+ are equivalent, so the same algorithm also solves EJR+\textsc{-MaxUTIL}.

\subsection{Proof of Theorem~\ref{thm:votertype_jr}}
    For any $U\subseteq T$ write
    \[
    S_U := \bigcup_{\theta\in U} N_\theta \subseteq N.
    \]
    We say that a type set $U\subseteq T$ is \emph{JR-relevant} if there exists an integer $t\ge 1$ such that $S_U$ agrees in a size-$t$ subset of rounds and
    $\lfloor t\cdot |S_U|/n\rfloor \ge 1$ (equivalently, $t\cdot |S_U|\ge n$).

    Let $X(\mathbf{o}) := \{ \theta \in T : \sat_\theta(\mathbf{o}) > 0 \}$ denote the set of satisfied types.    
    Now, we claim that an outcome $\mathbf{o}$ satisfies JR if and only if
    \begin{equation}\label{eq:jr_hit_types}
        \forall\,U\subseteq T\ \text{that are JR-relevant}, X(\mathbf{o})\cap U\neq\varnothing.
    \end{equation}
    
    We first prove the forward direction.
    Fix a JR-relevant $U$ and let $t$ witness relevance.
    Applying JR to the voter group $S_U$ with this $t$ gives us
    \[
    \sat_{S_U}(\mathbf{o}) \geq \min\{1,\lfloor t\cdot |S_U|/n\rfloor\} = 1.
    \]
    By the definition of $\sat_{S_U}(\cdot)$, this implies that some voter in $S_U$ has positive satisfaction, hence some type $\theta\in U$ has $\sat_\theta(\mathbf{o})>0$.
    Therefore $X(\mathbf{o})\cap U\neq\varnothing$.
    
    Next, we prove the backward direction. Assume~\eqref{eq:jr_hit_types}.
    Let $S\subseteq N$ and $t\ge 1$ be such that $S$ agrees in a size-$t$ subset of rounds and $\lfloor t\cdot |S|/n\rfloor\ge 1$.
    Let
    \[
    U := \{\theta\in T: S\cap N_\theta\neq\varnothing\}\subseteq T.
    \]
    Since adding further voters of types already present does not change any per-round intersections (all voters of a type have identical approvals in every round),
    $S_U$ agrees in the same set of rounds as $S$, and thus also agrees in a size-$t$ subset of rounds. Moreover $|S_U|\ge |S|$, so $t\cdot |S_U|\ge n$.
    Thus, $U$ is JR-relevant, and by~\eqref{eq:jr_hit_types} there exists $\theta\in X(\mathbf{o})\cap U$.
    Pick any voter $i\in S\cap N_\theta$; then $i$ has positive satisfaction, so $\sat_S(\mathbf{o})\ge 1$.
    Since $\min\{1,\lfloor t\cdot |S|/n\rfloor\}=1$, the JR inequality for $(S,t)$ holds.
    As $(S,t)$ was arbitrary, $\mathbf{o}$ satisfies JR.

    Now, for each round $j\in[\ell]$ and candidate $p\in P$ define
    \begin{equation*}
        U_j(p):=\{\theta\in T: p\in A_{\theta,j}\}\subseteq T \text{ and } w_j(p):=\sum_{\theta\in U_j(p)} n_\theta.
    \end{equation*}
    Then $\util(\mathbf{o})=\sum_{j=1}^\ell w_j(o_j)$.
    
    For $j\in\{0,1,\dots,\ell\}$ and $X\subseteq T$, let $\mathrm{DP}[j,X]$ be the maximum welfare achievable in the first $j$ rounds by some prefix $(o_1,\dots,o_j)$ such that the set of satisfied types after these $j$ rounds equals $X$.
    Initialize $\mathrm{DP}[0,\varnothing]=0$ and $\mathrm{DP}[0,X]=-\,\infty$ for $X\neq\varnothing$.
    For each $j=1,\dots,\ell$ and each $X\subseteq T$, update using
    \begin{align*}
        & \mathrm{DP}\bigl[j,\,X\cup U_j(p)\bigr] \\
        & = \max\Bigl\{\mathrm{DP}\bigl[j,\,X\cup U_j(p)\bigr],\ \mathrm{DP}[j-1,X]+w_j(p)\Bigr\}
    \end{align*}
    for all $p \in P$.
    Correctness follows by induction on $j$.
    
    To obtain the claimed running time, for each round $j$, it suffices to consider only the distinct type-patterns
    \[
    \mathcal{U}_j:=\{U_j(p): p\in P\}\subseteq 2^T,
    \]
    keeping one representative candidate per pattern (the transition and the value $w_j(\cdot)$ depend only on the pattern).
    Since $|\mathcal{U}_j|\le 2^\kappa$ and there are $2^\kappa$ states $X$, each DP layer is updated in $\mathcal{O}(2^\kappa\cdot 2^\kappa)= \mathcal{O}(4^\kappa)$ time.
    
    Then, recall that an outcome is JR-feasible if and only if its satisfied type set satisfies~\eqref{eq:jr_hit_types}.
    Therefore the optimal JR welfare is equivalent to
    \[
    \max\Bigl\{\mathrm{DP}[\ell,X]:\ X\subseteq T\text{ satisfies }\eqref{eq:jr_hit_types}\Bigr\},
    \]
    and an optimal outcome can be recovered by storing standard backtracking pointers during the DP.

    The DP performs $\mathcal{O}(\ell\cdot 4^\kappa)$ transitions after the
type-pattern families have been computed; including preprocessing, the
running time is $f(\kappa)\cdot \mathrm{poly}(n,m,\ell)$ for
some computable function $f$. Thus, $JR$-MAXUTIL is FPT
with respect to $\kappa$.

\subsection{Proof of Theorem~\ref{thm:ejr-maxutil-types}}
    We claim that an outcome $\mathbf{o}$ satisfies EJR if and only if
    \begin{equation}\label{eq:ejr-type-unions}
    \forall U\subseteq T, U \neq \varnothing \text{ with } d_U>0,
    \max_{\theta\in U} \sat_\theta(\mathbf{o}) \ge d_U .
    \end{equation}
    
    We first show the forward direction.
    Fix a nonempty $U\subseteq T$ with $d_U>0$.
    For any round $r\in[\ell]$, since all voters of a type have identical approval sets in that round,
    intersecting over all voters in $S_U$ is the same as intersecting once per type:
    \[
    \bigcap_{i\in S_U} s_{i,r}\ =\ \bigcap_{\theta\in U} A_{\theta,r}.
    \]
    Hence $S_U$ agrees exactly in the $\gamma_U$ rounds counted by the definition of $\gamma_U$, and therefore $S_U$
    agrees in a size-$\gamma_U$ subset of rounds, so the EJR constraint for $(S_U,t)$ is strongest at $t=\gamma_U$ because $t\mapsto \lfloor t|S_U|/n\rfloor$ is nondecreasing.
    Since EJR demands are monotone in the subset size,
    it suffices to apply EJR to the pair $(S_U,\gamma_U)$ to obtain some voter $i\in S_U$ with
    \begin{equation*}
        \sat_i(\mathbf{o}) \ge \left\lfloor \frac{\gamma_U\cdot |S_U|}{n}\right\rfloor = d_U.
    \end{equation*}
    Let $\theta\in U$ be the type of $i$. Then $\sat_\theta(\mathbf{o})=\sat_i(\mathbf{o})\geq d_U$, proving~\eqref{eq:ejr-type-unions}.
    
    Next, we show the backward direction.
    Assume~\eqref{eq:ejr-type-unions}. Let $S\subseteq N$ be any nonempty voter group and let $t\in\mathbb{N}_{>0}$
    be such that $S$ agrees in a size-$t$ subset of rounds. Define the set of types present in $S$ by
    \[
    U := \{\mathrm{type}(i) : i\in S\}\ \subseteq\ T,
    \]
    so that $S\subseteq S_U$ and $S\cap N_\theta\neq\varnothing$ for every $\theta\in U$.
    For each round $r\in[\ell]$ we again have
    \[
    \bigcap_{i\in S} s_{i,r}\ =\ \bigcap_{\theta\in U} A_{\theta,r},
    \]
    because intersecting multiple identical sets within a type does not change the intersection.
    Therefore, the rounds in which $S$ agrees are \emph{exactly} the rounds counted by $\gamma_U$; in particular,
    $t \le \gamma_U$. Also, since $S\subseteq S_U$, we have $|S|\le |S_U|$, and thus
    \[
    \left\lfloor \frac{t\cdot |S|}{n}\right\rfloor \ \le\
    \left\lfloor \frac{\gamma_U\cdot |S_U|}{n}\right\rfloor \ =\ d_U.
    \]
    If $d_U=0$, then the EJR requirement for the pair $(S,t)$ is trivial. Otherwise $d_U>0$, and~\eqref{eq:ejr-type-unions}
    gives us some $\theta^*\in U$ with $\sat_{\theta^*}(\mathbf{o})\ge d_U$.
    Pick any voter $i^*\in S\cap N_{\theta^*}$ (nonempty by definition of $U$). Then
    \[
    \sat_{i^*}(\mathbf{o}) = \sat_{\theta^*}(\mathbf{o})\ \ge\ d_U\ \ge\ \left\lfloor \frac{t\cdot |S|}{n}\right\rfloor,
    \]
    which is exactly the EJR condition for $(S,t)$. As $(S,t)$ was arbitrary, $\mathbf{o}$ satisfies EJR.
    
    For any outcome $\mathbf{o}$ and type $\theta\in T$, define the truncated value
    \[
    v_\theta(\mathbf{o}) := \min\{D,\sat_\theta(\mathbf{o})\}\ \in\ \{0,1,\ldots,D\}.
    \]
    Because $d_U\le D$ for all $U$, condition~\eqref{eq:ejr-type-unions} is equivalent to
    \begin{equation}\label{eq:ejr-truncated}
    \forall\, U\subseteq T, U \neq \varnothing \text{ with } d_U>0, 
    \max_{\theta\in U} v_\theta(\mathbf{o}) \ge d_U.
    \end{equation}
    
    Now, fix an indexing $T=\{\theta_1,\ldots,\theta_\kappa\}$, and identify a truncated satisfaction profile with a vector
    $v=(v_1,\ldots,v_\kappa)\in\{0,1,\ldots,D\}^{\kappa}$, where $v_j$ corresponds to $\theta_j$.
    
    For each round $r\in[\ell]$ and candidate $p\in P$, define the set of approving types
    \[
    X_r(p) := \{\theta\in T : p\in A_{\theta,r}\}\ \subseteq\ T,
    \]
    and for any $X\subseteq T$ define its welfare weight
    \[
    w(X) := \sum_{\theta\in X} n_\theta.
    \]
    Thus, choosing $p$ in round $r$ contributes exactly $w(X_r(p))$ to utilitarian welfare, and increases
    $\sat_\theta(\cdot)$ by $1$ precisely for $\theta\in X_r(p)$.
    
    For each round $r$, let
    \[
    \mathcal{X}_r := \{ X_r(p) : p\in P \}\ \subseteq 2^T
    \]
    be the family of distinct type approval patterns achievable in that round (so $|\mathcal{X}_r|\le 2^\kappa$).
    For $X\subseteq T$, let $\Delta(X)\in\{0,1\}^{\kappa}$ be the indicator vector with
    \[
    (\Delta(X))_j := \mathbf{1}[\theta_j\in X].
    \]
    
    Define $\mathrm{DP}[r,v]$ to be the maximum utilitarian welfare achievable by an outcome on the first $r$ rounds
    such that the resulting \emph{truncated} satisfaction vector equals $v$.
    Initialize
    \[
    \mathrm{DP}[0,(0,\ldots,0)] = 0 \quad\text{and}\quad \mathrm{DP}[0,v] = -\infty \text{ for all other } v.
    \]
    For each $r=1,\ldots,\ell$, each $v\in\{0,\ldots,D\}^{\kappa}$, and each $X\in\mathcal{X}_r$, define
    \[
    v' := \min\{D,\, v+\Delta(X)\}\quad\text{(componentwise)},
    \]
    and update
    \[
    \mathrm{DP}[r,v'] := \max\bigl\{\mathrm{DP}[r,v'], \mathrm{DP}[r-1,v] + w(X)\bigr\}.
    \]
    For reconstruction, for each $r$ and each $X\in\mathcal{X}_r$ fix a representative candidate $p_{r,X}\in P$
    with $X_r(p_{r,X})=X$, and store standard backtracking pointers for maximizing transitions.
    
    Then, we have shown that an outcome $\mathbf{o}$ is EJR-feasible if and only if its truncated vector $v(\mathbf{o})$ satisfies~\eqref{eq:ejr-truncated}.
    Hence the optimal EJR welfare equals
    \[
    \max\Bigl\{\mathrm{DP}[\ell,v] : v\in\{0,1,\ldots,D\}^{\kappa}\ \text{satisfies }\eqref{eq:ejr-truncated}\Bigr\},
    \]
    and an optimal outcome can be recovered by backtracking and outputting the stored representatives $p_{r,X}$.
    
    There are $(D+1)^{\kappa}$ states per DP layer. In each round $r$ we consider at most $|\mathcal{X}_r|\le 2^{\kappa}$
    actions, and each transition updates $\kappa$ coordinates (which is polynomial in the parameter).
    Thus the DP core runs in time
    $\mathcal{O}(\ell\cdot (D+1)^\kappa\cdot 2^\kappa)$ after preprocessing;
    including the computation of the type-pattern families and the demands
    $(d_U)_{U\subseteq T}$, the running time is
    $f(\kappa,D)\cdot \mathrm{poly}(n,m,\ell)$. Therefore, the algorithm
    is FPT with respect to $\kappa+D$.

\subsection{Proof of Proposition~\ref{prop:votertype_pjr}}
    Let the distinct profiles be indexed by $j\in[q]$, and for each $j$ let $L_j$ denote the number of rounds of profile $j$ (so $\sum_{j=1}^q L_j=\ell$). 
    For $U \subseteq T$, let $S_U := \bigcup_{\theta \in U} N_\theta$. For each profile $j$ and type $\theta$, let $A_{\theta,j}\subseteq P$ for the (common) approval set of type $\theta$ in rounds of profile $j$.
    For $U \subseteq T$, we have $\gamma_U = \sum_{j=1}^q L_j \cdot \mathbf{1}[\bigcap_{\theta \in U} A_{\theta,j} \neq \varnothing]$.
        
    For each profile $j\in[q]$ and candidate $p\in P$, define the set of types approving $p$ under profile $j$ by
    \[
    X_j(p)\ :=\ \{\theta\in T : p\in A_{\theta,j}\}\ \subseteq\ T,
    \]
    and let
    \[
    X_j\ :=\ \{X_j(p): p\in P\}\ \subseteq\ 2^T
    \]
    be the family of achievable type sets in profile $j$.
    
    Fix a nonempty voter group $S\subseteq N$ and let $U:=\{\mathrm{type}(i): i\in S\}\subseteq T$ be the set of types present in $S$.
    Because all voters of the same type have identical approval sets in every round, for each $t\in[\ell]$ we have
    \begin{align*}
        \bigcup_{i\in S} s_{i,t}
    & =\bigcup_{\theta\in U} A_{\theta,t}
    =\bigcup_{i\in S_U} s_{i,t}, \text{ and} \\
    \bigcap_{i\in S} s_{i,t} 
    & =\bigcap_{\theta\in U} A_{\theta,t}
    =\bigcap_{i\in S_U} s_{i,t}.
    \end{align*}
    Hence $S$ and $S_U$ agree in exactly the same set of rounds, so the maximum agreement size for $S$ equals $\gamma_U$.    
    Since $\lfloor t|S|/n\rfloor$ is nondecreasing in $t$, it suffices to enforce PJR only for $t=\gamma_U$, the maximum number of rounds in which the group can agree.
    Moreover, since the unions coincide in every round, $\sat_S(\mathbf{o})=\sat_{S_U}(\mathbf{o})$ for every outcome $o$.
    Finally, $|S|\le |S_U|$ implies
    \[
    \left\lfloor \frac{\gamma_U\cdot |S|}{n}\right\rfloor
    \le
    \left\lfloor \frac{\gamma_U\cdot |S_U|}{n}\right\rfloor
    =d_U.
    \]
    Therefore, if an outcome $\mathbf{o}$ satisfies $\sat_{S_U}(\mathbf{o})\ge d_U$, then it satisfies the strongest PJR requirement for $S$, and thus satisfies all PJR constraints for $S$.
    Conversely, if $\mathbf{o}$ satisfies PJR, then applying PJR to the voter group $S_U$ and $t=\gamma_U$ (the number of rounds in which $S_U$ agrees), we get
    \begin{equation*}
        \sat_{S_U}(\mathbf{o})\geq \left\lfloor \gamma_U\cdot |S_U|/n\right\rfloor = d_U.
    \end{equation*}
    Consequently, an outcome is PJR-feasible if and only if it satisfies $\sat_{S_U}(\mathbf{o})\ge d_U$ for every nonempty $U\subseteq T$.
    
    Fix a profile $j\in[q]$ and $X\in X_j$.
    In a round of profile $j$, selecting a candidate approved by exactly the types in $X$ contributes
    \[
    w(X) := \sum_{\theta\in X} n_\theta
    \]
    to utilitarian welfare.
    Moreover, for any $U\subseteq T$ such a round contributes to $\sat_{S_U}(\mathbf{o})$ if and only if $X\cap U\neq\varnothing$.
    
    Introduce an integer variable $y_{j,X}\in \mathbb{Z}_{\ge 0}$ for every profile $j\in[q]$ and every $X\in X_j$, interpreted as the number of rounds of profile $j$ in which we pick a candidate whose approving type-set is exactly $X$.
    Consider the ILP
    \[
    \max \sum_{j=1}^q\ \sum_{X\in X_j} w(X)\, y_{j,X}
    \]
    subject to
    \begin{align*}
    \sum_{X\in X_j} y_{j,X} & =L_j \text{ for all } j\in[q],\\
    \sum_{j=1}^q\ \sum_{\substack{X\in X_j\\ X\cap U\neq\varnothing}} y_{j,X} & \ge\ d_U \text{ for all } U\subseteq T, U \neq \varnothing, \text{ and}\\
    y_{j,X}\in \mathbb{Z}_{\ge 0} & \text{ for all } j\in[q],\ \forall X\in X_j .
    \end{align*}
    
    We now prove correctness.
    Given an outcome $\mathbf{o}$, define $y_{j,X}$ as the number of rounds $t$ of profile $j$ with $X_j(o_t)=X$.
    Then the profile constraints hold by definition.
    For any nonempty $U\subseteq T$, the left-hand side of the second constraint equals the number of rounds in which the chosen candidate is approved by at least one type in $U$, i.e., it equals $\sat_{S_U}(\mathbf{o})$.
    Thus feasibility of the ILP is equivalent to satisfying $\sat_{S_U}(\mathbf{o})\geq d_U$ for all nonempty $U\subseteq T$, which is equivalent to PJR.
    Finally, the objective equals $\util(\mathbf{o})$ because each round counted in $y_{j,X}$ contributes exactly $w(X)$ to welfare.
    
    Conversely, given a feasible integer solution $(y_{j,X})$, for each $j\in[q]$ and each $X\in X_j$ fix an arbitrary representative candidate $p_{j,X}\in P$ with $X_j(p_{j,X})=X$ (which exists by definition of $X_j$).
    Since candidates may be selected multiple times across rounds, we may use the same representative candidate $p_{j,X}$ in all $y_{j,X}$ rounds of profile $j$ without violating feasibility.
    Assign, for each profile $j$, the candidates $\{p_{j,X}\}_{X\in X_j}$ to its $L_j$ rounds so that $p_{j,X}$ is used in exactly $y_{j,X}$ rounds.
    The resulting outcome satisfies all constraints $\sat_{S_U}(\mathbf{o})\ge d_U$, hence is PJR-feasible, and achieves utilitarian welfare equal to the ILP objective.
    
    For each profile $j$, $|X_j|\le 2^\kappa$, so the number of integer variables is at most
    \[
    p := \sum_{j=1}^q |X_j| \le q\cdot 2^\kappa .
    \]
    The ILP has $q$ profile constraints and at most $2^\kappa-1$ nontrivial PJR constraints.
    All coefficients are integers of magnitude at most $n$ or $\ell$, and each $d_U$ can be computed from the profile counts $(L_j)_{j\in[q]}$ and the sets $(A_{\theta,j})_{\theta\in T,\,j\in[q]}$ in $f_0(\kappa,q)\cdot \mathrm{poly}(n,\ell,m)$ time by enumerating $U\subseteq T$ and using
    \[
    \gamma_U\ =\ \sum_{j=1}^q L_j\cdot \mathbf{1}\left[\bigcap_{\theta\in U} A_{\theta,j}\neq\varnothing\right].
    \]
    By Lenstra's result \cite{lenstra1983integer}, the ILP can be solved in time $g(p)\cdot \mathrm{poly}(n,\ell,m)$ for a computable function $g$.
    Since $p\le q\cdot 2^\kappa$, this is of the form $f(\kappa,q)\cdot \mathrm{poly}(n,\ell,m)$, and an optimal outcome can be reconstructed in time polynomial in the ILP size.

\end{document}